\documentclass[a4paper,UKenglish,cleveref, autoref, thm-restate]{lipics-v2021}

\hideLIPIcs

\title{Strong (D)QBF Dependency Schemes via Pure Paths with Applications to Proof Checking} 

\author{Leroy Chew}{Czech Technical University in Prague, Prague, Czech Republic\and \url{https://leroychew.wordpress.com/} }{lchew@ac.tuwien.ac.at}{https://orcid.org/0000-0003-0226-2832}{This project is supported by FWF ESPRIT grant number ESP-197, by the European Union under the project ROBOPROX (reg. no. CZ.02.01.01/00/22\_008/0004590)
	and by the Czech Science Foundation project 24-12759S }
\author{Tom\'a\v{s} Peitl}{TU Wien, Vienna, Austria \and \url{https://ac.tuwien.ac.at/people/peitl}}{peitl@ac.tuwien.ac.at}{https://orcid.org/0000-0001-7799-1568}{}
\authorrunning{Leroy Chew, Tom\'a\v{s} Peitl}
\Copyright{Leroy Nicholas Chew, Tom\'a\v{s} Peitl}
\ccsdesc[100]{Theory of computation~Proof Complexity}

\keywords{DQBF, QBF, Qute, Proof Systems, Dependency Schemes, Dependency Learning, Skolem functions} 

\category{} 

\relatedversion{} 

\supplement{\url{https://doi.org/10.5281/zenodo.20203771}}

\acknowledgements{Thanks to Martina Seidl for discussions on this topic, to Mirek Olsak and Zarathustra Goertzel for help debugging DQRAT-check and to the anonymous reviewers for improvements to the paper.}

\nolinenumbers 

\titlerunning{Strong (D)QBF Dependency Schemes via Pure Paths}

\usepackage{hyperref}
\usepackage{todonotes}

\usepackage{listings}

\usepackage{multicol}
\RequirePackage{amssymb,amsmath}
\RequirePackage{xspace,xcolor}
\usepackage{url}
\usepackage{algpseudocode}
\usepackage{xifthen}

\newlength{\breite}
\usepackage{bm}

\definecolor{violet}{RGB}{138,43,226}
\definecolor{forestgreen}{RGB}{34,139,34}
\definecolor{darkblue}{RGB}{102,0,204}
\definecolor{pink}{RGB}{255,192,20}
\definecolor{gold}{RGB}{255,215,0}

\newcommand{\Math}[1]{\ensuremath{#1}\xspace}

\DeclareMathOperator*{\dom}{\operatorname{dom}}

\newcommand{\ComplexityClassFont}[1]{\mathsf{#1}}

\newcommand{\NP}{\ComplexityClassFont{NP}}
\newcommand{\CoNP}{\ComplexityClassFont{CoNP}}
\newcommand{\PSPACE}{\ComplexityClassFont{PSPACE}}
\newcommand{\NEXP}{\ComplexityClassFont{NEXPTIME}}

\DeclareMathOperator*{\var}{\operatorname{var}}

\RequirePackage{bussproofs/bussproofs}
\newcommand{\ProofSystemsFont}[1]{\mathsf{#1}}
\newcommand{\DefineProofSystem}[2]{\expandafter\def\csname#1\endcsname{\Math{\ComplexityClassFont{#2}}}}

\newcommand{\Frege}[1][]{\Math{\ifthenelse{\isempty{#1}}{\ProofSystemsFont{Frege}}{#1\text{-}\ProofSystemsFont{Frege}}}}
\newcommand{\Res}[1][]{\Math{\ifthenelse{\isempty{#1}}{\ProofSystemsFont{Res}}{#1\text{-}\ProofSystemsFont{Res}}}}
\DefineProofSystem{eFrege}{e\Frege[]}
\newcommand{\drat}{\textsf{DRAT}\xspace}

\newcommand{\Red}{\Math{\boldsymbol{\forall}\kern .04ex\ProofSystemsFont{red}}}
\newcommand{\red}{\Math{\,\raisebox{.2ex}{$\scriptstyle+$}\,\Red}}

\usepackage{mathtools}

\newcommand{\D}[1]{\ensuremath{\mathrm{D}^{\Pi}_{{\scriptstyle #1}}}\xspace}
\newcommand{\Dz}[1]{\ensuremath{\mathcal{D}^{\mbox{\scriptsize \upshape #1}}}\xspace}

\newcommand{\Drrs}{\Dz{rrs}}
\newcommand{\Dstd}{\Dz{std}}
\newcommand{\Dtrv}{\Dz{trv}}

\newcommand{\Dpu}{\Dz{$\forall$pure}}
\newcommand{\pathc}[1]{\ensuremath{\mathfrak{C}^{\mbox{\scriptsize \upshape #1}}}\xspace}
\newcommand{\pathl}[1]{\ensuremath{\mathfrak{L}^{\mbox{\scriptsize \upshape #1}}}\xspace}
\newcommand{\pathcpu}{\pathc{$\forall$pure}}
\newcommand{\pathlpu}{\pathl{$\forall$pure}}
\newcommand{\lsim}[1]{\ensuremath{\rightarrow_{\mbox{\scriptsize \upshape #1}}}}
\newcommand{\rrssim}{\lsim{rrs}}
\newcommand{\nrrssim}{\ensuremath{\not\rightarrow_{\mbox{\scriptsize \upshape rrs}}}}
\newcommand{\pusim}{\ensuremath{\rightarrow_{\mbox{\scriptsize \upshape $\forall$pure}}}}
\newcommand{\npusim}{\ensuremath{\not\rightarrow_{\mbox{\scriptsize \upshape $\forall$pure}}}}
\newcommand{\lplr}{\textsf{Loc}$^{\forall \textrm{pure}}$-Red\xspace}

\DefineProofSystem{Gfull}{G}
\newcommand{\FregeRed}[1][]{\Frege\!\!\red}
\newcommand{\eFregeRed}[1][]{\eFrege\!\!\red}
\newcommand{\qrc}{\textsf{Q-Res}\xspace}

\newcommand{\qurc}{\textsf{QU-Res}\xspace}
\newcommand{\ecalculus}{$\forall$\textsf{Exp+Res}\xspace}
\newcommand{\irc}{\textsf{IR-calc}\xspace}
\newcommand{\irmc}{\textsf{IRM-calc}\xspace}
\newcommand{\idrc}{\textsf{IR(\Drrs)-calc}\xspace}
\newcommand{\lqrc}{\textsf{LD-Q-Res}\xspace}

\newcommand{\lqurc}{\textsf{LQU}\textsf{-\hspace{1pt}Res}\xspace}
\newcommand{\lquprc}{\textsf{LQU}$^+$\textsf{-\hspace{1pt}Res}\xspace}
\newcommand{\qdrc}{\textsf{Q(\Drrs)-Res}\xspace}
\newcommand{\qsrc}{\textsf{Q(\Dstd)-Res}\xspace}
\newcommand{\lqDrc}{\textsf{LD-Q(\ensuremath{\mathcal{D}})-Res}\xspace}
\newcommand{\lqdrc}{\textsf{LD-Q(\Drrs)-Res}\xspace}
\newcommand{\lqdpurc}{\textsf{LD-Q(\Dpu)-Res}\xspace}
\newcommand{\qrat}{\textsf{QRAT}\xspace}
\newcommand{\qrp}{\textsf{QRP}\xspace}

\DeclareMathOperator*{\xor}{\operatorname{xor}}

\def\qparity{\textsc{QParity}}

\def\lqparity{\textup{\textsc{LQParity}}}

\DefineProofSystem{dFrege}{IndExt\Frege[]}
\DefineProofSystem{dRes}{IndExt QU\Res[]}
\newcommand{\dFregeRed}[1][]{\dFrege\!\!\red}

\newcommand{\dqrat}{\textsf{DQRAT}\xspace}
\newcommand{\dqratpu}{\textsf{DQRAT}+\Dpu\xspace}

\newcommand{\OC}[2]{\Math{\ifthenelse{\isempty{#2}}{\mathcal{O}_{#1}}}{{\mathcal{O}_{#1}^{#2}}}}
\newcommand{\Outer}[1]{\mathrm{O}^{\Pi}_{#1}}
\newcommand{\Inner}[1]{\mathrm{I}^{\Pi}_{#1}}
\newcommand{\Kernel}[1]{\mathrm{K}^{\Pi}_{#1}}

\newcommand{\qute}{\texttt{Qute}\xspace}
\newcommand{\qrup}{\texttt{qrp2rup}\xspace}
\newcommand{\depqbf}{\texttt{DepQBF}\xspace}
\newcommand{\qrattrim}{\texttt{QRAT-trim}\xspace}
\newcommand{\dqratchk}{\texttt{DQRAT-check}\xspace}

\tikzstyle{uredge}=[very thick,draw=red!80!green,line cap=round]
\tikzstyle{redge}=[line cap=round,very thick,draw=red!30!green!20!blue
]
\tikzstyle{axiomn}=[rectangle,very thick,draw=black!50,%
top color=white,bottom color=green!50,
minimum height=1.1em,minimum width=.5cm,inner sep=2pt,draw%
]

\tikzstyle{calcn}=[rectangle%
,rounded corners=2mm%
,thick%
,draw=black%
,top color=white,bottom color=black!20,%
minimum height=.5cm,minimum width=.5cm,inner sep=2pt%
]

\usetikzlibrary{calc}

\tikzstyle{infn}=[rectangle,rounded corners=1mm,thick,draw=black!50,%
top color=white,bottom color=black!15,%
minimum height=1.3em,minimum width=.5cm,inner sep=2pt,draw%
]

\tikzstyle{botn}=[rectangle,rounded corners=1mm,very thick,draw=black,%
top color=white,bottom color=red!25,%
minimum height=.5cm,minimum width=.5cm,inner sep=2pt,draw%
]

\tikzstyle{expcalcn}=[rectangle%
,thick%
,draw=green!40!black
,top color=white,bottom color=green!20,%
minimum height=.5cm,minimum width=.5cm,inner sep=2pt%
]

\tikzstyle{algcalcn}=[rectangle%
,thick%
,draw=yellow!40!black
,top color=white,bottom color=yellow!20,%
minimum height=.5cm,minimum width=.5cm,inner sep=2pt%
]

\tikzstyle{strongcalcn}=[rectangle%
,thick%
,draw=blue!40!black
,top color=white,bottom color=blue!20,%
minimum height=.5cm,minimum width=.5cm,inner sep=2pt%
]

\tikzstyle{cdclcalcn}=[rectangle%
,thick%
,draw=magenta!40!black
,top color=white,bottom color=magenta!20,%
minimum height=.5cm,minimum width=.5cm,inner sep=2pt%
]

\tikzstyle{expcalcn}=[rectangle%
,thick%
,draw=green!40!black
,top color=white,bottom color=green!20,%
minimum height=.5cm,minimum width=.5cm,inner sep=2pt%
]

\tikzstyle{strcalcn}=[rectangle%
,thick%
,draw=pink!40!black
,top color=white,bottom color=pink!20,%
minimum height=.5cm,minimum width=.5cm,inner sep=2pt%
]

\newlength{\radius}
\tikzstyle{decision} = [diamond, draw, fill=blue!20, 
text width=4.5em, text badly centered, node distance=3cm, inner sep=0pt]
\tikzstyle{block} = [rectangle, draw, fill=blue!20, 
text width=5em, text centered, rounded corners, minimum height=4em]
\tikzstyle{line} = [draw, -latex']
\tikzstyle{cloud} = [draw, ellipse,fill=red!20, node distance=3cm,
minimum height=2em]

\usepackage[utf8]{inputenc}
\usepackage{pgfplots}
\DeclareUnicodeCharacter{2212}{−}
\usepgfplotslibrary{groupplots,dateplot}
\usetikzlibrary{patterns,shapes.arrows}
\pgfplotsset{compat=newest}

\EventEditors{Alexey Ignatiev and Stefan Szeider}
\EventNoEds{2}
\EventLongTitle{29th International Conference on Theory and Applications of Satisfiability Testing (SAT 2026)}
\EventShortTitle{SAT 2026}
\EventAcronym{SAT}
\EventYear{2026}
\EventDate{July 20--23, 2026}
\EventLocation{Lisbon, Portugal}
\EventLogo{}
\SeriesVolume{377}
\ArticleNo{11}

\begin{document}
	\maketitle
	\begin{abstract}
		Certification for Quantified Boolean Formulas (QBF) and Dependency Quantified Boolean Formulas (DQBF) is an ongoing challenge. 
		Recent proof complexity work has shown that the majority of QBF and DQBF techniques can be p-simulated by using the independent extension rule. 

		In propositional logic, extension rules are supported by proof checkers using a more general RAT (Resolution Asymmetric Tautology) rule.
		The next step in (D)QBF certification would be to update these modern RAT formats to match the strength of this independent extension rule.

		In this paper we first introduce a new dependency scheme called \Dpu.
		This rule is the missing ingredient that when added to Blinkhorn's proof system DQRAT allows it to be provably p-equivalent to the Independent Extended QU-Res, the most powerful of the known QBF and DQBF proof systems. Up until now, DQRAT has only existed in theory, so we implement a prototype checker \texttt{DQRAT-check} which includes our extra rule. 
		
	In addition to its inclusion in our proof checker we show \Dpu has other properties that have been found for previous dependency schemes, and each of these observations has potential in solving/checking including the sound integration into the dependency learning solver \texttt{Qute}. 
	\end{abstract}

\section{Introduction}\label{sec:intro}
Our main motivation comes from the recent introduction of a new proof rule known as independent extension \cite{ChewPeitl25}, which has desirable certification properties. The central observation of this paper is that we can readily simulate independent extension, with only a quick ``fix'' of existing certification rules. The fix is a sound relaxation of the order of quantification and also works with solvers. 
In this paper we explore how, why and to what extent this works, before implementing it in a proof checker.

\subsection{Background}\label{sec:bgd}

The canonical $\NP$-complete problem is propositional satisfiability (SAT), and if we extend it with a prefix of Boolean quantifiers we get the Quantified Boolean Formula (QBF) problem which is then considered the canonical $\PSPACE$-complete problem. We can extend QBF even further into Dependency QBF (DQBF), where Boolean quantification still occurs, but the quantification order is not dictated by a linear prefix. Instead, every existentially quantified variable is given an explicit set of variables that it comes `after'. DQBF is considered a more expressive language than QBF as it is $\NEXP$-complete rather than in $\PSPACE$. 

While research into DQBF is studied for its own sake, we are very interested where DQBFs appear in QBF research, because making changes to the quantification order in a QBF can make a QBF easier to solve or prove. Occasionally such changes require us to shift to a non-linear quantifier prefix. One common example is when solvers and proof systems employ \emph{dependency schemes}, which calculate whether a dependency of one quantified variable to another is necessary or spurious. We can think of this as a transformation from a QBF to a DQBF as the prefix is recalculated to remove spurious dependencies. An example of well-studied dependency is the reflexive resolution path dependency scheme denoted by the symbol \Drrs. This detects that a dependency is spurious if there is no sufficient pair of resolution paths (see Section~\ref{sec:prelim} for the definition of a resolution path) from $\forall$ to $\exists$ literals.

The use of a whole range of transformations from QBF to DQBF, which is often done only implicitly, has made deciding a commonly agreed upon QBF proof format more challenging. Theoretical proof systems such as the sequent calculus \Gfull \cite{KP90}(named after Gentzen who pioneered sequent calculi) assume a neat symmetry between true and false as QBF is closed under negation and $\PSPACE$ is self-complementary, but DQBF is not alike in this way and has an inherent asymmetry between $\exists$ and $\forall$. In addition, theoretical proof systems make poor checking formats in a practical setting because they do not generalise existing practical formats and often proofs with polynomial size upper bounds are still too large in practice. 

One noteworthy QBF format that attempts to tackle this issue is \qrat (Quantified Resolution Asymmetric Tautology)\cite{HSB14}. In addition to generalising the rules of the propositional proof system \drat (Deletion Resolution Asymmetric Tautology), which is the closest system propositional logic has to a standard format, \qrat can detect the same side conditions that a dependency scheme detects. Instead of modifying the quantifier prefix, \qrat modifies a clause. Such a combination of rules ends up being incredibly powerful and \qrat can p-simulate a majority of the solving and preprocessing techniques in QBF, despite the large disparity of these techniques. This is incredibly fortunate overall, but it is not directly possible in \qrat to treat a QBF with a dependency scheme as a DQBF, as we soundly ought to be able to do. In fact, \qrat has been proven on a theoretical level to be no stronger than \Gfull \cite{CH22}, so even its detection of dependency schemes is no more than a sufficient combination of fundamental QBF steps. In an attempt to build \qrat into a genuine DQBF proof system, Blinkhorn proposed \cite{Blinkhorn2020SimulatingDP} the DQBF proof system \dqrat (Dependency Quantified Resolution Asymmetric Tautology) and 
\dqrat can handle dependency schemes directly. However, \dqrat has not been developed since its initial introduction, nor is there a substantial body of proof complexity follow up work that discusses it. 

Nonetheless, work into strong DQBF proof systems may yet be fruitful. 
Recent developments in QBF proof complexity have shown that a new DQBF proof system \dRes (where the power comes from an Independent Extension rule) can p-simulate practically everything in QBF and DQBF including \Gfull, \qrat, \dqrat and some dependency scheme rules such as \Drrs \cite{ChewPeitl25}. \dRes is not yet a practical format, and the proposed next step in the paper by Chew and Peitl introducing \dRes was to find a RAT version which was p-equivalent or more powerful.

In this paper we create a DQRAT proof checker and we add one additional rule to it. We demonstrate theoretically that adding this rule allows it to p-simulate  \dRes.
This rule is simply a new dependency scheme, which we name the pure universal dependency scheme (\Dpu). 

We cover the proof theoretical and proof complexity properties of \Dpu. Many of the properties, that we show, strengthen the case for \Dpu in various practical settings. For example, showing that \Dpu allows for strategy extraction in a resolution proof system shows that \Dpu can soundly be integrated into the solver \qute. 

As we have already emphasised, our major contribution is that \Dpu can strengthen \dqrat to be at least as strong as all other well-studied QBF and DQBF proof systems theoretical and practical. We avoid having to show each of these p-simulations individually, instead we show that \dqratpu and \dRes are p-equivalent, and rely transitively on the p-simulation results on \dRes shown in the paper by Chew and Peitl \cite{ChewPeitl25}.

\subsection{Related Work}\label{sec:rel}
QBF and DQBF proof complexity have been extensively studied, including work on QBF clause learning proof systems systems~\cite{BJ12,BWJ14,Bohm21}, expansion-based proof systems \cite{JM15,BCJ19,BCCM18} and stronger proof systems that go beyond current solving techniques~\cite{HSB14,BBCP20,KP90}.
One specific subtopic, the line of dependency scheme research, has progressed over the last decade.
The \emph{standard dependency scheme} was developed originally by Samer and Szeider~\cite{SamerS09}.
The \emph{reflexive resolution path dependency scheme} was developed by Slivovsky and Szeider \cite{Slivovsky-sat14}.
The \emph{tautology-free dependency scheme} was developed by Beyersdorff, Blinkhorn and Peitl \cite{BBP20}. 
Later the same set of authors developed a framework of an infinite number of  \emph{implication-free dependency schemes} \cite{BBP24}. The schemes progress in difficulty and trend towards conceptual dependency schemes that may not have polynomial-time checkability.

The main motivation of this work was to capture the power of independent extension clauses \cite{ChewPeitl25} for DQBF. Independent extension generalises weaker forms of QBF extension in QBF \cite{Jus07,BBCP20}. Independent extension involves conditioning on assignments, and similar ideas have been employed in other works such as conditional autarkies~\cite{KHB19,KullmannShukla19}.
There are many generalisations that capture extension clauses in other domains. 
Blocked clause elimination and addition considers redundant clauses that can be safely added or removed without changing satisfiability \cite{Kul99blocked}. Resolution asymmetric tautologies (RAT) \cite{HHW13} generalise blocked clauses, and propagation redundancies generalise RAT even further \cite{HKB17}.

This work involves the improvement of \dqrat to \dqratpu. \dqrat \cite{Blinkhorn2020SimulatingDP} is the result of generalising RAT addition rules in DQBF. 
Previously RAT was generalised into QBF through QRAT \cite{HSB14} and QRAT+ \cite{LE18}, these are QBF proof systems, but specifically designed for certification. As an alternative to these formats, the \qrp format~\cite{NPLSB12} can be used to provide more straightforward resolution type proofs, and recent work~\cite{PS25} aims to expand the potential of \qrp proofs.

\section{Preliminaries}\label{sec:prelim}

\subsection{Propositional Logic}
We assume the reader is familiar with propositional logic, literals and conjunctive normal form (CNF). We use the $\bar x$ notation to switch between a negated and non-negated literals and acts as an involution:
$\bar 0=1$ , $\bar 1 = 0$, $\bar x = \neg x$ and $\overline{(\neg x)}= x$. 

For CNF $\phi$, $\var(\phi)$ is the set of variables appearing in $\phi$. A partial assignment $\alpha$ on $\phi$ is a partial function that maps $\var(\phi)$ to $\{0,1\}$. We consider the restriction of $\phi$: $\phi|_\alpha$ to be the copy of $\phi$ where if $\alpha(x)=0$ literal $x$ is replaced by $0$ and literal $\neg x$ is replaced by $1$ and likewise if 
$\alpha(x)=1$ literal $x$ is replaced by $1$ and literal $\neg x$ is replaced by $0$.
We can overload $\alpha$'s functional notation to include literals so $\alpha(\neg l)= \neg \alpha(l) $.
Additionally we can represent a partial assignment as a string of literals, i.e. $x\bar z$ is the partial assignment that maps $x$ to $1$ and $z$ to $0$.

The negation of a clause $C$, denoted by $\bar C$ or $\neg C$ is
a partial assignment, but we can treat it syntactically as a CNF containing only the unit clauses of each of $C$'s literals but negated. 
Given the equivalence between partial assignments, negated clauses and sets of unit clauses,  
we can simplify a CNF $\phi$ that contains unit clauses, by applying the literal of a unit clause as an assignment, removing satisfied clauses and removing falsified literals. This process  can be repeated until no unit clauses remain and we defined this end point as \texttt{UP}$(\phi)$.

\begin{definition}
	We say $\phi\vdash_1 \bot$ if the resulting CNF of \texttt{UP}$(\phi)$ contains the empty clause.
\end{definition}

We can soundly derive clause $C$ into a CNF $\phi$ if $\phi\wedge \bar C$ is a contradiction. Checking for a contradiction is $\CoNP$-complete, so we use the weaker condition    $\phi\wedge \bar C \vdash_1 \bot$. This is often known as reverse unit propagation, here we call it ATA (asymmetric tautology addition).

\subsection{Quantified Boolean Formulas}
A quantified Boolean formula in closed prenex conjunctive form $\Pi \phi$, contains a CNF $\phi$ and quantifier prefix $\Pi$.
$\Pi$ is a sequence of pairs containing a quantifier symbol from $(\forall, \exists)$ and a propositional variable, i.e. $\Pi= \mathcal{Q}_1 x_1 \dots \mathcal{Q}_k x_k$ where $\mathcal{Q}_i\in \{\forall, \exists\}$ for $1 \leq i \leq k$. $\var(\Pi)$ is the set of variables appearing in prefix $\Pi$. 
Every variable in $\var(\phi)$ occurs exactly once in the prefix $\Pi$ (i.e. $\var(\phi)\subseteq \var(\Pi)$, we only consider closed sentences).

$\var_\exists(\Pi)$ is the set of existential variables (variables bound by $\exists$) appearing in prefix $\Pi$ and $\var_\forall(\Pi)$ is the set of universal variables (variables bound by $\forall$) appearing in prefix $\Pi$.
For assignment $\alpha$, $\Pi\restriction_{\alpha}$ removes all variables from $\dom(\alpha)$ and their attached quantification from the prefix. 
The prefix order $\lesssim_\Pi$ is a total pre-ordering of all variables in the prefix. $ x \lesssim_\Pi y$ if and only if $x$ appears left of $y$, or $x$ and $y$ are in the same uninterrupted contiguous block of variables with the same quantifier symbol. 

Because we are working with closed prenex QBFs, the semantics can  be defined using Skolem functions. We say that an existential variable $x$ depends on $u$ if $u$ is quantified left of $x$ in the prefix. We call $(u,x)$ a \emph{dependency pair}. In this way we can build a \emph{dependency set} $\D{x}$ for each existentially quantified variable $x$ containing exactly the universal variables that are left ($\lesssim_\Pi$) of $x$. A \emph{Skolem function} for an existential variable $x$ is a Boolean function $f_x:\{0,1\}^{\D{x}} \rightarrow \{0,1\}$. A QBF is true if and only if there is a set of Skolem functions for each existential variable, such that the Skolem functions together would satisfy the propositional matrix under all complete assignments to $\var_{\forall}(\Pi)$. 

You can imagine a QBF as a game between $\exists$ and $\forall$ in which they set the values of their variables from left to right in the prefix. $\exists$ wins if and only if the matrix evaluates to $1$ at the end of the game. The QBF is true if and only if $\exists$ has a winning strategy. Essentially this is no different to the idea of a set of satisfying Skolem functions.
This works dually for $\forall$ and falsity and we name the dual of Skolem functions as Herbrand functions. 

If we have a QBF $\Pi \phi$ and $\phi$ contains a clause $C\vee u$ where $\var(u)$ is universal and not contained in the dependency set of any of the variables of $C$ nor is there any $\bar u$ literal in $C$, then we can derive the clause $C$. This is because any winning $\exists$ strategy that satisfies $C\vee u$ will do so before arriving at $u$, hence $C$ must be satisfied by the same winning strategy.
\begin{prooftree}
	\AxiomC{$C \vee u$}
	\RightLabel{(UR)}
	\UnaryInfC{$C$ }
\end{prooftree}

One of the earliest observations about QBF was that if we combine the universal reduction rule with resolution, even when the resolution rule is restricted to cutting over existential literals only, then we get a complete refutational proof system for QBF known as \qrc \cite{KBKF95}.

\subsection{Dependency Quantified Boolean Formulas}
Dependency quantified Boolean formulas (DQBF) extend the language of QBF. We will concentrate on the S-form DQBFs (Skolem-form).
Like a QBF, every DQBF has two parts, a prefix and propositional matrix in CNF.
Also like a QBF, in a DQBF every variable is either existentially or universally quantified in said prefix.
In an S-form DQBF, each existential variable $x$ has an arbitrary dependency set $\D{x}$, the only restriction is that $\D{x}$ is a subset of the universal variables in the prefix $\Pi$.
The prefix $\Pi$ is written as $\Pi= \forall u_1 \dots u_p \exists {x_1}(D_{x_1}) \dots {x_q}(D_{x_q})$. Typically we omit the parentheses of the dependency set and list the elements.
In a DQBF prefix the written ordering does not determine anything, because the essential information is all contained in the specifications of the dependency sets. Nonetheless any QBF can be represented by a DQBF by specifying the QBF's dependency sets explicitly.  

A DQBF $\Phi$ is true if and only if there is a set $f = \{ f_x : x \in \var_\exists(\Phi) \}$ of Skolem functions, one for each $\exists$ variable $x$, such that for every complete assignment $\beta$ to all the universal variables the universal assignment completed with the values of the Skolem functions under that assignment, written as $\beta \cup f(\beta)$, form a satisfying assignment to the propositional matrix.
For example $$\forall u \forall v \exists x(u) \exists y(v) (v \vee x)\wedge (\bar v\vee \bar x)\wedge (u \vee y)\wedge (\bar u \vee \bar y)$$ is false because the $x$ Skolem function must satisfy the first clause when $v$ is false and the second when $v$ is true, but can only respond to $u$ and not $v$. But $$\forall u \forall v \exists x(v) \exists y(u) (v \vee x)\wedge (\bar v\vee \bar x)\wedge (u \vee y)\wedge (\bar u \vee \bar y)$$ is true because the $x$ Skolem function can be $\bar v$ and the $y$ Skolem function can be $\bar u$.
We call such a set of Skolem functions a \emph{model}.

We sometimes informally write $\forall U \exists E \phi$ for an arbitrary S-form DQBF, where $U$ is the set of universal variables, $E$ the set of existential variables each with their own dependency set that we may hide for the time being, and $\phi$ a propositional matrix containing no quantifiers.
We can define a subprefix and the relation $\Pi \subseteq\Omega$, whenever $\var_\exists(\Pi)\subseteq\var_\exists(\Omega)$, $\var_\forall(\Pi)\subseteq\var_\forall(\Omega)$ and if  $x\in \var_\exists(\Pi)$ then $\D{x} = \mathrm{D}_{x}^\Omega$.

\subsubsection{Pre-ordering a DQBF}\label{sec:outer}
Recall that a QBF prefix $\Pi$ has a total pre-ordering ($\lesssim_\Pi$) whose equivalence classes are quantifier blocks.
Instead let $\Pi$ be a DQBF prefix. It turns out we can still define a pre-ordering on  $\Pi$, but it may not be total.  This allows QRAT to be generalised to the original definition DQRAT, which is the base system we aim to improve. For readers purely interested in the new dependency rule we introduce later, understanding of this pre-ordering is not needed, but it is important for the wider topic of DQBF certification. 

First we overload the $\D{}$ notation. For universal variables $u$ we say $\D{u}=\{u\}$.
For formulas $\phi$  (including partial assignments) we consider the dependency set $\D{\phi}$ to be $\bigcup_{x\in \var(\phi)} \D{x}$. 

We define the set of outer variables for each variable in $\Pi$. For an existential variable $x$ in $\Pi$, the outer variables
are the set of variables that can be calculated from the Skolem functions using only the values for the dependency set. The outer variable set $\Outer{x}:= \{y\in \var(\Pi) \mid \D{y}\subseteq\D{x} \}$.
For a universal variable $u$ to find its outer variables, we look at its inner existential variables, those existential variables that contain $u$ in its dependency set i.e. $\Inner{u}:=\{y\in \var(\Pi) \mid \D{u}\subseteq\D{y}\}$, and we include any universal variable that is common in all of these, we also include any existential variable that depends only on these common universal variables, though we make an exception and exclude it, if it depends on $u$, in order to keep the notion the $\forall$ variable comes before the $\exists$ variables that depend on it. $\Outer{u}= \{u\}\cup  \bigcap_{x\in \Inner{u}} \{y \mid \D{y} \subseteq \D{x}\setminus\{u\}\}$.
With a DQBF prefix $\Pi$ we can define the relation $\lesssim_\Pi$, so $x \lesssim_\Pi y$ means $x\in \Outer{y}$, or equivalently that $\Outer{x}\subseteq \Outer{y}$.

\begin{example}
	Consider the prefix:
	$\forall u, v, w \exists a(u,v,w), b(), c(u,v), d(u,w), e(u)$.
	We gain the following sets of outer variables (Figure~\ref{fig:outer}):
	\begin{figure}[h]
		\centering
		\begin{tikzpicture}
			\node(a) at (0,5){$\Outer{a}=\{u,v,w,a,b,c,d,e\}$};
			\node(c) at (-2,4){$\Outer{c}=\{u,v,b,c,e\}$};
			\node(d) at (2,4){$\Outer{d}=\{u,w,b,d,e\}$};
			\node(v) at (-2,3){$\Outer{v}=\{u,v,b,e\}$};
			\node(w) at (2,3){$\Outer{w}=\{u,w,b,e\}$};
			\node(e) at (0,2){$\Outer{e}=\{u,b,e\}$};
			\node(u) at (0,1){$\Outer{u}=\{u,b\}$};
			\node(b) at (0,0){$\Outer{b}=\{b\}$};
			\draw(b)--(u)--(e);
			\draw(e)--(v)--(c)--(a);
			\draw(e)--(w)--(d)--(a);
		\end{tikzpicture}
		\caption{Hasse diagram of equivalence classes of outer variables of an example DQBF prefix.\label{fig:outer}}
	\end{figure}
\end{example}


\subsubsection{Dependency Schemes}\label{sec:depscheme}

Dependency schemes are a method of evaluating individual dependency pairs to see if they are really necessary.  
Let $\Pi \phi$ be a DQBF (or QBF). 
We consider pairs $(u,x)$, with $u$ a universal variable and $x$ an existential variable.  \emph{Spurious} dependencies occur when $u\in \D{x}$ but it does not change the truth of the DQBF to remove $u$ from  $\D{x}$. Given a DQBF or QBF $\Pi\phi$ the trivial dependency scheme $\Dtrv(\Pi\phi)$ is the set of all pairs $(u,x)$ where $u$ is universal, $x$ is existential and $u\in \D{x}$.

The most successful dependency schemes analyse resolution paths between clauses.
We can represent a CNF as a multi graph where nodes are clauses. There is an edge from $C_1$ to $C_2$ whenever there is a variable $p$  such that $p$ is in one clause and $\neg p$ is in the other. We annotate the edge with variable $p$. 
A \emph{resolution path} is a sequence of edges $\{e_i \mid 1\leq i\leq k\}$ that form a path such that no two consecutive edges are annotated with the same variable. We can list a path using the clauses, and the literals in-between.
If there is no resolution path between two clauses $D$ and $E$ for a CNF $\phi$, then for any resolution refutation of $\phi$ both $D$ and $E$ cannot occur in the connected part of the proof. 
If we want to analyse variables, particularly ordered variables like in QBF, it is useful to restrict a resolution path to a particular subset of variables $\mathcal{S}$. So that  $\{e_i \mid 1\leq i\leq k\}$ is a resolution path if all edges are annotated with a variable from $\mathcal{S}$ and all consecutive edges coincide on a clause but have different annotations.

One of the most well used non-trivial dependency schemes is the \emph{reflexive resolution path dependency scheme}. To compute which dependencies actually are required, which we denote as $(u,x)\in \Drrs(\Pi \phi)$, we use the notion of a resolution path. 
Let $\psi$ be a DQBF, $\chi$ a subset of the clauses in $\psi$ and $\mathcal{S}$ a subset of variables appearing in $\psi$. We define using a fixpoint a set of clauses $\pathc{rrs}(\psi, \chi, \mathcal{S})$ as well as a set of literals $\pathl{rrs} (\psi, \chi, \mathcal{S})$.
Initially, $\pathc{rrs}(\psi, \chi, \mathcal{S})$ contains all clauses from $\chi$ and $\pathl{rrs} (\psi, \chi, \mathcal{S})$ contains all $\mathcal{S}$-literals in $\chi$.
We expand these sets in the following way:
suppose $p\in \pathl{rrs} (\psi, \chi, \mathcal{S})$, we include any $\psi$ clause $E$ with $\bar p\in E$ to $\pathc{rrs}(\psi, \chi, \mathcal{S})$ and include the literals $\{x\in E \mid x\neq \bar p, x \in \mathcal{S}\}$ to $\pathl{rrs} (\psi, \chi, \mathcal{S})$. $\bar p$'s non-inclusion is why we cannot just list the clauses.

We define $\phi_u$ to be the set of clauses that contain $u$, and $\phi_{\bar u}$ to be those that contain $\bar u$, (assume no clauses are tautological). Let $\mathcal{S}_u$ be the set of existential variables that contain $u$ in its dependency set. 
We say $u$ has a reflexive resolution path to literal $x$ (denoted $u\rrssim x$) if $x\in \pathl{rrs} (\Pi \phi, \phi_u, \mathcal{S}_u)$, likewise $\bar u\rrssim x$ if $x\in \pathl{rrs} (\Pi \phi, \phi_{\bar u}, \mathcal{S}_u)$.
The purpose of $u\rrssim x$ is to identify that the existential player may be required  to satisfy an $x$ literal while dealing with a falsified $u$ literal further back along the reflexive resolution path.

For the reflexive resolution path dependency scheme,
 $(u,x)\in \Drrs(\Pi \phi)$  if $u\in \D{x}$ and either:
 $u \rrssim x$ and $\bar u \rrssim \bar x$, or $\bar u \rrssim x$ and $u \rrssim \bar x$.

We sometimes omit $\Pi \phi$ when the DQBF is clear i.e. $(u,x)\in \Drrs$. The reader should interpret membership  $(u,x)\in \Drrs$ as dependence, and $(u,x)\notin \Drrs$ as independence.

\subsubsection{DQBF Proof Systems}
While \qrc has been proven incomplete for DQBFs it is still sound \cite{balabanov2014henkin}, so we can use parts of \qrc in a refutation. 
In order to make it complete, so that every false DQBF has a refutation leading to the empty clause,  Chew and Peitl \cite{ChewPeitl25} discovered an extension rule that created new variables with minimal dependency sets. 
This powerful calculus is known as \dRes and is given in Figure~\ref{fig:dres}.

\begin{figure}[ht]
	\framebox{\parbox{0.95\textwidth}{

			\begin{prooftree}
				\AxiomC{}
				\RightLabel{(Ax)}
				\UnaryInfC{$L$}
				\DisplayProof\hspace{1cm}
				\AxiomC{$C \vee u$}
				\RightLabel{(Red)}
				\UnaryInfC{$C$}
				\DisplayProof\hspace{1cm}
				\AxiomC{$E \vee \neg x$}
				\AxiomC{$F \vee x$}
				\RightLabel{(Res)}
				\BinaryInfC{$E \vee F$}
			\end{prooftree}
			
			$L$ is a clause in the propositional matrix $\phi$. 
			$u$ is a $\forall$ literal. There is no $\exists$ literal $l$ in $C$ such that $\var(u)\in \D{\var(l)}$, and there is no $\bar u\in C$.
			
			\begin{prooftree}
				\AxiomC{}
				\RightLabel{(IndExt)}
				\UnaryInfC{$(\bar\alpha \vee v \vee  y_1),(\bar\alpha \vee v \vee   y_2),(\bar\alpha \vee  \bar v \vee \bar y_1 \vee \bar y_2)  $}
			\end{prooftree}
			$v$ is a fresh $\exists$ variable, $\alpha$ is a conjunction of $\forall$ literals. $\D{v}= (\D{y_1} \cup \D{y_2} ) \setminus \D{\alpha}$.
			\medskip\\
			As an additional rule, 
			the prefix $\Pi$ may be weakened to $\Pi'$ to add a new variable.
	}}
	\caption{Proof rules of \dRes \label{fig:dres}.}
\end{figure}

\begin{example}\label{ex:dres proof}
		Consider the following DQBF from~\cite[Theorem~7]{balabanov2014henkin}:
	\begin{align*}
		\forall u \forall v  \; \exists x(u) \exists y(v)\;
		(u \vee x \vee y)\; \wedge \; (\bar u  \vee \bar v\vee  x \vee y) \; \wedge \;
		(\bar u  \vee  v\vee  x \vee \bar y)\\ 
		\wedge \; (u \vee \bar x \vee \bar y)\; \wedge \; (\bar u  \vee \bar v\vee  \bar x \vee \bar y) \; \wedge \;
		(\bar u  \vee  v\vee  \bar x \vee y) 
	\end{align*}
	
	We can refute it by first adding the three clauses:
	$(\bar u \vee n \vee  x),(\bar u \vee n \vee   x),( \bar u \vee  \bar n \vee \bar x \vee \bar x)  $ using the IndExt rule, although since we are only using one argument $x$, two of these clauses are identical and one can be simplified with idempotence.
	We can interpret the clauses as conditionally defining $n$,  ($u\rightarrow (n=\bar x)$), because $n$ is defined only when $u$ is true, it can without penalty assume any value when $u$ is false. This allows \dRes to soundly let $n$ not depend on $u$ despite $x$ doing so, therefore $\D{n}=\{\}$.
	
	We can resolve $(\bar u \vee n \vee   x)$ with two clauses to get $(\bar u  \vee \bar v\vee  n \vee \bar y)$ and $
	(\bar u  \vee  v\vee  n \vee y) $ which reduce to $( \bar v\vee  n \vee \bar y)$ and $
	( v\vee  n \vee y) $. 
	Likewise, we can resolve our other extension clause $( \bar u \vee  \bar n \vee  \bar x)$ with two axiom clauses to get $( \bar u \vee \bar v\vee  \bar n \vee y)$ and $(\bar u \vee v\vee  \bar n \vee \bar y)$ which become $( \bar v\vee  \bar n \vee y)$ and $( v\vee  \bar n \vee \bar y)$ after reduction.
	We show the remaining derivation as a DAG to the empty clause in Figure~\ref{fig:ex_dres}. Note that the rules of \qrc have been shown to be inadequate to derive a contradiction in these formulas\cite{balabanov2014henkin}. 

\begin{figure}
	\centering{
	\begin{tikzpicture}[xscale=1.0,yscale=0.8]

		\node[infn](r1) at (-3,3){$  v\vee  \bar n \vee \bar y$};
		\node[infn](r2) at (-5, 3){$ \bar v\vee  \bar n \vee  y$};
		\node[infn](r3) at (3, 3){$ v\vee  n \vee y$};
		\node[infn](r4) at (5, 3){$ \bar    v\vee  n \vee \bar y$};
		\node[axiomn](ax5) at (1,3){$u \vee x \vee y$};
		\node[axiomn](ax6) at (-1,3){$u \vee \bar x \vee \bar y$};

		\node[infn](e1) at (-1.5,2){$ u\vee v\vee  \bar n \vee x$};
		\node[infn](e2) at (-4, 2){$ u\vee \bar v\vee  \bar n \vee  \bar x$};
		\node[infn](e3) at (1.5, 2){$ u\vee  v\vee  n \vee \bar x$};
		\node[infn](e4) at (4, 2){$ u\vee    \bar v\vee  n \vee x$};
		
		\draw[red](ax5)--(e1)--(r1);
		\draw[red](ax6)--(e2)--(r2);
		\draw[red](ax6)--(e3)--(r3);
		\draw[red](ax5)--(e4)--(r4);
		
		\node[infn](f1) at (-1.5,1){$ u\vee  \bar n \vee x$};
		\node[infn](f2) at (-4, 1){$ u\vee  \bar n \vee  \bar x$};
		\node[infn](f3) at (1.5, 1){$ u\vee   n \vee \bar x$};
		\node[infn](f4) at (4, 1){$ u\vee     n \vee x$};
		
		\draw[blue](e1)--(f1);
		\draw[blue](e2)--(f2);
		\draw[blue](e3)--(f3);
		\draw[blue](e4)--(f4);
		\node [infn](g1) at (-2.5,0){$ u\vee     \bar n$};
		\node [infn](g2) at (2.5,0){$ u\vee     n$};
		\draw[red](f1)--(g1)--(f2);
		\draw[red](f3)--(g2)--(f4);
		\node[infn](h1) at (-2.5,-1){$\bar n$};
		\node[infn](h2) at (2.5,-1){$n$};
		\draw[blue](g1)--(h1);
		\draw[blue](g2)--(h2);
		\node[infn](end) at (0,-2){$\bot$};
		\draw[red](h1)--(end)--(h2);
	\end{tikzpicture}
	\caption{Proof DAG of the final \textcolor{red}{resolution} and \textcolor{blue}{reduction} steps in Example~\ref{ex:dres proof}\label{fig:ex_dres} and Example~\ref{ex:dqrat}.}
}
\end{figure}
\end{example}

In reality, finding the optimal uses of the IndExt rule requires a high degree of non-determinism. 
\dRes is a DQBF generalisation of Extended Resolution in propositional logic, which similarly allows the addition of extension clauses as a rule. In propositional logic most checkers have progressed from an extension clause checker to a RAT addition checker, so we are interested in the one RAT based DQBF proof system: \dqrat. 

\dqrat is a combination of rules that have complicated checking criteria (Figure~\ref{fig:dqrat}). The attempt here is to generalise what was practically checked in propositional proofs, but we have to bring in both the DQBF notion of outer clauses 
from Section~\ref{sec:outer} 
and the reflexive resolution path from Section~\ref{sec:depscheme}.

\begin{figure}[h]
	\framebox{\parbox{0.95\textwidth}{
			In all rules, let $\Pi, \Omega$ be DQBF prefixes, $\phi$ be a CNF, $C$ be a clause and $l$ be a literal with $\var(\phi), \var(C), \var(l)$ assumed to be subsets of $\var(\Pi)$.
			
			\begin{prooftree}
				\AxiomC{$\Pi \phi$}
				\RightLabel{(ATA)}
				\UnaryInfC{$\Pi \phi \wedge  C$}
				\DisplayProof\hspace{1cm}
				\AxiomC{$\Pi \phi\wedge  C$}
				\RightLabel{(Del)}
				\UnaryInfC{$\Pi \phi $}
				\DisplayProof\hspace{1cm}
				\AxiomC{$\Pi \phi\wedge  (C\vee l)$}
				\RightLabel{(UR)}
				\UnaryInfC{$\Pi \phi \wedge  C$}
			\end{prooftree}
			\textbf{ATA}: $\phi\wedge \bar C \vdash_1 \bot$ is required. \qquad \qquad\quad\textbf{Del}: there is no side condition on $C$. \\ \textbf{UR}: we require that $l$ is universal and $\var(l) \notin \D{C}$. 
			
			\begin{prooftree}
				\AxiomC{$\Pi \phi$}
				\RightLabel{(DQRAT$_\exists$)}
				\UnaryInfC{$\Pi \phi \wedge  (C\vee l)$}
				\DisplayProof\hspace{1cm}
				\AxiomC{$\Pi \phi\wedge  (C\vee l)$}
				\RightLabel{(DQRAT$_\forall$)}
				\UnaryInfC{$\Pi \phi \wedge  C$}
			\end{prooftree}
			
			\textbf{DQRAT}: 
			$l$ is existential in DQRAT$_\exists$, universal in DQRAT$_\forall$.
			For all clauses $D$ in $\phi$ with $\bar l\in D$, the following must hold: \\
			$\bm{(\exists)}$: $\phi\wedge \neg C \wedge \bar l \wedge\bigwedge_{x \in D, x \neq \bar l}^{\var(x)\lesssim_\Pi \var(l)} \bar x \vdash_1 \bot$.\;\quad 
			$\bm{(\forall)}$: $\phi\wedge \neg C \wedge l \wedge\bigwedge_{x \in D, x \neq \bar l}^{\var(x)\lesssim_\Pi \var(l)} \bar x \vdash_1 \bot$.
			\begin{prooftree}
				\AxiomC{$\Pi \phi$}
				\RightLabel{(BPM)}
				\UnaryInfC{$\Omega \phi$}
				\DisplayProof\hspace{1cm}
				\AxiomC{$\Pi \phi$}
				\RightLabel{(\Drrs)}
				\UnaryInfC{$\Omega \phi$}
			\end{prooftree}
			
		\textbf{BPM}: $\Pi \subseteq \Omega$. \\
		$\bm{\mathcal{D}}^{\textbf{rrs}}$: 	$\var_\exists(\Pi)=\var_\exists(\Omega)$, $\var_\forall(\Pi)=\var_\forall(\Omega)$, $u\notin \mathrm{D}_{x}^\Omega$ only if $(u,x) \notin \Drrs(\Pi\phi)$.
	}}
\caption{Proof rules of \dqrat\cite{Blinkhorn2020SimulatingDP}.\label{fig:dqrat}}
\end{figure}

\begin{example}\label{ex:dqrat}
	Once again we should look at the DQBF from Example~\ref{ex:dres proof}. We will first use BPM (Basic Prefix Modification) to create a new $\exists$ variable $n$ such that $\D{n}=\{u\}$. In terms of outer variables, $n$ is in the same equivalence class as $x$ and has outer variables $\{u,x,n\}$. 
	We can trivially add clause $ (n \vee x)$ via DQRAT$_\exists$ as no clause contains $\bar n$. We can then add $ (\bar n \vee \bar x)$ via DQRAT$_\exists$. This time it has to check against one clause with unit propagation but an easy contradiction between $x$ and $\bar x $ exists on the left hand side so this is immediate.
	 
	Now we add the 4 clauses $(\bar u  \vee \bar v\vee  n \vee \bar y)$, $
	(\bar u  \vee  v\vee  n \vee y)$, $( \bar u \vee \bar v\vee  \bar n \vee y)$ and $(\bar u \vee v\vee  \bar n \vee \bar y)$ which all can be done via ATA. 
	The next part is important because it diverges from the proof in \dRes. We need to delete the clauses $ (n \vee x)$,  $(\bar n \vee \bar x)$. Only then can we apply the \Drrs rule. 
	Since $u\nrrssim n$ and $u \nrrssim \bar n$, we can change the dependency set of $n$ from $\{u\}$ to the empty set. After this point we can reduce the 4 clauses we added via ATA to get $(\bar v\vee  n \vee \bar y)$, $
	(  v\vee  n \vee y)$, $(  \bar v\vee  \bar n \vee y)$ and $(v\vee  \bar n \vee \bar y)$. Since ATA can be used to add instances of the resolution rules we can proceed to follow the same proof as in the end of the \dRes proof (Figure~\ref{fig:ex_dres}). 
	A machine readable proof of this formula is available in Example~\ref{ex:dqratcheck} and also at  
\url{https://ac.tuwien.ac.at/files/peitl/sat2026/}.

\end{example}

\dRes and \dqrat are DQBF proof systems and can be compared via p-simulation. We say a proof system $f$ \emph{p-simulates} a proof system $g$ if there is a polynomial time procedure that maps $g$ proofs to $f$ proofs of the same theorem. A p-simulation is impossible if there is a separating family of theorems, whose minimum proof size in $f$ is bounded below by a super-polynomial function in the minimum proof size in $g$.
\dRes p-simulates \dqrat, but the converse is an open problem.

\section{A New Dependency Scheme}
In this section, we define an improvement on \Drrs, which we  call the \emph{pure universal dependency scheme}. 
The idea is that we can exclude a resolution path from $u$ to $x$ if it is merely an extension of a resolution path from $\bar u $ to $x$, as only the minimal path should be considered.

\subsection{Definition of the Pure Universal Dependency Scheme}

For a DQBF $\Pi\phi$, we will denote membership $(u,x)\in \Dpu(\Pi\phi)$ to mean that $\exists$-variable $x$ really does depend on $\forall$ variable $u$ after considering the pure universal dependency scheme.
\begin{definition}
Let $\psi$ be a DQBF, $\chi$ a subset of the clauses in $\psi$, $\mathcal{S}$ a subset of variables appearing in $\psi$ and $u$ a universal literal.
We define the sets $\pathcpu(u,\psi, \chi, \mathcal{S})$ and  $\pathlpu(u,\psi, \chi, \mathcal{S})$ as follows.
Initially $\pathcpu(u,\psi, \chi, \mathcal{S})$ contains all clauses from $\chi$ and $\pathlpu (u, \psi, \chi, \mathcal{S})$ contains all $\mathcal{S}$-literals in $\chi$.
We expand these sets in a similar way as before, suppose $p\in \pathlpu (u,\psi, \chi, \mathcal{S})$ we include any $\psi$ clause $E$ with $\bar p\in E$ \textbf{and crucially} $\bar u \notin E$ into $\pathcpu(u,\psi, \chi, \mathcal{S})$, and also include the literals $\{x\in E \mid x\neq \bar p, x \in \mathcal{S}\}$ into
$\pathlpu (u,\psi, \chi, \mathcal{S})$, and as we increase this set we can further propagate until we reach fix-point.
\end{definition}

\begin{definition}
	\label{def:upure-connection}
Given a DQBF $\Pi\phi$ with CNF matrix $\phi$.
We say there is a pure path from universal literal $u$ to existential literal $x$ (denoted $u\pusim x$) if and only if $x\in \pathlpu (u,\psi, \phi_u, \mathcal{S}_u)$.
Where $\phi_u$ is the set of clauses that contain literal $u$ and  $\mathcal{S}_u$ is the set of $\exists$ variables that contain $u$ in its dependency set. 
\end{definition}

\begin{definition}
	\label{def:upure-non-dependency}
	Given a DQBF $\Pi\phi$ with CNF matrix $\phi$, let $(u,x)$ be a pair with a $\forall$ variable $u$ and $\exists$ variable $x$. 
	$(u,x)\in \Dpu(\Pi \phi)$ ($\in\Dpu$ when unambiguous) if and only if $u\in \D{x}$ and either:
	\textbf{1.} $u \pusim x$ and $\bar u \pusim \bar x$, or \textbf{2.} $\bar u \pusim x$ and $u \pusim \bar x$.

\end{definition}

\begin{example}\label{ex:dpu}
	Consider the following QBF:
	\begin{align*}
		\forall u \forall v  \exists x \exists y \exists z\;
		(u \vee v \vee y)\; \wedge \; (u  \vee \bar v\vee  x \vee \bar y) \; \wedge \;
		(\bar u \vee v \vee z)\; \wedge \; (\bar u  \vee \bar v\vee  \bar x \vee \bar z).
	\end{align*}
	We can first make some observations that are true for both \Drrs and for \Dpu. $(v,x)\notin \Dpu$  and $(v,x)\notin \Drrs$  as $v\nrrssim x$ nor $v\nrrssim \bar x$,
	which means $v\npusim x$ nor $v\npusim \bar x$, it is always the case that $(v,x)\notin \Drrs$ entails $(v,x)\notin \Dpu$.
	However	$(v,y), (v,z)\in \Dpu$ as we can make immediate paths, likewise for $(u,x)\in \Dpu$.
	
	Where \Dpu differs from \Drrs can be seen for $(u,y)$ and $(u,z)$. $(u,y)$ is in \Drrs because $u\rrssim y$ and $\bar u \rrssim \bar y$ (through $\bar x$, first). However $\bar u \npusim \bar y$ as the path from $\bar u$ cannot use $(u  \vee \bar v\vee  x \vee \bar y)$ as it contains a positive $u$. We observe a similar situation for $(u,z)$ where $u\rrssim \bar z$ but as the path must go through $(\bar u  \vee \bar v\vee  \bar x \vee \bar z)$, $u\npusim \bar z$.

\end{example}

\subsection{The \texorpdfstring{\Dpu}{Dpure} Prefix Modification Rule}

Given a universal literal $u$, we can calculate the set of literals $ \pathlpu (u,\psi, \chi, \mathcal{S}_u)$ in linear time in the total number of individual literals appearing in $\psi$.
Once we have found the variables that lack sufficient paths for $\Dpu$, we can subtract $u$ from their dependency sets, and this will not affect the process if we repeat it for any different universal literals. 
Therefore it takes polynomial time in $\psi$ to completely recalculate the prefix according to $\Dpu$ and can be considered a polynomial-time checkable proof system. We prove it sound in this section.

\begin{example}
	We can take the QBF from Example~\ref{ex:dpu} and modify its prefix according to \Dpu to get the following DQBF:
	\begin{align*}
		\forall u \forall v \;  \exists x(u) \exists y(v) \exists z(v)\;
		(u \vee v \vee y)\; \wedge \; (u  \vee \bar v\vee  x \vee \bar y) \; \wedge \;
		(\bar u \vee v \vee z)\; \wedge \; (\bar u  \vee \bar v\vee  \bar x \vee \bar z). 
	\end{align*}
\end{example}

For the proof of Theorem~\ref{thm:dpu-soundness} recall that for a total universal assignment $\beta$ and a set of Skolem functions $f$ we define the completed assignment $\beta \cup f(\beta)$ by
\[ \bigl[ \beta \cup f(\beta) \bigr] (x) = \begin{cases}
	\beta(x) & x \in \var_\forall, \\
	f_x(\beta|_{\D{x}}) & x \in \var_\exists,
\end{cases}
\]
in other words, the assignment that is $\beta$ on universal variables and computes a value according to the respective Skolem function for existential variables.
\newcommand{\mpool}{\mathcal{M}}
\begin{theorem}[Soundness]
	\label{thm:dpu-soundness}
	Let $\Pi\phi$ be a true DQBF.
	Let $\Pi'$ be the prefix where 
	$\var_\exists(\Pi)=\var_\exists({\Pi'})$, $\var_\forall(\Pi)=\var_\forall({\Pi'})$, $u\in \mathrm{D}_{x}^{\Pi'}$ if and only if $(u,x) \in \Dpu(\Pi\phi)$. Then 
	 $\Pi'\phi$ is true.
\end{theorem}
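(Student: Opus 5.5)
We construct a model for $\Pi'\phi$ directly from a model $f$ of $\Pi\phi$, eliminating one universal variable at a time. Removals for different universals do not interact, since $\Dpu$ is computed once on $\Pi\phi$ and the path sets $\pathlpu(u,\cdot)$ depend only on $\phi$ and $\mathcal{S}_u$. So it suffices to fix a single universal $u$ and show the following: if $X$ is the set of existentials $x$ with $u\in\D{x}$ and $(u,x)\notin\Dpu(\Pi\phi)$, then deleting $u$ from $\D{x}$ for all $x\in X$ keeps the formula true. To iterate over universals, we must check that the pure-path sets for a second universal $w$ are unaffected by the first modification. $\mathcal{S}_w$ can only shrink by removing variables whose dependency on $w$ is spurious anyway, so we instead prove the single-$u$ step in a form that removes the whole set of $\Dpu$-independent pairs for all universals simultaneously. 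Concretely, we perform the construction below for each $u$ in turn, always using the path sets computed in the original $\Pi\phi$.

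\textbf{Construction for a fixed $u$.} Partition $X$ into $X^+$, the variables $x$ with $u\not\pusim x$ or $\bar u\not\pusim\bar x$ (condition 1 fails), and note that condition 2 also fails for every $x\in X$. For each $x\in X$ we choose a polarity $\ell_x\in\{x,\bar x\}$ such that neither $u\pusim\ell_x$ nor $\bar u\pusim \bar\ell_x$... Since both conditions fail, we have ($u\not\pusim x$ or $\bar u\not\pusim \bar x$) and ($\bar u\not\pusim x$ or $u\not\pusim\bar x$). The new Skolem function is
\[ f'_x(\beta)=f_x(\beta[u\mapsto c_x]) \]
for a fixed constant $c_x$. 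Alternatively, in the spirit of the rrs soundness proofs, we set $f'_x(\beta)=f_x(\beta[u\mapsto 1])$ whenever the $u$-side of $x$ is ``unreachable''. The cleanest choice is to fix one value $c\in\{0,1\}$ for $u$ per assignment. Given $\beta$, if $\beta(u)=c$ then nothing changes for any variable. If $\beta(u)=\bar c$, we let $T$ be the set of variables whose literal in $\pathlpu$ from the falsified universal literal is reachable, and we copy from $\beta[u\mapsto c]$ only the existentials in $X$.

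\textbf{Verification.} Suppose a clause $E$ is falsified by the new assignment under some $\beta$. Since the old assignments $\tau_0=\beta[u\mapsto 0]\cup f(\cdot)$ and $\tau_1=\beta[u\mapsto 1]\cup f(\cdot)$ satisfy $\phi$, we take a minimal counterexample and trace the variables whose values changed back along resolution paths. Each changed literal $\ell$ that falsifies $E$ must be an $\mathcal{S}_u$-literal. Following the standard rrs argument, the clause containing the satisfied literal in $\tau$ yields a resolution path starting from a clause of $\phi_u$ or $\phi_{\bar u}$. The key new point is this: if that path passes through a clause containing $\bar u$ while $u$ is the literal under consideration, then in the branch where $\beta(u)=0$ that clause is already satisfied by $\bar u$ itself. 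So it need not be traversed, which is exactly why $\pathcpu$ may skip clauses containing $\bar u$. We then conclude that a falsified clause would produce $u\pusim x$ together with $\bar u\pusim\bar x$ (or the symmetric pair), contradicting $x\in X$.

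\textbf{Main obstacle.} The hardest step is making the construction and invariant precise, because the changed values must be chosen consistently across all $x\in X$. We first try a two-phase definition: set $u$ to the value that ``pure-path-reaches'' $x$, and argue by induction along the fixpoint order of $\pathlpu$ that every clause reached is satisfied. If that fails, we fall back on the strategy-extraction route: show that $\Dpu$-independence implies the existence of a $\qrc$-style reduction argument on $\phi|_{u}$ versus $\phi|_{\bar u}$, since clauses containing $\bar u$ vanish under $u=1$.
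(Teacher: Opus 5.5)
\textbf{There is a genuine gap: you never fix a construction, and the one concrete construction you give is false.} That construction copies only the $\Dpu$-independent variables $X$ from a fixed $u$-branch, $f'_x(\beta)=f_x(\beta[u\mapsto c])$, and leaves every other Skolem function unchanged. It fails whether $c$ is global or chosen per variable. Consider
\[ \exists e\,\forall u\,\exists x\,\exists y\,\exists w:\ (\bar u\vee y\vee e)\wedge(u\vee\bar y\vee e)\wedge(\bar u\vee w\vee e)\wedge(u\vee\bar w\vee e)\wedge(x\vee y)\wedge(\bar x\vee w). \]
The pure path sets are $\{\bar y,\bar w,x,\bar x,y,w\}$ from $u$ and $\{y,w\}$ from $\bar u$. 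The latter is small because both clauses containing $\bar y$ or $\bar w$ also contain $u$. Hence $(u,x)\notin\Dpu$, while $(u,y),(u,w)\in\Dpu$, so $X=\{x\}$. The map $f_e=1$, $f_x=u$, $f_y=\bar u$, $f_w=u$ is a model. Copying with $c=0$ gives $x\equiv 0$, which falsifies $(x\vee y)$ at $u=1$ because $y=0$ there. Copying with $c=1$ gives $x\equiv 1$, which falsifies $(\bar x\vee w)$ at $u=0$ because $w=0$ there. So the last step of your verification sketch, where a falsified clause yields $u\pusim x$ and $\bar u\pusim\bar x$, cannot be carried out. The broken clause is satisfied in the source branch by a literal of a variable that is itself $\Dpu$-dependent on $u$ ($y$ or $w$), and your construction never touches that variable. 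The same example also shows that the polarity $\ell_x$ you start to choose need not exist, since $u\pusim x$ and $u\pusim\bar x$ both hold. Your fallback via strategy extraction is not developed into an argument.

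\textbf{The missing idea is that the repair must also change Skolem functions of dependent variables, pointwise and on demand, under a suitable invariant.} The paper fixes a witness $(\alpha,x,u)$ and a literal $l_u$ with $\bar l_u\npusim l_x$. It then maintains candidate models with three properties:
(i) their set of $u$-dependency witnesses is strictly contained in that of $f$;
(ii) they differ from $f$ only on assignments satisfying $l_u$;
(iii) at every changed point, the literal that was true under $f$ is not pure-reachable from $\bar l_u$.

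Suppose a clause $C$ is falsified under some $\gamma\models l_u$ by a changed literal $l_z$. Then $C$ is true under $\gamma^u$ via some $l_y$. This $l_y$ is not $l_u$, because $\gamma\models l_u$ and $C$ is falsified. It is not $\bar l_u$ either, since that would give $\bar l_u\pusim l_z$ directly. So $y$ is existential, and one flips $f_y$ at $\gamma_y$, copying from the $\bar l_u$-branch. This removes a witness and creates none, which gives termination. Property (iii) is preserved because $C$ cannot contain $l_u$, so a pure path from $\bar l_u$ to $\bar l_y$ would extend through $C$ to $l_z$. This is exactly where purity, rather than $\Drrs$, is used. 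In the example above it produces $x\equiv 0$, $y\equiv 1$: $f_y$ is flipped at $u=1$, which is allowed since $\bar u\npusim\bar y$.

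Your remark about handling universals one at a time is fine, provided each step is applied to the already-modified prefix, since removing $u$ does not change $\mathcal{S}_w$ for $w\neq u$.
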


\begin{proof}
	Let $f=\{ f_e \mid e\in \var_\exists(\Pi)\}$ be a set of Skolem functions that satisfies $\Pi \phi$. 
	We will construct a set of Skolem functions $f^*=\{ f^*_e \mid e\in \var_\exists(\Pi)\}$ that satisfies $\Pi' \phi$.
	
	We use the following definitions. If $\alpha$ is partial assignment to the universal variables, $\alpha^v$ is obtained by flipping the Boolean value of universal variable $v$. Given $f$ a \emph{dependency witness} is a  triple $(\alpha, y, v)$ where $f_y(\alpha)\neq  f_y(\alpha^v)$, $y$ is an existential variable, $v$ is a universal variable, $\alpha$ is an assignment to $\D{y}$. The set $\mathcal{S}_u=\{z:u \in \D{z}\}$. For universal literal $l$: $\phi_l= \{C \in \phi \mid l\in C\}$.
	For an existential variable $z$, and assignment $\gamma\in \{0,1\}^U$: $\gamma_z= \gamma|_{\D{z}}$.

	Suppose $f$ has a dependency witness $(\alpha, x, u)$ where $u \in \mathrm{D}_x^\Pi \setminus \mathrm{D}_x^{\Pi'}$.
	Let $l_u$ be the literal on $u$ satisfied by $\alpha$, $l_x$ the literal on $x$ satisfied by $f_x(\alpha)$.
	Since $u \not \in \mathrm{D}_x^{\Pi'}$, by definition either $\bar l_u \not\pusim l_x$ or $l_u \not\pusim \bar l_x$.
	Without loss of generality, let $\neg l_u \not\pusim l_x$ (in the other case, swap the roles of $\alpha$ and $\alpha^u$).
	
	Define $\mpool = \mpool(f, \bar l_u, l_x, \alpha)$ as the set of all candidate Skolem sets $f'=\{f'_x \mid  x\in \var_\exists(\Pi)\}$ respecting the prefix $\Pi$ with the following properties:
	\begin{enumerate}
		\item The set of dependency witnesses in $f'$ and $u$ are a \emph{proper} subset of dependency witnesses in $f$ and $u$. 
		\label{item:hairiness}
		\item For every existential variable $z$, for every
		$ \gamma_z \in \{0,1\}^{\D{z}}$, if  
		$f'_z (\gamma_z) \neq f_z(\gamma_z)$  then $\gamma_z$ satisfies $l_u$; i.e. $u\in \D{z}$ and $l_u$'s polarity matches with $\gamma_z$.
		\label{item:assignment}
		\item For every existential variable $z$, for every $\gamma_z \in \{0,1\}^{\D{z}}$, if  
		$f'_z (\gamma_z) \neq f_z(\gamma_z)$  then
		for the literal $l_z$ satisfied by $f_z(\gamma_z)$,
		$\lnot l_u \not \pusim l_z$.
		\label{item:path}
	\end{enumerate}
	We first show that $\mpool$ is non-empty. Let
	\[ f^0=\{f^0_z \mid z \in \var_\exists(\Pi) \}\quad
	f^0_z(\tau) = \begin{cases}
		\neg f_x(\tau) & \text{ if } \tau = \alpha \text{ and } z=x\\
		f_z (\tau)     & \text{ otherwise,}
	\end{cases}
	\]
	\begin{enumerate}
		\item
		$f^0$ satisfies property~
		\ref{item:hairiness} because $(\alpha, x, u)$ and $(\alpha^u, x, u)$ are dependency witnesses that are removed. 
		For any dependency witness $(\delta, w, u)$ present in $f^0$ then $w\neq x$ or ($\delta\neq \alpha$ and $\delta\neq \alpha^u$ ) and so $f^0_w(\delta)=f_w(\delta)$ and $f^0_w(\delta^u)=f_w(\delta^u)$ so $(\delta, w, u)$ is a witness in $f$.
		\item $f^0$ only differs from $f$ on $\alpha$ and its extensions, but $\alpha$ satisfies $l_u$.
		\item $f_z^0$ only differs from $f_z$ when $z=x$ however we have assumed that $\lnot l_u \not \pusim l_x$ and $f_x(\alpha)$ satisfies $l_x$. 
	\end{enumerate}
	Next we  create an inductive step. We show that if $f'\in \mpool$ is not a model there is $f''\in \mpool$ where the dependency witnesses of $f''$ and $u$ are a proper subset of the  witnesses of $f'$.
	
	First we need to construct such a $f''$. We know that if $f'$ is not a model of $\Pi \phi$, then there is some assignment $\gamma\in \{0,1\}^{\var_\forall(\Pi)}$ and some clause $C\in \phi$ such that the assignment $\gamma \cup f'(\gamma)$ falsifies $C$. 
	However we know that $\gamma \cup f(\gamma)$ satisfies $C$ because $f$ is a model. Hence there is some existential variable $z$ such that $f'_z(\gamma_z)\neq f_z(\gamma_z)$ and some literal $l_z$ in $z$ such that $f_z(\gamma_z)$ satisfies $l_z$ and $f'_z(\gamma_z)$ falsifies $l_z$. By property~\ref{item:assignment}: $ \gamma_z$ satisfies $l_u$, and by property~\ref{item:path}: $\bar l_u \not \pusim l_z$.
	On $\gamma^u$, we have that $l_u$ is falsified so $f'$ and $f$ are equal by property~\ref{item:assignment}.  This means that $f'_z(\gamma^u_z)=f_z(\gamma^u_z)$. Therefore if $f_z(\gamma_z)\neq f_z(\gamma^u_z)$ (i.e. $(\gamma_z, z, u)$ is not a dependency witness in $f$) then 
	$f'_z(\gamma_z)\neq f_z(\gamma_z)=f_z(\gamma^u_z)=f'_z(\gamma^u_z)$ (i.e. $(\gamma_z, z, u)$ is a dependency witness in $f'$) violating property~\ref{item:hairiness}, so $f_z(\gamma^u_z)\neq f_z(\gamma^u_z)$ and $f'_z(\gamma_z)=f'_z(\gamma^u_z)$.
	
	This means there is some other literal  $l_y\in C$ with variable $y$ such that it is satisfied in $\gamma^u \cup f(\gamma^u)$. As $f$ and $f'$ must agree on $\gamma^u$, $l_y$ is also satisfied by $\gamma^u \cup f'(\gamma^u)$.
	$\gamma \cup f(\gamma)$ does not satisfy $l_y$ as it does not satisfy $C$. Therefore the only $\forall$ variable $y$ can be is $u$, however $l_y$ cannot be $l_u$, otherwise $\gamma \cup f(\gamma)$ would satisfy this (we come back to this for pure paths), and  $l_y$ cannot be $\lnot l_u$ otherwise we would have $\lnot l_u \not \pusim l_z$.
	Therefore $y$ is existential. Let
	\[ f''=\{f''_w \mid w \in \var_{\exists}(\Pi) \}\quad
	f''_w(\tau) = \begin{cases}
		\neg f'_w(\tau) & \text{ if } \tau = \gamma_y \text{ and } w=y\\
		f'_w (\tau)     & \text{ otherwise,}
	\end{cases}
	\]
	we show that $f''\in\mpool$:
	\begin{enumerate}
		\item (+ smaller than $f'$) As $\gamma \cup f(\gamma)$ falsifies $l_y$ and $\gamma \cup f(\gamma^u)$ satisfies $l_y$,	 	
		$(\gamma_y, y, u)$ and  $(\gamma^u_y, y, u)$ are dependency witnesses for $f'$ and therefore must be for $f$ by the induction hypothesis. $f''_y(\gamma_y)=  f''_y(\gamma^u_y)$ and thus these dependency witnesses are removed for $f''$. 
		For any dependency witness $(\delta, w, u)$ in $f''$, $w\neq y$ or ($\delta\neq \gamma_y$ and $\delta\neq \gamma_y^u$ ), and so $f''_w(\delta)=f'_w(\delta)$ and $f''_w(\delta^u)=f'_w(\delta^u)$ so $(\delta, w, u)$ is a dependency witness in $f'$ and thus also in $f$. 
		\item $f''$ only differs from $f'$ on $\gamma_y$ and its extensions.
		Since $(\gamma_y, y,u)$ is a dependency witness for $f'$, then $u\in D(y)$, and since it is consistent with $\gamma$ then $\gamma_y$ satisfies $l_u$.
		\item $f_w''$ only differs from $f'_w$ when $w=y$. 
		We know that $(\gamma_y, y,u)$ is a dependency witness for $f$, so as $f_y(\gamma^u_y)$ satisfies $l_y$ then $f_y(\gamma_y)$ satisfies $\bar l_y$. Assume for contradiction, that $\lnot l_u  \pusim \lnot l_y$. We know that if $l_u$ was in $C$ then $\gamma$ would satisfy $C$ which is does not (this is the only observation needed to get from \Drrs to \Dpu). Hence extending $\lnot l_u  \pusim \lnot l_y$ through the positive $l_y$ to $l_z$ gives us a $\lnot l_u$ pure path from $\lnot l_u$ to $l_z$ giving us a contradiction.
	\end{enumerate}
	
	Since the set of dependency witnesses is finite the induction step eventually terminates with model where $(\alpha, x, u)$ is no longer a dependency witness, and we have strictly fewer dependency witnesses on $u$.
	We can repeat this for any other witness $(\alpha', x, u)$ until all witnesses of $(\alpha', x, u)$ are removed to get model $f^*$. We can restrict the domain of $f^*_x$ to remove $u$ and it will still be well defined and a winning Skolem function.
\end{proof}

We note that a known dependency witness $(\alpha, x, u)$ can be eliminated in polynomial time (there is at most one flip per existential variable).
On the other hand, checking whether a given model $f$ has a forbidden dependency witness is $\NP$-complete.
\begin{theorem}
	Given a true (D)QBF $\Phi$ and its model $f$, it is NP-complete to decide if $f$ has a dependency witness $(\alpha, x, u)$ for universal $u$ and existential $x$ with $(u,x) \notin \Dpu(\Phi)$.
\end{theorem}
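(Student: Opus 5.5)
Membership in $\NP$ and $\NP$-hardness are shown separately, and the only subtlety is the representation of the model $f$. We assume, as is standard for certificates, that each Skolem function $f_x$ is given as a Boolean circuit over $\D{x}$ (circuits, or anything at least as succinct). With truth tables the problem would be polynomial-time, so hardness needs a succinct encoding.

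For membership, recall that $\Dpu(\Phi)$ is computable in polynomial time. For each universal $u$, the set $\pathlpu(u,\Phi,\phi_u,\mathcal{S}_u)$ is a linear-time fixpoint, and likewise for $\bar u$. So we first list all pairs $(u,x)$ with $u\in\D{x}$ and $(u,x)\notin\Dpu(\Phi)$. The certificate is then such a pair together with an assignment $\alpha\in\{0,1\}^{\D{x}}$. The verifier evaluates the circuit $f_x$ on $\alpha$ and on $\alpha^u$ and accepts iff the two values differ. This is polynomial, so the problem lies in $\NP$.

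For hardness we reduce from SAT. Given a CNF $\psi$ over $y_1,\dots,y_n$, we construct a true QBF $\Phi$ and a model $f$ such that $f$ has a forbidden witness iff $\psi$ is satisfiable. We take universals $u,y_1,\dots,y_n$ and a single existential $x$ depending on all of them. We choose the matrix so that $x$ occurs in no clause at all, for instance the single clause $(u\vee\bar u)$ or simply an empty matrix. Then $u\not\pusim x$ and $u\not\pusim\bar x$, so $(u,x)\notin\Dpu(\Phi)$, and $\Phi$ is trivially true with any Skolem function. We set $f_x(u,\vec y)=u\wedge\psi(\vec y)$, realised as a circuit of size polynomial in $|\psi|$. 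Now $f_x(\alpha)\neq f_x(\alpha^u)$ holds iff $\psi(\alpha|_{\vec y})=1$. Hence a forbidden dependency witness exists iff $\psi$ is satisfiable, which gives $\NP$-hardness.

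The main obstacle is not technical. It is making sure the hardness instance is genuinely the intended problem: that the formula is a valid (D)QBF, that $f$ really is a model, and that the forbidden pair is certified by $\Dpu$ rather than holding vacuously. If one objects to $x$ not occurring in the matrix, we can instead add a clause $(x\vee\bar x\vee w)$, or add $x$ only in clauses containing a universal $v\neq u$ with $v\notin\D{x}$-free paths. Either way $u$ has no pure path to $x$, and the dependency remains non-$\Dpu$ while the formula stays true under $f$. The whole argument also applies verbatim to QBF, since the constructed prefix $\forall u\,\forall\vec y\,\exists x$ is linear.
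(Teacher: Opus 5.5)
Your proof is correct, and its hardness part is essentially the paper's reduction. It uses the same prefix (universals $u$ and the SAT variables, one existential $x$ depending on all of them) and the same model $f_x = u\wedge\psi$. It also uses the same reason for $(u,x)\notin\Dpu$, namely that the positive literal $x$ never occurs in the matrix. The paper differs only in using the single non-tautological clause $\bigvee_{v\in V} v\vee u\vee\bar x$ instead of an empty matrix, which makes your closing fallback suggestions unnecessary, and in leaving implicit the NP-membership argument and the remark about a succinct encoding of $f$ that you spell out.
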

\begin{proof}
	By reduction from SAT.
	Take a CNF $\phi(V)$ in the variables $V$ and construct the true DQBF $\Phi$ with prefix $\forall V \cup \{u\} \exists x(V \cup \{u\})$ where $\{u, x\} \cap V = \emptyset$ and matrix with the single clause $\bigvee_{v \in V} v \lor u \lor \bar x$, with the Skolem function $f_x = \phi(V) \land u$.
	$f_x$ is a model, since whenever $u$ is set to false, $f_x$ yields $0$ satisfying the only clause.
	$(u,x) \notin \Dpu(\Phi)$ since the literal $x$ does not occur in $\Phi$ at all.
	Yet for any universal assignment $\alpha$, $(\alpha, u, x)$ is a dependency witness for $f_x$ if and only if $\alpha|_V$ is a satisfying assignment to $\phi$.
\end{proof}


\subsection{Strategy Extraction in Long Distance Q-Resolution}\label{sec:ld}

Theorem~\ref{thm:dpu-soundness} establishes semantic soundness of $\Dpu$: as a DQBF mapping it preserves both falsity (trivially) and truth (Theorem~\ref{thm:dpu-soundness}).
This means that $\Dpu$ can be soundly used in any DQBF proof system, and in fact the dependency scheme is not even `used in' the proof system any more, it operates entirely outside of it.
An important case not directly addressed by Theorem~\ref{thm:dpu-soundness} is that of \emph{long-distance Q-resolution (\lqrc)}~\cite{BJ12}.
\lqrc 
is a sound proof system for QBF and is supported by the solvers \depqbf~\cite{DBLP:journals/jsat/LonsingB10} and \qute~\cite{PeitlSS19}, but it is not sound for DQBF~\cite{BeyersdorffBlinkhornChewSchmidtSuda19}, so from Theorem~\ref{thm:dpu-soundness} alone we cannot infer that \lqdpurc is sound (we give the definition of long-distance Q-resolution with a dependency scheme $\mathcal{D}$ in Figure~\ref{fig:ldqdres} and this works for  $\mathcal{D}=\Dpu$; for long distance Q-resolution without a dependency scheme we take $\mathcal{D}=\Dtrv$, the trivial dependency scheme, where $(u,x)\in \Dtrv(\Pi\phi)$ if and only if $u\in \D{x}$). 
Fortunately, Theorem 8 from~\cite{BB16} addresses precisely this: it shows that \lqDrc\ is sound when $\mathcal{D}$ preserves DQBF truth value (the technical term for this used in~\cite{BB16} is `full exhibition').

\begin{theorem}
	\label{thm:ldqdpures-soundness}
	\lqdpurc is a sound and complete proof system for QBF.
\end{theorem}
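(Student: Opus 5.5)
Soundness follows by combining Theorem~\ref{thm:dpu-soundness} with Theorem~8 of~\cite{BB16}. That result says \lqDrc is sound whenever the dependency scheme $\mathcal{D}$ is \emph{fully exhibited}: for every QBF $\Pi\phi$, transforming the prefix into the DQBF prefix $\Pi'$ dictated by $\mathcal{D}(\Pi\phi)$ preserves the truth value.

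\textbf{Step 1: full exhibition of \Dpu.} Falsity is preserved trivially. $\Pi'$ only removes dependencies, so any model of $\Pi'\phi$ is a model of $\Pi\phi$; hence if $\Pi\phi$ is false, so is $\Pi'\phi$. Truth is preserved by Theorem~\ref{thm:dpu-soundness} applied with $\Pi$ a QBF prefix, since a QBF is a special case of a DQBF. So \Dpu is fully exhibited.

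\textbf{Step 2: matching the hypotheses of~\cite{BB16}.} I need to check that their notion of dependency scheme and of \lqDrc coincides with Figure~\ref{fig:ldqdres} instantiated with $\mathcal{D}=\Dpu$. In particular, the scheme must be computed once from the input QBF $\Pi\phi$, and its set of dependencies must be contained in $\Dtrv$. Both hold: Definition~\ref{def:upure-non-dependency} requires $u\in\D{x}$, so $\Dpu\subseteq\Dtrv$. The conclusion is that \lqdpurc is sound for QBF. This is the step I expect to be the main obstacle. The difficulty is not mathematical but definitional: one must confirm that ``full exhibition'' in~\cite{BB16} is exactly semantic truth preservation of the induced DQBF, rather than a stronger, per-strategy or per-restriction condition. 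If it were stronger, I would fall back on the constructive nature of the proof of Theorem~\ref{thm:dpu-soundness}, which repairs a given model witness by witness, to supply whatever variant is needed.

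\textbf{Step 3: completeness.} Every proof in \lqrc, i.e.\ \lqDrc with $\mathcal{D}=\Dtrv$, is also a proof in \lqdpurc. Since $\Dpu\subseteq\Dtrv$, any universal reduction or long-distance resolution step whose side condition refers to $\Dtrv$ remains valid under \Dpu. For long-distance resolution, the side condition on merged universal literals asks that they not be in the $\mathcal{D}$-dependency set of the pivot, and this is weaker when $\mathcal{D}$ is smaller. Plain Q-resolution already refutes every false QBF~\cite{KBKF95}, so \lqdpurc is complete.
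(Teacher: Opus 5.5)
Your proposal is correct and matches the paper's argument: soundness comes from Theorem~\ref{thm:dpu-soundness}, which makes \Dpu fully exhibited (truth-preserving as a QBF-to-DQBF prefix change), combined with Theorem~8 of~\cite{BB16}. Completeness, which the paper leaves implicit, follows as you say because $\Dpu\subseteq\Dtrv$ makes every \qrc refutation an \lqdpurc refutation. Your hedge in Step~2 is not needed, since the paper itself identifies full exhibition in~\cite{BB16} with exactly this semantic truth preservation.
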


\begin{figure}[ht]
	\framebox{\parbox{0.95\textwidth}{
			\begin{prooftree}
				\AxiomC{}
				\RightLabel{(Ax)}
				\UnaryInfC{$L$}
				\DisplayProof\hspace{1cm}
				\AxiomC{$C \vee u$}
				\RightLabel{(Red)}
				\UnaryInfC{$C$}
			\end{prooftree}
			
			$L$ is a clause in the propositional matrix $\phi$. 
			$u$ is a $\forall$ literal. There is no $\exists$ literal $l$ in $C$ such that $(u,l)\in \mathcal{D}(\Pi \phi)$.
			In contrast to Figure~\ref{fig:dres}, there \textbf{may} be $\bar u\in C$.
			
			\begin{prooftree}
				\AxiomC{$E \vee \neg x$}
				\AxiomC{$F \vee x$}
				\RightLabel{(Res)}
				\BinaryInfC{$E \vee F$}
			\end{prooftree}
			There is no $\forall$ literal $v$ in $E$ such that $\bar v\in F$ and $(v,x)\in \mathcal{D}(\Pi \phi)$.
	}}
	\caption{Proof rules of \lqDrc\ applied to an input QBF $\Pi\phi$ for dependency scheme $\mathcal{D}$.
		\label{fig:ldqdres}}
\end{figure}

It can also be shown using methods from~\cite{PeitlSS19a} that $\lqdpurc$ admits polynomial-time strategy extraction.
The proof follows the proof outline of~\cite[Theorem 2]{PeitlSS19a}.
In particular, one can show that $\Dpu$ is a \emph{normal} dependency scheme.
A dependency scheme $\mathcal{D}$ is normal~\cite[Definition 7]{PeitlSS19a} if any \lqDrc refutation of a QBF with outermost universal variables contains outermost variables in at most one polarity (and additionally the proofs are closed under application of partial existential assignments in a natural way). 
This unique polarity prescribes the winning strategy for the outermost variables, and this idea can be captured in a polynomial-size circuit, yielding both soundness and strategy extraction for \lqDrc for normal $\mathcal{D}$~\cite[Theorem 1]{PeitlSS19a}.

In order to prove that $\Dpu$ is normal, we follow the recipe of~\cite[Section 5.2]{PeitlSS19a}.
As in there, we restrict ourselves to QBFs of the form $\forall u \exists x_1, \dots, \exists x_n \phi$: with a single, outermost universal variable.
It is easy to see that one can forget all literals on other universal variables without changing the validity of an \lqdpurc refutation.

\begin{lemma}
	\label{lem:learned-paths}
	If a clause $C$ is derived by \lqdpurc from a QBF $\forall u\Pi\phi$, then $(u,x) \in \Dpu(\Pi\phi) \iff (u, x) \in \Dpu(\Pi\phi \wedge C)$ for all $x$, i.e., $C$ can be soundly used to calculate $\Dpu$ dependencies of an outermost variable $u$ as if $C$ were an input clause.
\end{lemma}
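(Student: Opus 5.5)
We argue by induction on the \lqdpurc derivation of $C$, along the lines of~\cite[Section 5.2]{PeitlSS19a}. We maintain a stronger invariant about pure paths from the two literals on $u$. Since $u$ is outermost, $\mathcal{S}_u$ is the set of all existential variables. Adding clauses can only enlarge $\pathlpu(u,\cdot)$ and $\pathlpu(\bar u,\cdot)$, so the direction ``$\Rightarrow$'' is immediate. For ``$\Leftarrow$'' we show that adding $C$ creates no new pure-path literals. Writing $\phi'=\phi\wedge C$, the invariant is
\[ \pathlpu(l,\Pi\phi',\phi'_l,\mathcal{S}_u)=\pathlpu(l,\Pi\phi,\phi_l,\mathcal{S}_u) \quad\text{for } l\in\{u,\bar u\}, \]
and we prove it for every derived clause.

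The key auxiliary claim is this: every existential literal $e$ of a derived clause $C$ lies in $\pathlpu(l,\Pi\phi,\phi_l,\mathcal{S}_u)$ for each literal $l\in\{u,\bar u\}$ that occurs in $C$. Note that $C$ may contain both, by long-distance resolution. We prove this claim by induction on the derivation.
\begin{itemize}
\item For axioms it holds by the initial step of the definition.
\item For (Red) it is trivial, since we only remove literals.
\item For (Res) on pivot $x$, with premises $E\vee\bar x$ and $F\vee x$, suppose $l$ occurs in the resolvent. Say $l\in E$ and $e\in F$. By induction $x$, the literal opposite to $\bar x$, is reached from $l$. We now want to extend the path into clauses used to derive $F\vee x$, which calls for a refined claim: for each literal of a derived clause, a path starting from $l$ and \emph{ending at the complementary pivot} can be continued. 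The clean way to get this is to strengthen the invariant to clause sets. We require that every derived clause $D$ containing $l$ or reached ``purely'' behaves like a member of $\pathcpu$, i.e. we track, for each derived clause $D$ and each literal $p\in D$, that any pure $l$-path entering $D$ via $\bar p$ can exit through every other literal of $D$ along an input-clause path avoiding $\bar l$.
\end{itemize}

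The main obstacle is exactly this strengthening, and in particular the ``avoids $\bar l$'' condition, which is where \Dpu differs from \Drrs. We need that if a derived clause $D$ does not contain $\bar l$, then the input clauses used to realise the path through $D$ also avoid $\bar l$. For this we use the side conditions of the rules. A literal $\bar l$ present in an ancestor can disappear only by (Red). Under that rule, the reduced clause contains no existential $e$ with $(u,e)\in\Dpu$. A clause whose existential literals are all independent of $u$ cannot occur on a relevant path. We would formalise this by splitting the induction into two cases: derived clauses containing $\bar l$, and clauses in which $\bar l$ was reduced. The latter case is handled using the (Red) condition, the former by simply never entering such a clause. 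The long-distance (Res) side condition ensures that merging $u$ and $\bar u$ only happens on pivots $x$ with $(u,x)\notin\Dpu$, so no dependency is fabricated there.

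Finally, with the claim in hand, we take any pure path in $\phi'$ that uses $C$. We replace each passage through $C$, entering via one literal and exiting via another, by the corresponding path in $\phi$ provided by the claim. This yields the equality of the $\pathlpu$ sets, and hence of \Dpu restricted to $u$. It then remains to check the ``closure under existential assignments'' part of normality, which follows because restriction only deletes clauses and literals and therefore cannot create pure paths.
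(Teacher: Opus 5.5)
\textbf{Verdict: genuine gap, in the reduction case.} Your plan matches the paper's: the forward direction is immediate by monotonicity, and for the converse each derived clause is rerouted through its premises. Your resolution case is fine. The gap is in the reduction step, which is the only place where \Dpu differs from \Drrs.

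You rest that step on the claim that a clause whose existential literals are all independent of $u$ cannot occur on a relevant path. As a general statement this is false. The condition $(u,\var(p))\notin\Dpu$ is compatible with $\bar u\pusim\bar p$ as long as $u\npusim p$. So independence alone does not stop a $\bar u$-pure path from entering a clause through $p$.

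The missing idea is to combine the entry point with the parent clause. Suppose a new $\bar u$-pure path enters the reduced clause $C$, obtained from $C\vee u$, via $p\in C$.
\begin{itemize}
\item Take the first use of $C$. The prefix of the path up to the clause containing $\bar p$ avoids $C$, so it witnesses $\bar u\pusim\bar p$ in the old formula.
\item Since $p$ and $u$ co-occur in $C\vee u$, we get $u\pusim p$. This is exactly where your auxiliary claim would be used.
\item Hence $(u,\var(p))\in\Dpu$, contradicting the side condition of the reduction step.
\end{itemize}
For $u$-pure paths, $C$ is simply replaced by $C\vee u$.

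\textbf{Your strengthened invariant has two further problems.} The invariant says passages through a derived clause that avoids $\bar l$ can be realised by input-clause paths that also avoid $\bar l$.
\begin{itemize}
\item For a reduced clause and $l=\bar u$, it can hold only vacuously. In general no $u$-free realisation exists: when $C\vee u$ is an input clause, the only realisation the induction provides is $C\vee u$ itself. So the invariant depends entirely on the missing argument above.
\item Your case split (clauses containing $\bar l$ versus clauses where $\bar l$ was reduced) does not cover clauses later derived from $C$ by resolution. The (Red) side condition constrains only the literals of $C$. A resolvent of $C$ with another clause may carry $u$-dependent literals from the other premise.
\end{itemize}

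\textbf{How to fix it.} Both problems disappear if, as in the paper, you induct over the growing formula $\phi\wedge C_1\wedge\dots\wedge C_k$. The hypothesis is that its pure-reach sets from $u$ and from $\bar u$ equal those of $\phi$. Then:
\begin{itemize}
\item Resolvents are rerouted through their premises, which are already in the current formula (reduced clauses included).
\item Your auxiliary claim follows immediately from the hypothesis.
\item A reduction step is handled exactly by the contradiction above.
\end{itemize}

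Your closing remark about closure under existential assignments is not part of this lemma.
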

\begin{proof}
	Clause addition clearly does not take away any resolution-path connections.
	We need to show that it does not create new connections either.
	
	The case of resolution steps is identical to the case of $\Drrs$~\cite[Lemma 1]{PeitlSS19a}.
	Suppose $C_1$ and $C_2$ are resolved on $x$ to obtain $C$, and a path uses the clause $C$ and its two literals $p_1, p_2$.
	If $\{p_1, p_2\} \subseteq C_i$, replace $C$ with $C_i$, and otherwise replace $C$ with $C_1, C_2$ connected by the pivot literals $x, \bar x$.
	Either way, $C$ can be replaced by $C_1$ and $C_2$ to show the same connections.
	$u$-purity is trivially preserved: if $C$ does not contain $u$ or $\bar u$, neither $C_1$ nor $C_2$ contain it.
	
	The case of reduction steps, trivial for $\Drrs$, requires some care.
	Suppose the reduction step $C \vee u \rightarrow C$ introduces pure paths.
	This means there is now a path $P$ from $\bar u$ to some $x$ that uses $C$ but could not use $C \vee u$ (because of universal impurity).
	So this path uses a literal $p \in C$ via which the clause $C$ is entered.
	Consider the prefix of $P$ that ends in $\bar p \in C'$, just before a transition from $C'$ to $C$ is made.
	This is a $\bar u$-pure path that shows $\bar u \pusim \bar p$.
	But since both $p, u \in (C \vee u)$, also $u \pusim p$, and thus $(u, p) \in \Dpu$, contradicting the soundness of the reduction step $C \vee u \to C$.
	Thus, the reduced clause $C$ is not useful for any new resolution-path connections.
\end{proof}

\begin{lemma}
	\label{lem:ldqdpures-nomerge}
	A clause $C$ derived by \lqdpurc from a formula $\forall u\Pi\phi$ cannot contain both $u$ and $\bar u$.
\end{lemma}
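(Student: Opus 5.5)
We argue by induction on the derivation of $C$ in \lqdpurc, using Lemma~\ref{lem:learned-paths} to treat every derived clause as if it were part of the input matrix when computing $\Dpu$-connections for $u$. Axioms are non-tautological by assumption, so they do not contain both $u$ and $\bar u$. Reduction only removes literals, so the property is preserved. The only interesting case is a resolution step
\[ (E\vee\bar x),\ (F\vee x) \;\longrightarrow\; E\vee F \]
with, say, $u\in E$ and $\bar u\in F$. By the induction hypothesis neither premise is complementary in $u$, so $\bar u\notin E\vee \bar x$ and $u\notin F\vee x$. We want to show that this step violates the side condition of (Res), i.e.\ that $(u,x)\in\Dpu(\Pi\phi)$.

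The plan is to exhibit the two pure paths that witness $(u,x)\in\Dpu$. By Lemma~\ref{lem:learned-paths}, we may add the derived premises $E\vee\bar x$ and $F\vee x$ to $\phi$ without changing any $\Dpu$-connection for $u$. In the extended formula, the clause $E\vee \bar x$ contains the literal $u$, so it belongs to $\phi_u$. Hence $\bar x\in\pathlpu(u,\cdot,\phi_u,\mathcal{S}_u)$, provided $x\in\mathcal{S}_u$; this holds since $u$ is outermost and $x$ is existential, and otherwise $u\notin\D{x}$ and there is nothing to check. This gives $u\pusim\bar x$. Symmetrically, $F\vee x\in\phi_{\bar u}$ gives $\bar u\pusim x$. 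These immediate paths consist of a single clause, so the purity restriction (which only forbids entering \emph{further} clauses that contain the opposite literal) plays no role.

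Combining the two paths gives condition 2 of Definition~\ref{def:upure-non-dependency}, so $(u,x)\in\Dpu(\Pi\phi\wedge(E\vee\bar x)\wedge(F\vee x))$. By Lemma~\ref{lem:learned-paths} this equals membership in $\Dpu(\Pi\phi)$, and since $u$ is outermost, $u\in\D{x}$. The side condition of (Res) forbids exactly this: $u\in E$, $\bar u\in F$ and $(u,x)\in\Dpu$. So the resolution step is illegal, which is a contradiction.

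The main obstacle is the correct use of Lemma~\ref{lem:learned-paths}. It must be applied to the whole set of derived clauses at once, including both premises, not just one. Applying it iteratively in derivation order handles this: adding a derived clause does not change the $\Dpu$ relation, so later derived clauses are again derivable from the enlarged formula and the lemma applies repeatedly. A further point to check is that the side condition refers to $\Dpu(\Pi\phi)$ of the original input, which is exactly what this invariance gives us.
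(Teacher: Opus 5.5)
Your proof is correct and takes essentially the same route as the paper. Like the paper, you look at the (first) resolution step that would combine $u$ and $\bar u$, use Lemma~\ref{lem:learned-paths} to treat both premises as input clauses, read off the immediate pure paths $u\pusim\bar x$ and $\bar u\pusim x$ (the paper's WLOG yields the symmetric condition~1 instead), and conclude that the step violates the side condition of (Res); your explicit handling of reductions and of applying Lemma~\ref{lem:learned-paths} iteratively to both premises just makes precise what the paper leaves implicit.
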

\begin{proof}
	Consider the resolution step of $C_1$ and $C_2$ over the pivot variable $x$ that produced the first clause $C$ with $u$ and $\bar u$ (axioms are not tautological, so such a clause must have been produced by resolution).
	Without loss of generality $x, u \in C_1, \bar x, \bar u \in C_2$.
	But then by Lemma~\ref{lem:learned-paths} $u \pusim x$ and $\bar u \pusim \bar x$.
	Thus, the resolution step is not valid in \lqdpurc.
\end{proof}

\begin{theorem}
	\label{thm:ldqdpures-strategy-extraction}
	There is a polynomial time algorithm that, given an \lqdpurc refutation of a QBF $\Pi\phi$, computes a strategy for the universal player.
\end{theorem}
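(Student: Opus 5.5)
The plan is to reduce the statement to \cite[Theorem~1]{PeitlSS19a}, which gives polynomial-time strategy extraction for \lqDrc whenever $\mathcal{D}$ is \emph{normal}. The work therefore consists of showing that \Dpu is normal in the sense of \cite[Definition~7]{PeitlSS19a}, following the recipe of \cite[Section~5.2]{PeitlSS19a}. Normality has two parts:
\begin{itemize}
\item[(i)] For a QBF whose outermost variable is universal $u$, every \lqdpurc refutation contains literals of $u$ in at most one polarity.
\item[(ii)] Refutations are closed under partial existential assignments: restricting an \lqdpurc refutation of $\Pi\phi$ by an assignment to outermost existential variables, or by the universal assignment falsifying the unique $u$-polarity, yields a valid \lqdpurc refutation of the restricted formula.
\end{itemize}
As in the text, we may restrict to formulas $\forall u\exists x_1\dots\exists x_n\phi$, since literals on other universal variables can be forgotten without invalidating the refutation.

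For (i) we build on Lemma~\ref{lem:ldqdpures-nomerge}, which already says no derived clause contains both $u$ and $\bar u$. We strengthen this to the whole proof, arguing by contradiction. Suppose a refutation contains clauses with $u$ and clauses with $\bar u$. Both polarities must eventually be removed before $\bot$, and this happens either by reduction or by meeting in a resolution step. Lemma~\ref{lem:ldqdpures-nomerge} forbids the resolution route. For the reduction route, the step $C\vee u\to C$ requires $(u,l)\notin\Dpu$ for every existential $l\in C$. We would then use Lemma~\ref{lem:learned-paths} to treat all derived clauses as input clauses for path computation, exactly as in the $\Drrs$ argument of \cite{PeitlSS19a}. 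The conclusion we aim for is that a refutation needing both polarities yields a pair of pure paths $u\pusim x$, $\bar u\pusim\bar x$ through some pivot, contradicting the side condition at the point where the polarities would have to interact. If this stronger form turns out to be exactly what \cite[Definition~7]{PeitlSS19a} demands in terms of clauses \emph{merged} on $u$, Lemma~\ref{lem:ldqdpures-nomerge} may already suffice.

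For (ii), we apply a partial assignment $\sigma$ to the refutation. Resolution steps whose pivot is assigned collapse to one premise. Clauses satisfied by $\sigma$ are dropped, and falsified literals are removed. Restriction only deletes clauses and literals from $\phi$, so it can only remove pure-path connections and never add them. Hence $\Dpu(\Pi\phi|_\sigma)\subseteq\Dpu(\Pi\phi)$ restricted to the remaining variables, and every reduction and long-distance resolution side condition stays valid. This monotonicity holds for \Dpu just as for \Drrs, since the extra purity restriction ($\bar u\notin E$) is also preserved when clauses and literals are deleted.

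The main obstacle is making (i) precisely match the normality definition. Reductions in \Dpu are more permissive than in \Drrs, and Lemma~\ref{lem:learned-paths}'s reduction case was delicate for exactly this reason. I would first try to show directly that, for each resolution step in a refutation, the pivot $x$ satisfies $(u,x)\in\Dpu$ whenever $u$-literals of opposite polarity sit in the two premises, which is forbidden. With normality established, \cite[Theorem~1]{PeitlSS19a} provides the polynomial-size circuit strategy: set $u$ to falsify its unique occurring polarity, restrict the proof, and recurse on the remaining prefix.
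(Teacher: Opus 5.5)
Your overall route is the paper's: show that \Dpu is normal and hand off to the extraction machinery of Peitl, Slivovsky and Szeider. Your part (ii) is acceptable, with one caveat. Purity is preserved under restriction only because a restriction never deletes a literal of the universal variable whose dependencies are being computed.

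The gap is in part (i), which is the entire content of the paper's proof. You state the conclusion you want but give no argument for it. The only concrete mechanism you offer is forbidding opposite $u$-literals in the two premises of a resolution step, and that is just Lemma~\ref{lem:ldqdpures-nomerge}. Your fallback that this lemma ``may already suffice'' is false. It says only that no single clause contains both $u$ and $\bar u$, whereas normality needs the whole refutation to contain at most one polarity.

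The two differ exactly when each polarity is reduced away inside its own subderivation before the subderivations are combined. Then no resolution step ever sees opposite $u$-literals. Your observation that both polarities ``must eventually be removed'' gives nothing here, because both are removed by reductions your resolution-step check never examines (and merging, incidentally, removes nothing). For instance, suppose a scheme declared $(u,x)$ independent in $\forall u\exists x\,(u\vee x)\wedge(\bar u\vee\bar x)$. This true formula would then be refuted by reducing the two clauses to $x$ and $\bar x$ and resolving: both polarities occur and there is no merge. So you must play the side condition of a reduction against a literal of the opposite polarity that arrives \emph{later}.

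That is what the paper does:
\begin{enumerate}
\item Make the refutation tree-like and take a minimal subderivation containing both polarities. By Lemma~\ref{lem:ldqdpures-nomerge} its final clause contains at most one of them.
\item Omit the reduction steps on the surviving polarity. You may then assume the final clause $C^*$ contains $u$, while $\bar u$ was reduced somewhere.
\item Take the lowermost such reduction $C_0\vee\bar u\to C_0$. Below it the derivation is a chain of resolution steps $C_0,C_1,\dots,C_k,C^*$ with side premises $C_i'$, none of which contains $\bar u$. The $u$ in $C^*$ must enter through some side premise.
\item Follow pivots from that side premise back to $C_0$, treating derived clauses as input clauses by Lemma~\ref{lem:learned-paths}. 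This gives a $u$-pure path showing $u\pusim\bar m$ for some existential literal $m\in C_0$.
\item Meanwhile $\bar u\pusim m$ holds directly via $C_0\vee\bar u$. Hence $(u,\var(m))\in\Dpu$, which contradicts the validity of the $\bar u$-reduction.
\end{enumerate}
Without this argument or an equivalent one, normality is not established, and so neither is the theorem.
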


\begin{proof}
	Consider an \lqdpurc refutation of a formula $\forall u \Pi \phi$ in which both $u$ and $\bar u$ occur.
	Copy clauses to make the refutation tree-like and take a minimal sub-derivation of some (not necessarily empty) clause $C$ that still contains both $u$ and $\bar u$.
	Call this sub-derivation, which ends in the clause $C$, $P$.
	By Lemma~\ref{lem:ldqdpures-nomerge}, $u$ and $\bar u$ do not occur in $C$: if $u \in C$, then omit all reduction steps on $u$, otherwise omit all reduction steps on $\bar u$.
	Call the resulting, still valid, \lqdpurc derivation $P^*$ and its final clause $C^*$, without loss of generality $u \in C^*, \bar u \not \in C^*$.
	$P^*$ must have a reduction step on $\bar u$ in the presence of some literal $x$, so that $(u, \var(x)) \not \in \Dpu$.
	Take the lowermost such reduction step $C_0 \vee \bar u \to C_0$.
	It follows that the bottom of $P^*$ is shaped as follows:
	
	\begin{prooftree}
		\AxiomC{$C_0 \vee \bar u$}
		\UnaryInfC{$C_0$}
		\AxiomC{$C_0'$}
		\BinaryInfC{$C_1$}
		\AxiomC{$C_1'$}
		\LeftLabel{$\ddots$}
		\BinaryInfC{$C_k$}
		\AxiomC{$C_k'$}
		\BinaryInfC{$C^*$}
	\end{prooftree}
	
	and no $C_i, C_i'$ contains $\bar u$.
	Let $p_i$ be the pivot for the resolution step producing $C_i$.
	There is a resolution path from $u \in C^*$, through the pivots $p_k, \dots, p_1$ to $C_0 \vee \bar u$, establishing that $\bar u \pusim p_1$ and $u \pusim \bar p_1$, a contradiction with the soundness of the reduction step.
	
	This demonstrates that $\Dpu$ is normal. 
	The rest of the proof is identical to the proof in~\cite{PeitlSS19a}.
\end{proof}

\subsection{Separations}
\label{sec:separations}

\lqdpurc trivially p-simulates \lqdrc as all \lqdrc refutations are in  fact \lqdpurc refutations already. Not all \lqdpurc proofs are \lqdrc proofs as some reduction and resolution steps could be prohibited and in fact we show that there is no workaround.

In QBF, the \qparity\xspace formulas are a family of false formulas which require the $\forall$ player to play the parity function in order to win. The parity function's hardness on bounded-depth formulas usually translates \cite{BWJ14,BCJ19} to proof size lower bounds, but in some cases gadgets can be used to find short proofs depending on the proof system. Therefore, by tuning these gadgets we can use variations of \qparity\xspace to separate different QBF proof systems.

\begin{definition}[ts-\lqparity(N)]
	Let $\xor_l(o_1,o_2,o, z)$ be the set of clauses
	$ \{
	(z\lor\neg o_1\lor\neg o_2\lor\neg o),\,
	(z\lor o_1\lor o_2\lor\neg o),\,
	(z\lor\neg o_1\lor o_2\lor o),\,
	(z\lor o_1\lor\neg o_2\lor o)
	\}$
	\begin{align*}
		\exists x_1,\dots,x_N\;\forall z\;\exists t_2,\dots,t_N,s_2,\dots,s_N.\quad
		\bigwedge \xor_l(x_1,x_2,t_2, z) 
		\wedge\;\bigwedge_{i=3}^{N}\xor_l(t_{i-1},x_i,t_i,z) \\
		\wedge\;\bigwedge \xor_l(x_1,x_2,s_2, \neg z) 
		\wedge\;\bigwedge_{i=3}^{N}\xor_l(s_{i-1},x_i,s_i,\neg z) 
		\wedge\;(z\lor t_N)\;\wedge\;(\neg z\lor\neg s_N).
	\end{align*}
\end{definition}

\begin{lemma}\label{lem:qpar_nomerge}
	The shortest \lqrc refutations of ts-\lqparity(N) are \qrc refutations.
\end{lemma}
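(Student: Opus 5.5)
The claim is that in any \lqrc refutation of ts-\lqparity(N), the long-distance feature, merging $z$ and $\bar z$ in one clause, is never actually used. So a shortest refutation, and in fact any minimal one, lies in \qrc. The plan is to show that no \lqrc derivation from ts-\lqparity(N) can produce a clause containing both $z$ and $\bar z$ that remains useful. Once merged clauses are excluded, every resolution step has no complementary universal pair. Every reduction step is then an ordinary universal reduction, so the refutation is already a \qrc refutation, and minimality transfers directly.

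First I classify the clauses by universal content. Every axiom contains exactly one of $z$ or $\bar z$. Those with $z$ mention only $x_i$ and $t_i$; those with $\bar z$ mention only $x_i$ and $s_i$. Only $z$ is universal, and $z$ is quantified after the $x_i$ and before the $t_i,s_i$.

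Second, I track where reductions can happen. A clause containing $z$ or $\bar z$ can drop it only once it has no $t$- or $s$-literals. Conversely, a clause still carrying a $z$-literal whose $t$/$s$-literals have all been resolved away behaves like a clause over the $x$'s.

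Third, I consider a resolution step producing the first merged clause containing both $z$ and $\bar z$. One premise descends from $z$-axioms without reduction, so its existential literals are among $x$'s and $t$'s. The other premise similarly has literals among $x$'s and $s$'s. The pivot must therefore be some $x_i$. The \lqrc side condition requires the pivot to be right of $z$ whenever $z$ and $\bar z$ merge. Since every $x_i$ is left of $z$, this step is forbidden. Hence no merged clause ever arises, and every \lqrc refutation of ts-\lqparity(N) is a \qrc refutation.

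The subtle point is justifying the third step. A premise might have passed through reductions and reacquired $z$-literals by resolving with other clauses, so its literal set need not be as clean as claimed. I would handle this with an invariant on derived clauses. Any clause containing $z$ has existential literals only among $x$'s and $t$'s, and any clause containing $\bar z$ has existential literals only among $x$'s and $s$'s. The invariant is preserved because $t$ and $s$ literals never co-occur in one axiom. Resolving on a $t$ or $s$ pivot never brings in the opposite family, and resolving on $x$ is exactly the step that is blocked. If the invariant proves too coarse, I would instead argue directly that any pivot between a $t$-side clause and an $s$-side clause is an $x$-variable, which is all the argument needs.
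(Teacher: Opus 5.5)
Your third step has the long-distance side condition backwards, and the rest of the argument depends on it. In \lqrc a merge of $z$ and $\bar z$ over a pivot $x$ is blocked exactly when $z\in\D{x}$, i.e.\ when the pivot lies to the \emph{right} of $z$ (Figure~\ref{fig:ldqdres} with $\mathcal{D}=\Dtrv$). When the pivot lies to the left of $z$, the merge is allowed. The opposite rule would be unsound: $\forall u\,\exists x\,(u\vee x)\wedge(\bar u\vee\bar x)$ is true, yet merging over $x$ and reducing would refute it.

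So the step you call forbidden, a merge over some $x_i$, is legal. Your invariant breaks exactly there, and your conclusion that \emph{every} \lqrc refutation of ts-\lqparity(N) is a \qrc refutation is false. Already for $N=2$ there is a counterexample. Resolving $(z\vee t_2)$ and $(\bar z\vee\bar s_2)$ against the xor clauses gives $z\vee x_1\vee x_2$, $z\vee\bar x_1\vee\bar x_2$, $\bar z\vee\bar x_1\vee x_2$ and $\bar z\vee x_1\vee\bar x_2$. Merging over $x_1$ gives $z^*\vee x_2$ and $z^*\vee\bar x_2$ (writing $z^*$ for the pair $z,\bar z$). Resolving over $x_2$ gives $z^*$, and reduction gives $\bot$.

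The missing idea, which is the paper's argument, is that merges are possible but can be made useless. Every derived clause containing a $t$- or $s$-literal also contains a $z$-literal, since $z$ cannot be reduced while such a literal is present. Now normalize by reducing each $z$-literal as soon as no $t$/$s$-literal blocks it. This costs at most one extra reduction per derived clause, so strictly speaking ``shortest'' should be read up to this overhead, which is all Lemma~\ref{lem:qpar_lb} needs. Note that the merges in the $N=2$ example act precisely on clauses whose $z$-literal was already unblocked.

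After normalization, both premises of a merge over $x_i$ still carry a blocking $t$- or $s$-literal, so the merged clause contains $z^*$ together with such a literal. That literal can never be resolved away: any partner clause containing its complement also contains a $z$-literal, and a merge over a pivot to the right of $z$ is illegal. Hence $z^*$ is never reduced and no descendant of a merged clause is empty. All merged clauses can therefore be pruned, leaving a \qrc refutation.
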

\begin{proof}
	At the beginning of the proof, every clause contains a $z$ or $\bar z$ literal block by some inner existential literal.
	Every derived clause in \lqrc that contains an inner existential literal must therefore contain a $z, \bar z$ literal or both. 
	This should be intuitive as universal literals linger unless they can be reduced which they cannot in the presence of these inner existentials. In addition any derived clause that contains a $z, \bar z$ or both without a blocking existential literal can be immediately reduced without cost to the proof size, so $z, \bar z$ literals and inner existential literals can only coincide in reduced clauses. 
	For more details follow the argument from \cite{BCJ19}. 
	As argued in \cite{BCJ19}, long distance (merge) steps are possible, but can never be reduced as this must be done after resolving the inner existential literal.
	Resolving the inner existential literal cannot be done because it will always be an illegal merge step. 
\end{proof}

\begin{lemma}\label{lem:qpar_lb}
	The shortest \lqrc refutations of ts-\lqparity(N) are exponential in $N$.
\end{lemma}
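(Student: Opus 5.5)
By Lemma~\ref{lem:qpar_nomerge}, it suffices to lower-bound \qrc refutations of ts-\lqparity(N). My plan is to reduce these to \qrc refutations of a known hard parity formula by a restriction argument, and then invoke the lower bound of \cite{BCJ19}.

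\textbf{Step 1: restrict $z$.} Take a shortest \lqrc refutation, which by Lemma~\ref{lem:qpar_nomerge} is a \qrc refutation $\pi$. Consider the ``$z=0$'' part of the formula. Under $z=0$, the clauses $\xor_l(\cdot,\cdot,\cdot,\neg z)$ and $(\neg z\vee\neg s_N)$ are satisfied, and the $t$-clauses become exactly the matrix of \qparity with final clause $t_N$.

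A cleaner alternative is to exploit the fact that in $\pi$ every clause containing an inner existential ($t_i$ or $s_i$) also contains $z$ or $\bar z$, and never both. I would then check that the $t$-chain and the $s$-chain can never be resolved against each other productively: a $t$-clause carries $z$ and an $s$-clause carries $\bar z$, and they share no inner variables. So each reduced clause in $\pi$ on the variables $x_1,\dots,x_N$ is derived from one chain only.

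\textbf{Step 2: extract a parity refutation.} The $t$-chain alone, together with $(z\vee t_N)$, is satisfiable by the existential player if they play $x$ with odd or even parity appropriately. The same holds for the $s$-chain with $(\neg z\vee\neg s_N)$, but with the opposite parity requirement. Hence the refutation must derive, from the $t$-chain, reduced clauses over $x$ that rule out one parity class. It must also derive, from the $s$-chain, clauses ruling out the other class.

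Restricting the $t$-derivations by $z=0$ turns them into a \qrc derivation, from \qparity-style clauses, of a CNF over $x$ that forces the negated parity. That CNF, or its $s$-counterpart, is a set of clauses over $x_1,\dots,x_N$ excluding all assignments of one parity. Any such CNF has at least $2^{N-1}$ clauses, since each clause of width less than $N$ excludes assignments of both parities.

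\textbf{Step 3 and main obstacle.} Step 2 would already give an exponential bound directly, because those final reduced clauses appear in $\pi$. The delicate point is justifying that the reduced $x$-clauses derived from the $t$-chain exclude only assignments of a single parity. This is the semantic argument that, under $z=0$, every such clause is implied by $\mathrm{xor}(x)=1$.

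If this clean counting fails, for example because mixed resolution between reduced clauses of both chains muddies the picture, I would fall back to the approach of \cite{BCJ19}. That means applying the strategy-extraction/circuit-based lower bound: a \qrc refutation yields bounded-depth circuits for the universal winning strategy for $z$, which must compute parity of $x_1,\dots,x_N$. By Håstad's bound, this forces the refutation size to be exponential in $N$.
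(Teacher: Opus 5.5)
\textbf{The main counting argument has a real gap.} It fails at the point you flag, and the failure is not a technicality. Steps~2--3 claim that every $x$-clause obtained by reducing $z$ from a $t$-side clause is falsified only by even-parity assignments, and hence has width $N$. That claim is false. In \qrc the $t$-side derivation of $C_x\vee z$ may resolve against $x$-clauses obtained earlier by reducing $\bar z$ on the $s$-side. So $C_x$ is implied only by $x_1\oplus\dots\oplus x_N=1$ \emph{together with} those earlier clauses. For the same reason, your Step~1 claim that each reduced clause is derived from one chain only is false.

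Already for $N=2$ this happens:
\begin{itemize}
\item From $(\bar z\vee\bar x_1\vee x_2\vee s_2)$ and $(\bar z\vee\bar s_2)$ derive $(\bar z\vee\bar x_1\vee x_2)$, and reduce it to $(\bar x_1\vee x_2)$.
\item From $(z\vee t_2)$ and $(z\vee\bar x_1\vee\bar x_2\vee\bar t_2)$ derive $(z\vee\bar x_1\vee\bar x_2)$. Resolve with $(\bar x_1\vee x_2)$ to get $(z\vee\bar x_1)$, and reduce to $(\bar x_1)$. This clause is falsified by both $10$ (odd) and $11$ (even).
\item Derive $(x_2)$ from $(z\vee t_2)$, $(z\vee x_1\vee x_2\vee\bar t_2)$ and $(\bar x_1)$.
\item Derive $(\bar x_2)$ from $(\bar z\vee\bar s_2)$, $(\bar z\vee x_1\vee\bar x_2\vee s_2)$ and $(\bar x_1)$.
\end{itemize}
Resolving $(x_2)$ with $(\bar x_2)$ completes a valid \qrc refutation. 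In it, neither side's reduced clauses exclude a single parity class.

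\textbf{What survives, and the fallback.} Only a relative invariant holds: each newly reduced clause excludes, among assignments not already excluded by earlier reduced clauses, only one parity. Read in proof order, the reduced clauses therefore form a decision list computing $z$. This is exactly \qrc strategy extraction for a single universal variable. Counting excluded assignments alone then yields nothing exponential, since the invariant only shows the excluded set can at most double per reduction. Closing the argument needs a genuine lower bound for parity against such decision lists or bounded-depth circuits, which is your fallback.

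Your fallback is the paper's proof:
\begin{itemize}
\item pass to \qrc via Lemma~\ref{lem:qpar_nomerge};
\item extract in polynomial time a bounded-depth circuit for the $z$-strategy \cite[Theorem~26]{BCJ19};
\item note that the unique winning strategy is $z=x_1\oplus\dots\oplus x_N$;
\item apply the Furst--Saxe--Sipser/H\aa{}stad lower bound.
\end{itemize}
Present that as the proof and drop Steps~1--3.
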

\begin{proof}
	Citing \cite[Theorem~26]{BCJ19} we use the well established strategy extraction lower bound technique. The winning universal strategy extracted from a \qrc proof is always a bounded-depth circuit. In this case the winning strategy for $z$ is the parity function on $x_1 \dots x_n$. Parity has exponential lower bounds in bounded-depth circuits \cite{FSS84,has88}, therefore since the strategy extraction was done in polynomial time in the size of the proof, the proofs must be at least exponential size.
	The shortest \lqrc proofs are the shortest \qrc proofs, so these are exponentially bounded below as well.
\end{proof}

We can observe that actually ts-\lqparity(N) will not be a hard problem when using the proof systems of \lqdrc or \lqdpurc because $\Drrs$ and $\Dpu$ are empty and the problem reduces to the SAT benchmark \texttt{Dubois} which is a known easy family of formulas.
While we could use this as a separating example between \lqrc and \lqdpurc, we can do better and add a gadget so that it also becomes a separating family between \lqdrc and \lqdpurc.

\begin{definition}[Bridged ts-\lqparity]
	\begin{align*}
		\exists x_1,\dots,x_N\;\forall z\;\exists t_2,\dots,t_N,\;\exists s_2,\dots,s_N,\; \exists b\;
		(\text{all clauses from ts-\lqparity})\wedge\\(z \vee \neg t_N \vee b)\wedge (z \vee t_N \vee \neg b) \wedge (\neg z \vee \neg s_N \vee b)\wedge (\neg z \vee s_N \vee \neg b)
	\end{align*}
\end{definition}

The $b$ variable must be equal to $t_N$ when  $z$ is false and equal to $s_N$ when $z$ is true, and that is the only condition needed to satisfy these clause.
Making this modification does not change the proofs of Lemmas~\ref{lem:qpar_nomerge}~and~\ref{lem:qpar_lb}. But previously there were no resolution paths between $t$-variables and $s$-variables. But now $b$ acts as a bridge.

\begin{lemma}
	$\Drrs$ and $\Dtrv$ are equivalent on Bridged ts-\lqparity.
\end{lemma}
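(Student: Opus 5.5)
We argue directly from the definitions of $\Drrs$ and $\Dtrv$. The only universal variable in Bridged ts-\lqparity is $z$, and the existentials depending on it are exactly $\mathcal{S}_z=\{t_2,\dots,t_N,s_2,\dots,s_N,b\}$. The variables $x_1,\dots,x_N$ are quantified left of $z$, so they do not depend on $z$. Hence $\Dtrv$ consists exactly of the pairs $(z,e)$ with $e\in\mathcal{S}_z$.

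It therefore suffices to show, for every $e\in\mathcal{S}_z$, that $z\rrssim e$ and $\bar z\rrssim \bar e$ (condition 1). We will in fact show the stronger claim that both $z$ and $\bar z$ reach \emph{both} literals of every variable in $\mathcal{S}_z$. Resolution paths are restricted to $\mathcal{S}_z$, so the $x_i$ can never serve as connecting literals.

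First we treat $z$. Every clause of $\xor_l(x_1,x_2,t_2,z)$ and of $\xor_l(t_{i-1},x_i,t_i,z)$ contains $z$, so these clauses lie in $\phi_z$. Across the four clauses of each gadget, the output variable occurs in both polarities, and so does the $t_{i-1}$ input. Consequently every literal $t_i$ and $\bar t_i$ for $2\le i\le N$ lies in $\pathl{rrs}(\Pi\phi,\phi_z,\mathcal{S}_z)$ from the start. The bridge clauses $(z\vee\neg t_N\vee b)$ and $(z\vee t_N\vee\neg b)$ are also in $\phi_z$, so $b$ and $\bar b$ are initial literals as well.

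Next we cross the bridge into the $s$-chain. Note that $\Drrs$ has no purity restriction, so clauses containing $\bar z$ may be used. From $b$ we enter $(\neg z\vee s_N\vee\neg b)$ and pick up $s_N$. From $\bar b$ we enter $(\neg z\vee\neg s_N\vee b)$ and pick up $\bar s_N$.

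We then walk down the $s$-chain. From $s_N$ (respectively $\bar s_N$) we enter the clauses of $\xor_l(s_{N-1},x_N,s_N,\neg z)$ that contain $\bar s_N$ (respectively $s_N$). Among these there are clauses containing $s_{N-1}$ in each polarity. Since the next edge is annotated by a different variable, $s_{N-1}$, the no-repeated-annotation condition on consecutive edges is satisfied. Iterating down to $\xor_l(x_1,x_2,s_2,\neg z)$ yields every $s_i$ and $\bar s_i$.

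The argument for $\bar z$ is symmetric, with the roles of the $t$- and $s$-chains swapped. Thus $z\rrssim e$ and $\bar z\rrssim\bar e$ for all $e\in\mathcal{S}_z$, giving $\Drrs=\Dtrv$.

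The only step needing care is the gadget check: we must verify, by inspecting the four clauses of $\xor_l$, that for each polarity of the output $o$ there is a clause containing $\bar o$ together with $o_1$, and another containing $\bar o$ together with $\bar o_1$. Here $o_1$ is the chained variable, $t_{i-1}$ or $s_{i-1}$. This holds because each polarity of $o$ appears with all parity-consistent sign patterns on $(o_1,o_2)$. For $N=2$ the chains have a single link, and the same argument applies.
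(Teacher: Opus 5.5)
Your proof is correct and follows essentially the same route as the paper. The chain clauses containing $z$ (resp.\ $\neg z$) give the literals of that chain for free, the bridge clauses on $b$ carry the path into the other chain, and that chain is then walked down gadget by gadget through the $\xor_l$ clauses, exactly as in the paper's induction; the only difference is that you establish the slightly stronger reachability of both literals of every $e \in \mathcal{S}_z$ from both $z$ and $\bar z$, whereas the paper checks only $z \rrssim t_i$ and $\bar z \rrssim \bar t_i$ (and symmetrically for the $s_i$ and $b$), which already suffices.
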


\begin{proof}
	\noindent \textbf{Induction hypothesis} (on $i$): $z\rrssim t_{N-i}$ and  $\neg z \rrssim\neg t_{N-i}$.
	
	\noindent \textbf{Base case}: $z\vee t_N$ is an axiom. 
	$(\neg z \vee  s_N \vee \neg b)$ and $(z \vee \neg t_N \vee b)$ link $\neg z$ to $\neg b$ and to $\neg t_N$.
	
	\noindent \textbf{Induction step}.: We can extend a path from $t_{N+1-i}$ to $t_{N-i}$ and from  $\bar t_{N+1-i}$ to $\bar t_{N-i}$ using the clauses of $\xor_l(\bar t_{N-i},x_{N+1-i}, t_{N+1-i}, z )$.
	
	We can symmetrically do the same induction for the $s_i$ variables. Finally, although it is not necessary for hardness, $b$ appears in an axiom with $z$ and $\neg b$ appears  with $\neg z$.
\end{proof}

\begin{corollary}
	\lqdrc requires exponential-size proofs of Bridged ts-\lqparity$_N$.
\end{corollary}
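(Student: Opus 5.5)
The corollary should follow from the preceding lemma together with Lemmas~\ref{lem:qpar_nomerge} and~\ref{lem:qpar_lb}, adapted to the bridged family. The plan is to show that on Bridged ts-\lqparity$_N$ the proof system \lqdrc collapses to plain \lqrc. Then the exponential lower bound for \lqrc transfers.

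\textbf{Step 1 (collapse to \lqrc).} By the preceding lemma, $\Drrs(\Pi\phi)=\Dtrv(\Pi\phi)$ for every $N$, where $\Pi\phi$ is Bridged ts-\lqparity$_N$. The rules of \lqDrc in Figure~\ref{fig:ldqdres} refer to the dependency scheme only through membership of pairs $(u,x)$ in $\mathcal{D}(\Pi\phi)$. These are the side conditions of (Red) and of (Res). Hence every \lqdrc refutation of this formula is, verbatim, an \lqrc refutation (the case $\mathcal{D}=\Dtrv$), and conversely. In particular the minimum proof sizes in the two systems coincide on this family.

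\textbf{Step 2 (lower bound for \lqrc on the bridged family).} The text claims that adding the four bridge clauses does not affect Lemmas~\ref{lem:qpar_nomerge} and~\ref{lem:qpar_lb}. I would check this explicitly.

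For the no-merge argument, each bridge clause contains $z$ or $\bar z$ together with inner existentials $t_N,s_N,b$ that block reduction. So the invariant still holds: a clause containing an inner existential literal also carries a $z$ or $\bar z$ literal. Any merge producing both $z$ and $\bar z$ can therefore never be reduced, since that would require first resolving the blocking inner literal via an illegal merge. Hence shortest \lqrc refutations are \qrc refutations.

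For the strategy-extraction lower bound, the winning universal strategy for $z$ is still forced to compute parity of $x_1,\dots,x_N$. The reason is that the bridge clauses are always satisfiable by setting $b$ to $t_N$ or $s_N$ depending on $z$, so they do not change which $z$-responses win. Strategy extraction from \qrc then yields bounded-depth circuits for parity of polynomial size in the proof. The parity lower bounds~\cite{FSS84,has88} then force exponential size, exactly as in Lemma~\ref{lem:qpar_lb}.

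\textbf{Step 3.} Combining Steps 1 and 2 gives exponential lower bounds for \lqdrc. The main obstacle is Step 2's verification that the bridge variable $b$ opens no shortcut. The worry is a derivation that eliminates $t_N$ or $s_N$ through $b$ and thereby permits a legal merge-and-reduce. I would handle this by noting that $b$ is itself an inner existential. Every clause containing $b$ contains $z$ or $\bar z$, so the blocking invariant is preserved under resolution on $b$.
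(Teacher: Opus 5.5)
Your proposal is correct and follows the paper's own reasoning. The corollary comes from combining the lemma that \Drrs and \Dtrv coincide on Bridged ts-\lqparity (so \lqdrc refutations of this family are literally \lqrc refutations) with the paper's remark that Lemmas~\ref{lem:qpar_nomerge} and~\ref{lem:qpar_lb} are unaffected by the bridge clauses; your explicit checks (every bridge clause carries $z$ or $\bar z$ beside a blocking inner existential, and $b$ can always be set to match $t_N$ or $s_N$, so parity remains the unique winning strategy for $z$) just spell out what the paper asserts in one sentence.
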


\begin{lemma}
	There are short refutations of Bridged ts-\lqparity$_N$ in \lqdpurc.
\end{lemma}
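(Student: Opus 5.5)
The plan is to compute $\Dpu$ on Bridged ts-\lqparity$_N$ explicitly, show that it removes every dependency of the $t_i$ and $s_i$ on $z$, and then give a linear-size \lqdpurc refutation that first reduces $z$ away from the parity clauses and then refutes the resulting propositional ``two parity chains'' formula by ordinary resolution. The gadget variable $b$ is never needed in the refutation.

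\textbf{Step 1: pure paths from $z$ and $\bar z$.} The clauses containing $\bar z$ are the clauses of the $s$-chain ($\xor_l(\cdot,\cdot,s_i,\neg z)$), the clause $(\neg z\vee\neg s_N)$, and the two bridge clauses $(\neg z \vee \neg s_N \vee b)$ and $(\neg z \vee s_N \vee \neg b)$. A $\bar z$-pure path may never enter a clause containing $z$. So, starting from these clauses, it can only ever move into clauses with no $z$ literal. By the definition of $\pathcpu$, the path stays inside the set of $\bar z$-clauses. The only route to a $t$-variable is through $b$, which would require entering $(z\vee\neg t_N\vee b)$ or $(z\vee t_N\vee\neg b)$, and both contain $z$. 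Hence $\bar z\npusim t_i$ and $\bar z\npusim \bar t_i$ for all $i$. Symmetrically, $z\npusim s_i$ and $z\npusim \bar s_i$. The $x_j$ lie left of $z$, so they are outside $\mathcal{S}_z$. By Definition~\ref{def:upure-non-dependency}, both conditions 1 and 2 fail for every $t_i$ and every $s_i$, so $(z,t_i),(z,s_i)\notin\Dpu$. By contrast, $(z,b)$ does stay in $\Dpu$, which is the contrast with $\Drrs$ shown in the previous lemma.

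\textbf{Step 2: the refutation.} In every clause of the form $\xor_l(\cdot,\cdot,\cdot,z)$ or $\xor_l(\cdot,\cdot,\cdot,\neg z)$, and in $(z\vee t_N)$ and $(\neg z\vee \neg s_N)$, every existential literal is either some $x_j$ or some $t_i$ or $s_i$. By Step 1, none of these has a $\Dpu$ dependency on $z$. So the (Red) rule of Figure~\ref{fig:ldqdres} removes $z$ or $\bar z$ from all of them, at a cost of $O(N)$ steps. What remains is a purely existential CNF: two XOR chains both computing $x_1\oplus\dots\oplus x_N$, together with the units $t_N$ and $\neg s_N$. I would derive $t_i\leftrightarrow s_i$ (that is, the two clauses $\bar t_i\vee s_i$ and $t_i\vee\bar s_i$) inductively, setting $t_1=s_1=x_1$ for convenience. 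From the pair for $i-1$ and the eight xor clauses defining $t_i$ and $s_i$, a constant number of resolutions on $t_{i-1}$, $s_{i-1}$ and $x_i$ yields the pair for $i$. This is the standard short proof of the Dubois formulas. Finally, resolving $\bar t_N\vee s_N$ with $t_N$ and then with $\neg s_N$ gives $\bot$, for a total proof size of $O(N)$. No universal literals are present after the reductions, so every resolution step trivially meets the side condition of Figure~\ref{fig:ldqdres}.

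\textbf{Main obstacle.} The delicate point is Step 1: one has to check carefully that the fixpoint closure for $\bar z$ genuinely cannot leak into the $t$-side. In particular, reaching $b$ or $\neg b$ must only lead into bridge clauses with the same $z$-polarity. I would verify this by listing the four bridge clauses together with their $z$-polarity. The inductive equivalence step in Step 2 is routine, with constant size per index.
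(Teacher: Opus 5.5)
Your proposal is correct and matches the paper's proof essentially step for step. Every clause with a $t$-variable contains $z$ and every clause with an $s$-variable contains $\neg z$, so purity blocks all paths across the $b$-bridge and $(z,b)$ is the only pair left in $\Dpu$; all $z,\neg z$ literals are then reduced, and the remaining Dubois-style formula is refuted by inductively deriving $t_i\leftrightarrow s_i$ and clashing with $t_N$ and $\neg s_N$, with your constant-size inductive step and bridge-polarity check simply spelling out details the paper leaves implicit.
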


\begin{proof}
	Every clause with a $t$ variable contains a positive $z$ literal.
	Every clause with an $s$ variables contains a $\neg z$ literal. 
	Therefore, for $2 \leq i \leq N$  there are no $\Dpu$ resolution paths that go from $\neg  z$ to $t_i$, from $\neg  z$ to $\neg t_i$, from $ z$ to $s_i$ nor from $ z$ to $\neg s_i$. In fact the only dependency pair in $\Dpu$ is $(z,b)$. In the derivation, with the exception of the new clause which we will not use anyway, we can reduce all $z$ and $\lnot z$ literals immediately.
	Now we have a false existential formula with a known short resolution proof. 
	Starting with $t_2$ and $s_2$ we inductively derive $t_i \leftrightarrow s_i$ (as clauses $\neg t_i \vee s_i$ and $\neg s_i \vee t_i$). Once we reach $(\neg t_i \vee s_i)$ we can contradict this with $t_i $ and $\neg s_i$. 
\end{proof}

\begin{corollary}
	\lqdrc does not p-simulate \lqdpurc.
\end{corollary}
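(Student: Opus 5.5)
The corollary follows by combining the two preceding results on Bridged ts-\lqparity$_N$: the exponential lower bound for \lqdrc and the polynomial upper bound for \lqdpurc. I would first fix the family $\Phi_N$ = Bridged ts-\lqparity$_N$ and check that it is a false QBF. Falsity holds because the preceding lemma gives an \lqdpurc refutation, and Theorem~\ref{thm:ldqdpures-soundness} says \lqdpurc is sound. Both systems are refutational systems for the same set of false QBFs, so p-simulation is meaningful here.

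Next I would record the size bounds. By the lemma giving short refutations in \lqdpurc, there is a refutation $\pi_N$ of $\Phi_N$ of size polynomial in $N$. Its components are the $O(N)$ reductions of $z,\neg z$ literals and an inductive derivation of the clauses $\neg t_i\vee s_i$, $\neg s_i\vee t_i$ using a constant number of resolutions per $i$. By the preceding corollary, every \lqdrc refutation of $\Phi_N$ has size $2^{\Omega(N^{c})}$ for some $c>0$. That corollary rests on two facts. First, $\Drrs$ coincides with $\Dtrv$ on $\Phi_N$, so \lqdrc proofs of $\Phi_N$ are just \lqrc proofs. Second, Lemmas~\ref{lem:qpar_nomerge} and~\ref{lem:qpar_lb} still apply after adding the bridge clauses.

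Finally I would argue by contradiction. Suppose a polynomial-time map $F$ sends \lqdpurc refutations to \lqdrc refutations of the same formula. Then $|F(\pi_N)|\le \mathrm{poly}(|\pi_N|)=\mathrm{poly}(N)$, which contradicts the super-polynomial lower bound. So \lqdrc does not p-simulate \lqdpurc, since $\Phi_N$ is a separating family.

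The only point needing care is the step reused from the earlier lemmas: that the bridge clauses with $b$ leave the strategy-extraction lower bound intact. Concretely, the winning universal strategy for $z$ must still compute parity of $x_1,\dots,x_N$, and the no-merge argument still goes through. I would check this briefly. Setting $z$ to either value still forces $t_N$ or $s_N$ to equal parity, and $b$ is always satisfiable by copying that value. So the $\forall$-strategy is unchanged, and the bounded-depth circuit lower bound for parity applies as before.
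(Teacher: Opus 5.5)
Your proposal is correct and follows the paper's own route. The corollary comes from pairing the exponential lower bound for \lqdrc on Bridged ts-\lqparity$_N$ (which holds because $\Drrs=\Dtrv$ there, together with Lemmas~\ref{lem:qpar_nomerge} and~\ref{lem:qpar_lb}) with the polynomial-size \lqdpurc refutations of the same family. Your extra check, that the bridge variable $b$ leaves parity as the unique winning $z$-strategy, correctly fills in a point the paper only asserts when it says those lemmas carry over.
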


\section{Local Pure Literal Reduction}\label{apx:lplr}
In this section, we explain how \Dpu can be useful as a rule, by directly incorporating it into the reduction steps without altering the prefix. This approach helps us simplify the proof of Theorem~\ref{thm:revsim}.

\begin{prooftree}
	\AxiomC{$C \vee u$}
	\RightLabel{(UR)}
	\UnaryInfC{$C$ }
\end{prooftree} 
In the reduction rule, if $C$ were to contain existential variables that had $u$ in their dependency sets, then a satisfaction of $C\vee u$ may not automatically mean that $C$ is satisfied regardless of $u$. 
In order to ensure that $C$ is satisfied whenever $C\vee u$ is, the existential player must be able to costlessly commit to a strategy that determines the variables of $C$ before $u$ is reached. 
A basic example is if  $u$ is pure in all clauses of the QBF $\Pi\phi$, then the existential player is effectively able to ignore the actual play of $u$ by assuming the worst case which is $u=0$. 
This can be generalised further, we can reduce $C\vee u$ when $u$ is pure only in the $\Pi\phi$ clauses that can be affected by our switch from $C\vee u $ to $C$.
We can efficiently exclude some clauses from the set of relevant clauses by using resolution paths. Suppose there is no resolution path in the $\exists$ variables right of $u$ from $C\vee u $ to $D$, then once the game is played up to $u$ and we have current partial assignment $\alpha$, then $(C\vee u)|_\alpha $ and $D|_\alpha$ cannot be in the same \qrc refutation of $\Pi\restriction_{A'} \phi|_\alpha$ if it exists. Therefore, because the presence of any $\bar u$ literal in $D$ is syntactically irrelevant to removing $u$ from $C\vee u$, the presence of any $\bar u$ literal in $D$ is semantically irrelevant to removing $u$ from $C\vee u$. 

\begin{prooftree}
	\AxiomC{$C \vee u$}
	\RightLabel{(EUR)}
	\UnaryInfC{$C$ }
\end{prooftree} 

This is how \emph{Extended Universal Reduction} (EUR) works \cite{HSB14}. You may reduce  $C \vee u$ to $C$ as long as there is no $\mathcal{S}_u$-resolution path from $C$ to any clause $D$ with $\bar u\in D$. $\mathcal{S}_u$ here would be the set of existential variables that depend on $u$. EUR can be checked in polynomial time, but requires global knowledge of the clauses, it is therefore used in the QBF proof system QRAT \cite{HSB14} where a deletion rule allows deletion of clauses that may prevent the use of EUR. 
In future updates of \texttt{dqrat-check} we may add this rule directly, while our complexity results mean this will not increase the power of the proof system, it would mean that 
QRAT proofs can be checked by our checker .

\subsection{Definition of \texorpdfstring{\lplr}{Loc-pure-Red}}

Let $\Pi \phi $ by a DQBF and let $C\vee u$ be a clause in $\phi$ with universal literal $u$, let $\mathcal{S}_u$ be the set of variables that depend on $\var(u)$. 
In extended universal reduction (EUR) we considered $\pathc{rrs}(\psi, C\vee u, \mathcal{S}_u)$ to check that no clause in it contained $\bar u$. This prevents any $\mathcal{S}_u$ literal $l$ in $C$ from having $\bar u \rrssim \bar l$, and this was sufficient for $u$ to be reduced in $C \vee u$.

We can improve on this using pure paths. Consider $\pathcpu(\bar u,\psi, C\vee u, \mathcal{S}_u)$ and check whether it contains a clause with $\bar u$ in, if it does not then $C \vee u$ can reduce to $C$. 
We will formally define how this proof rule works.

\begin{definition}
	Local Pure Literal Reduction \lplr allows us to make the following derivation
	\begin{prooftree}
		\AxiomC{$\Pi \phi \wedge C\vee u$}
		\RightLabel{(\lplr)}
		\UnaryInfC{$\Pi \phi \wedge C$}
	\end{prooftree}
	
	Where $\Pi$ is a DQBF prefix, $\phi$ is a CNF, $C$ is a clause and $u$ is a universal literal and where $\var(u)$ is found in $\Pi$. $\mathcal{S}_u$ is the set of existential variables $x$ in $\Pi$ such that $u \in \D{x}$. For brevity, $\psi$ is defined as the full DQBF $\Pi \phi \wedge C\vee u$.
	The main side condition is that $\pathcpu(\bar u,\psi, C\vee u, \mathcal{S}_u)$ does not contain any clause that contains $\bar u$.
\end{definition}

\subsection{Soundness}

We have to demonstrate soundness which we will do in DQBF, we can prove soundness by showing a p-simulation by a sound DQBF proof system and we use \dRes for this.
While there my be a shorter proof of soundness we will need this lemma for Section~\ref{sec:pequiv}.

\begin{lemma}\label{lem:dres_psim_lplr}
	We can p-simulate the \lplr rule with \dRes, by adding new variables and clauses to DQBF.
\end{lemma}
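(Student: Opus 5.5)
Our plan is to simulate one application of \lplr, which turns $\Pi\phi\wedge (C\vee u)$ into $\Pi\phi\wedge C$, inside \dRes. We introduce, via IndExt, a fresh copy $x'$ of every existential $x\in\mathcal{S}_u$ that occurs in $\pathcpu(\bar u,\psi, C\vee u, \mathcal{S}_u)$. The copy is defined conditionally on $\bar u$, that is, only when $u$ is false: it agrees with $x$ there and is free otherwise. Concretely, we take $\alpha=\bar u$ and the single argument $y_1=y_2=x$. This yields clauses $(u\vee x'\vee \bar x)$ and $(u\vee \bar x'\vee x)$ (up to the polarity convention of the rule), with $\D{x'}=\D{x}\setminus\{u\}$. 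This mirrors the trick used for $n$ in Example~\ref{ex:dres proof}.

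Next we replay the pure-path region under the substitution $x\mapsto x'$. Let $E$ be a clause of $\pathcpu(\bar u,\psi,C\vee u,\mathcal{S}_u)$, including $C\vee u$ itself. By the side condition, $E$ contains no $\bar u$. For such $E$ we derive $E'\vee u$, where $E'$ is $E$ with every $\mathcal{S}_u$-literal on a variable in the region replaced by its primed copy. This is done by resolving $E$ against the extension clauses $(u\vee x'\vee\bar x)$ or $(u\vee\bar x'\vee x)$, one for each such literal. Each substitution costs one resolution step, so the whole replay is polynomial.

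Now look at the primed clause $C'\vee u$. None of its existential literals depends on $u$: unprimed ones are not in $\mathcal{S}_u$, and primed ones have $u$ removed. Also $\bar u\notin C'$. Hence (Red) applies and gives $C'$.

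It remains to get back from $C'$ to $C$, which needs clauses of the form $(\bar x'\vee x)$ or $(x'\vee\bar x)$ that hold also when $u$ is true. This is the main obstacle, because IndExt only defines $x'$ under $\bar u$. The plan is to exploit the \emph{closure} of the pure region. Every clause $D$ containing the complement of a region literal and no $\bar u$ is itself in the region, so its primed version $D'\vee u$ exists and reduces to $D'$. Clauses containing $\bar u$ are never reached. So the primed clauses $\{D'\}$ form a closed copy of the relevant part of the formula.

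Rather than deriving $C$, we then continue the proof using $C'$ in place of $C$ and carry the substitution forward. Wherever a later step needs $C$ together with a clause $D$ outside the region, that step resolves on a literal whose clause on the other side lies in the region (by closure), and so has a primed counterpart. Any residual mixing between $x$ and $x'$ is handled by the other-polarity extension, obtained via a second IndExt with $\alpha=u$ that defines $x'$ on the other half and is checked consistent with the first.

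I expect the delicate part to be making the two conditional definitions of $x'$ compatible. I would first try to realise $x'$ as a single IndExt variable with $\alpha=\bar u$ only, and show that the pure-path closure makes the $u$-true half unnecessary. In the worst case the argument falls back to showing that \dRes p-simulates the renamed proof, since the renaming is a bijection on the region.
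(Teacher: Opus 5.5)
\paragraph{A genuine gap: the renaming is not global.}
Your first steps match the paper's treatment of $\chi_0=\pathcpu(\bar u,\psi,C\vee u,\mathcal{S}_u)$. You introduce copies $x^{\bar u}$ by IndExt under $\alpha=\bar u$, resolve them against the definition clauses so that every clause of $\chi_0$ is rewritten with an extra $u$, and then reduce that $u$. The gap is in how you continue. To replay the rest of the proof you need the \emph{whole} matrix to be a substitution instance of $\phi\wedge C$ under a single substitution. You rename only inside $\chi_0$, and your closure argument does not control the boundary.

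First, its polarity is off. A $\bar u$-pure path refuses clauses containing $u$, so an outside clause containing the complement of a region literal contains $u$. Those clauses are in fact harmless, since priming them costs nothing. More importantly, closure says nothing about pivots outside $\mathcal{S}_u$, nor about same-polarity occurrences of region variables in clauses containing $\bar u$.

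For a concrete failure, take the prefix $\exists a\,\forall u\,\exists x\,\exists w$ with clauses $(x\vee a\vee u)$, $(\bar x\vee w)$, $(\bar u\vee\bar a\vee x)$.
\begin{itemize}
\item Here $\chi_0=\{(x\vee a\vee u),(\bar x\vee w)\}$ contains no $\bar u$, so \lplr yields $C=(x\vee a)$.
\item A later step may resolve $C$ with $D=(\bar u\vee\bar a\vee x)$ on $a$, which is not in $\mathcal{S}_u$.
\item You only have $C'=(x'\vee a)$. $D$ cannot be primed, because its resolvent with $(u\vee\bar x\vee x')$ contains $u\vee\bar u$.
\item So you obtain $(x'\vee x\vee\bar u)$, which is not an image of $(x\vee\bar u)$. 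From here both copies of region variables must be carried and eliminated separately.
\item Later non-resolution steps (DQRAT checks, \Dpu) then act on a formula that is not a renaming of the one being simulated.
\end{itemize}
Your fallback does not close this. IndExt only introduces fresh variables, so you cannot add a $u$-half to the definition of an existing $x'$. Moreover, a variable that agrees with $x$ on both halves while $u\notin\D{x'}$ is exactly the unjustified dependency removal. A bijection on the region also does not address occurrences across its boundary.

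\paragraph{What the paper does instead.}
The paper's missing ingredient is a second reachability sphere, $L^1=\pathlpu(\bar u,\psi,\phi\setminus\chi_0,\mathcal{S}_u)$ and $\chi_1$, started from all clauses not reached from $C\vee u$. Each variable $x$ is then replaced as follows:
\begin{itemize}
\item by $x^{\bar u}$ if $x\in L^0\setminus L^1$;
\item by $x^{\bar u}\vee x$ if $x\in L^0\cap L^1$, obtained by weakening $x^{\bar u}$ in $\chi_0$ and $x$ in $\chi_1$;
\item by $x^{\bar u}\wedge x$ if $\bar x\in L^0\cap L^1$;
\item and is left unchanged otherwise.
\end{itemize}
This is consistent because each clause of $\chi_0\cap\chi_1$ has a unique entry literal $\bar x$, lying in neither $L^0$ nor $L^1$, while its other $\mathcal{S}_u$-literals lie in $L^0\cap L^1$. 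Hence the two copies coming from the two spheres agree everywhere except at $\bar x$. Distributivity, available since \dRes p-simulates Frege reasoning, merges them into $\bar x^{\bar u}\wedge\bar x$, which is the negation of $x^{\bar u}\vee x$. Only then is the derived matrix exactly a renaming of $\phi\wedge C$, so the originals can be ignored. In the example above this gives $x'=x^{\bar u}\vee x$ and $w'=w^{\bar u}\vee w$, and the resolvent on $a$ becomes $(x^{\bar u}\vee x\vee\bar u)$, the correct image of $(x\vee\bar u)$.
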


\begin{proof}
	Let $\Pi$ be the DQBF prefix and $\phi\wedge C\vee u$ be the propositional matrix of the DQBF $\psi$. 
	Suppose we reduce from $\Pi \phi \wedge C\vee u$ to $\Pi \phi \wedge C$. Consider the $\mathcal{S}_u$-literals of $C$, where $\mathcal{S}_u$ is the set of existential variables that depend on $u$. In order to make this p-simulate, for each $x\in \mathcal{S}_u$ we will replace $x$ with $x'$ where $\D{x'}\subseteq \D{x}$.
	Essentially we will recreate $\Pi \phi \wedge C$ as $\Pi' \phi' \wedge C'$ by clause additions, where we substitute each  $x'$ for $x$ to change $\Pi$ to $\Pi'$, $\phi$ to $\phi'$ and $C$ to $C'$. Technically there is no deletion rule in \dRes, we consider ``deletion'' a persistent ignoring of the clause, thereafter.

	Recall that in the definition $\pathcpu(\bar u,\psi, \chi, \mathcal{S}_u)$, we had some subset of clauses $\chi$ which we used as a start point.
	We will study pure path reachability from two different start points, firstly the singular clause set $\{C\vee u\}$. Once we have found all reachable clauses from $\{C\vee u\}$, those remaining unreachable clauses form the second start point. This way we have two ``spheres'' of reachable clauses that cover the entire set of clauses. 
	Counter-intuitively, these spheres are not necessarily disjoint because reachability requires us not to immediately re-use literals on the resolution path. The intersection of these spheres will be an important special case that we must handle. 
	Let $L^0=\pathlpu(\bar u,\psi, C\vee u, \mathcal{S}_u)$, $\chi_0 = \pathcpu(\bar u,\psi, C\vee u, \mathcal{S}_u)$. Then let $L^1=\pathlpu(\bar u, \psi, \phi\setminus \chi_0 , \mathcal{S}_u)$, $\chi_1 = \pathcpu(\bar u,\psi, \phi\setminus \chi_0, \mathcal{S}_u)$.
	We introduce conditional definition  $\bar u\rightarrow(x^{\bar u}\leftrightarrow x)$ using the IndExt rule for each variable that has a literal in $L^0$. 
	
	For each clause $D$ in $\chi_0$ we replace each literal $x\in \mathcal{S}_u$ with $x^{\bar u} \vee u$ via resolving with the definition clauses. Note that $\chi_0$ contains no $\bar u $ literals by the side condition of the rule, and no other clause except $C \vee u$ contains a $u$ literal by the path purity. Removing all $\mathcal{S}$-literals, means we can reduce any $u$-literals from what was once $\chi_0$, including the one originating from $C$ and those introduced from resolving away the $\mathcal{S}_u$-literals. We call this set $\chi^{\bar u}_0$ because it is what you would get in the $\bar u$ expansion of $\chi_0$.
	
	Now consider the literals $x$ in $L_0$ such that $x\in L_1$. In $\chi^{\bar u}_0$ we weaken all $x^{\bar u}$ literals to $x^{\bar u}\vee x$ and in $\chi_1$ we weaken all $x$ literals to $x^{\bar u}\vee x$. $\bar x$ can only appear in clauses in the intersection $\chi_0 \cap \chi_1$, because if $x$ in $L_0$ all clauses with $\bar x$ are in $\chi_0$, and likewise with $\chi_1$.  Furthermore,	$\bar x$ cannot be in $L_0$ otherwise there is a $u$-free path from $C$ to $\bar x$ and a $u$-free path from  $x$ to some $D\in \phi\setminus \chi_0$ meaning $D$ is actually in $\chi_0$.
	Likewise, $\bar x$ cannot be in $L_1$ otherwise there is a $u$-free path from $C$ to $ x$ and a $u$-free path from  $\bar x$ to some $D\in \phi\setminus \chi_0$ meaning $D$ is actually in $\chi_0$.
	
	Each clause $D\in \chi_0 \cap \chi_1$ has a unique entry literal $\bar x$ which cannot be in $L_0$ nor $L_1$. This means its sufficient just to have two copies, one originating from $\chi_0$ and one originating from $\chi_1$. Note that \dRes p-simulates Frege rules because it is p-equivalent to \dFregeRed, so after the weakening we can use distributivity on the two copies of the intersection clause, to replace $\bar x$ with  $\bar x^{\bar u} \wedge \bar x$. 
	
	The full replacement scheme is as follows
	
	$$x'= \begin{cases}
		x^{\bar u}  & \text{if } x \in L_0, x \notin L_1,\\
		x^{\bar u}\vee x  & \text{if } x \in L_0, x \in L_1,\\
		x^{\bar u}\wedge x  & \text{if } \bar x \in L_0, \bar x \in L_1,\\
		x & \text{otherwise.}
	\end{cases}$$
	
	$x'$ has dependency set $\D{x'} \subseteq \D{x}$, and replaces $x$ in all clauses, all extra literals introduced have been reduced.
	
	Therefore we have derived a set of clauses under a set of variables that have the same structure as the original set such that the new set of clauses is the correct substitution of variables of the original set of clauses, with the exception that the substitution of $C\vee u$ is replaced by a substitution of $C$.
	Despite \dRes not having a deletion rule, we can ignore all variables replaced by substitution and all original clauses and clauses used as in intermediate part of of this proof. This is practically deletion.
	The only difference is that $\D{x'}$ may be \textit{strictly} smaller than $\D{x}$, but this does not prevent any future steps. One can easily imagine \dRes with a dependency weakening rule. 
\end{proof}

For valid (\lplr) steps, note that a checker should, in theory, require no more computation than an simple EUR checker, because the resolution paths are shorter.
Every valid EUR step is also an \lplr step, so a QRAT checker (such as \qrattrim) need only check for \lplr and not EUR. 

The advantage of using \lplr over EUR is that when p-simulating expansion based solving \cite{HSB14,KS19}, one will not have to delete the definitions $\alpha \rightarrow (x= x^\alpha)$ in order to remove universal literals.



\section{P-Equivalence with \texorpdfstring{\dRes}{IndExtQURes}} \label{sec:pequiv}

When looking at proof complexity, \dRes\cite{ChewPeitl25} has been proven to be very powerful relative to the other proof systems for both QBF and DQBF. \dRes has been shown to p-simulate the majority of QBF and DQBF proof systems (see Fig~\ref{fig:QBFsimstruct} for QBF proof systems), despite \dRes only using a small number of simple rules. The p-simulation of many QBF and DQBF techniques presents an opportunity to improve certification for both logics.

\begin{figure}[ht]
	\centering{
		\begin{tikzpicture}[xscale=1.2]
			\node[cdclcalcn](DFred) at (0,10){\dRes};
			\node[strongcalcn](G) at (-2.5,9){\Gfull};
			\node[strongcalcn](G1) at (-2.5,8.25){\textsf{G}$_1$};
			\node[strongcalcn](G1t) at (-1.35,7.9125){\textsf{G}$^*_1$};
			\node[strongcalcn](QRATpu) at (0,9){\textsf{QRAT(\lplr)}};
			\node[strongcalcn](QRAT) at (0,8.25){\textsf{QRAT(EUR)}};
			\node[expcalcn](EFred) at (0, 7){\textsf{QRAT(UR)}};
			\node[strongcalcn](Fexp) at (3.25, 7.825){\textsf{Frege}+$\forall$\textsf{Exp}};
			\node[strongcalcn](idrc) at (4.25, 8.825){\idrc};
			\node[expcalcn](Fred) at (-2.25,6){\FregeRed};
			\node[expcalcn](Cred) at (-2.25,5){$\mathsf{AC}_0$\FregeRed};
			\node[expcalcn](LQM) at (-4.5,3){\lqrc$\setminus\{\Red\}$};
			\node[expcalcn](M) at (-4.5,5){\textsf{M-Res}};
			\node[expcalcn](QU) at (-2.25,4){\qurc};
			\node[expcalcn](Q) at (0,3){\qrc};
			\node[expcalcn](LQ) at (0, 4){\lqrc};
			\node[expcalcn](LQU) at (0, 5){\lqurc};
			\node[expcalcn](LQUP) at (0, 6){\lquprc};
			\node[expcalcn](QS) at (2, 4){\qsrc};
			\node[expcalcn](QD) at (2, 5){\qdrc};
			\node[expcalcn](LQD) at (2, 6){\lqdrc};
			\node[expcalcn](LQDp) at (2,7){\bm{$\mathsf{LDQ}(\mathcal{D}^{\forall\mathsf{pure}})$}};
			\node[expcalcn](QCDCL) at (-2.25, 3){\texttt{QCDCL}};
			\node[expcalcn](e) at (4,3){\ecalculus};
			\node[expcalcn](ir) at (4,4){\irc};
			\node[expcalcn](irm) at (4,5){\irmc};
			
			\draw[dashed, forestgreen](-6,7.5)--(5.1,7.5);
			\draw[->,forestgreen](-5.75,7.45)--(-5.75,7);
			\node(selabel) at (-3.75,7.25){\textcolor{forestgreen}{Known Strategy Extraction}};
			\draw(Q)--(QU)--(Cred)--(Fred);
			\draw(LQUP)--(EFred)--(QRAT);
			\draw(e)--(ir)--(irm);
			\draw(DFred)--(QRATpu)--(QRAT);
			\draw(QRAT)--(Fexp);
			\draw(idrc)--(4.25, 5.75) arc (0:90:-0.5\radius)--(4.75, 5.5) arc (270:180:-0.5\radius)--(5, 4.75) arc (0:-90:0.5\radius)--(4.25,4.5) arc (90:180: 0.5 \radius);
			\draw(Fexp)--(3.25, 3.5);
			
			\draw (3.25, 3.5) arc(0:90:-0.25\radius) (3.35,3.375) ;
			\draw (3.35,3.375)--(3.75,3.375);
			\draw (3.70,3.375) arc(-90:-180:-0.25\radius) (e) ;
			\draw(QD)--(LQD);
			\draw(QS)--(QD);
			\draw(1.75,3.5) arc(-90:0:0.5\radius) (QS);
			
			\draw(LQ)--(LQU)--(LQUP);

			\draw(0.25,3.5)--(3.75,3.5);
			\draw(3.75,3.5) arc(-90:0:0.5\radius) (ir);
			\draw(-0.25,3.5) arc(-90:0:0.5\radius) (ir);
			
			\draw(0.5,4.5)--(3.75,4.5);
			\draw(0.5,4.5) arc(90:180:0.5\radius) (LQ);
			\draw(3.75,4.5) arc(-90:0:0.5\radius) (irm);
			
			\draw(-2,6.5)--(-0.25,6.5);
			\draw(-2,6.5) arc(90:180:0.5\radius) (LQUP);
			\draw(-0.25,6.5) arc(-90:0:0.5\radius)-- (EFred);
			\draw(-2,3.5) arc(90:180:0.5\radius);
			
			\draw(Q)--(LQ);

			\draw(3.75,6.5)--(0.25,6.5);
			\draw(3.75,6.5) arc(90:0:0.5\radius) --(irm);
			\draw(1.75,6.5) arc(90:0:0.5\radius) --(LQD);
			\draw(0.25,6.5) arc(270:180:0.5\radius)-- (EFred);
			
			\draw(-4.25,6.5)--(-0.25,6.5);
			\draw(-4.25,6.5) arc(90:180:0.5\radius)--(M)--(LQM);
			\draw(-0.25,6.5) arc(-90:0:0.5\radius)-- (EFred);

			\draw(-0.25,3.5)--(-4.25,3.5);
			\draw(-4.5,3.25) arc(180:90:0.5\radius) (LQ);
			\draw(-2.25,9.5) arc(90:180:0.5\radius)--(G);
			\draw(-2.25,9.5)--(-0.25,9.5);
			\draw(-0.25,9.5) arc(-90:0:0.5\radius)--(DFred);
			\draw(0.25,9.5) arc(270:180:0.5\radius)-- (DFred);
			\draw(4, 9.5) arc(90:0:0.5\radius)--(idrc);
			\draw(0.25,9.5)--(4, 9.5);
			\draw(QU)--(LQU);
			
			\draw (0.625,4.5) arc(90:180:-0.5\radius)--(0.875,4.75)--(0.875,5.25) arc(180:90:0.5\radius)--(1.25,5.5)--(1.75,5.5) arc(90:180:-0.5\radius)--(LQD);
			
			\draw(LQD)--(LQDp);
			
			
			\draw(G1)--(G);
			\draw(EFred)--(G1t)--(G1);
			\draw(QRAT)--(G);

			\draw(0.25,6.5) arc(270:180:0.5\radius)-- (EFred);

			\draw(0.25,3.5)--(0.75,3.5);
			\draw(0.25,3.5) arc(90:180:0.5\radius) (Q);

		\end{tikzpicture}
	
	}
	\caption{The p-simulation structure of refutational QBF proof systems \cite{BJ12,BWJ14,BBM18,Bohm21,BBCP20,BCJ19,Chew25round,CH22,ChewS24,ChewPeitl25,HSB14,KS19,KBKF95,KP90,PeitlSS19a,rabe17,Slivovsky-sat14,Gelder11}. \label{fig:QBFsimstruct}
	}
\end{figure}

One way to do this is to adapt an existing certification format with new rules so that it p-simulates \dRes, therefore transitively p-simulating most of the techniques in QBF and DQBF. Our main idea is that we can combine \dqrat with the \Dpu prefix modification rule. Doing so gives a proof system that we will show is p-equivalent to \dRes. 
Recall the definition of \dqrat (Figure~\ref{fig:dqrat}), we introduce \Dpu as a replacement of the \Drrs rule:
$$\frac{\Pi \phi}{\Omega \phi} (\Dpu)$$

Where $\Pi$ and $\Omega$ are DQBF prefixes and $\phi$ is a CNF.
The condition on $\Omega$ is that it contains the same variables as $\Pi$, with a modification of the dependency sets with the restriction that $u\notin \mathrm{D}_{x}^\Omega$ only if $u\notin \D{x}$ or $(u,x) \notin \Dpu(\Pi\phi)$.

\begin{definition}
	(The refutational version of) \dqratpu is a proof system that allows proofs where each line is an S-form DQBF $\Pi \phi$. Refutation is shown in the same way as \dqrat. Each subsequent line follows from the previous by one of the seven rules: ATA, Del, UR, DQRAT$_\exists$, DQRAT$_\forall$, BPM and \Dpu. 
\end{definition}

First we show that \dqratpu p-simulates \dRes. This direction is the more important of the two if we want to use \dqratpu for certification.

\begin{theorem}
	\dqratpu p-simulates \dRes.
\end{theorem}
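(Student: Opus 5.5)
We simulate a \dRes refutation line by line. We keep one DQBF $\Pi\phi$ whose matrix holds every clause derived so far, and never delete anything. The plan is to translate each \dRes rule into $O(1)$ steps of \dqratpu, each polynomial in the size of the current formula.

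\textbf{Routine rules.} Axioms are already in the matrix. A resolvent $E\vee F$ of $E\vee\neg x$ and $F\vee x$ is added by ATA: under $\neg E\wedge\neg F$ the two premises become the units $\neg x$ and $x$, so unit propagation gives $\bot$. A \dRes reduction $C\vee u\to C$ requires that no existential in $C$ has $\var(u)$ in its dependency set and that $\bar u\notin C$. Since $\D{w}=\{w\}$ for universals $w$, this is exactly the side condition $\var(u)\notin\D{C}$ of UR. We first re-add $C\vee u$ by ATA, which is trivial since it is already present, and then apply UR. The \dRes prefix weakening that adds a fresh variable is an instance of BPM.

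\textbf{Independent extension (the main step).} Suppose \dRes adds $(\bar\alpha\vee v\vee y_1),(\bar\alpha\vee v\vee y_2),(\bar\alpha\vee\bar v\vee\bar y_1\vee\bar y_2)$ with $\D{v}=(\D{y_1}\cup\D{y_2})\setminus\D{\alpha}$. We do this in three stages, mirroring Example~\ref{ex:dqrat}.
\begin{enumerate}
\item Use BPM to introduce $v$ with the larger dependency set $\D{y_1}\cup\D{y_2}$.
\item Add $(\bar\alpha\vee v\vee y_1)$ and $(\bar\alpha\vee v\vee y_2)$ by DQRAT$_\exists$ on the literal $v$. This is vacuous, since no clause contains $\bar v$.
\item Add $(\bar\alpha\vee\bar v\vee\bar y_1\vee\bar y_2)$ by DQRAT$_\exists$ on $\bar v$. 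The only clauses containing $v$ are the two just added, say $D=\bar\alpha\vee v\vee y_i$. Since $\D{y_i}\subseteq\D{v}$, we have $y_i\in\Outer{v}$, so $\bar y_i$ appears among the assumed literals. The check becomes $\phi\wedge\alpha\wedge v\wedge y_1\wedge y_2\wedge v\wedge\bar y_i\vdash_1\bot$, which is immediate. The literals of $\bar\alpha$ in $D$ need no outer-variable analysis, because $\neg C$ already asserts $\alpha$ and so contains their complements.
\end{enumerate}

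We then apply the \Dpu rule to remove each $u\in\D{\alpha}$ from $\D{v}$; all other dependency sets stay unchanged, which the rule allows. We must show $(u,v)\notin\Dpu$. Let $l_u$ be the literal of $u$ in $\alpha$, so $\bar l_u$ occurs in all three extension clauses, and these are the only clauses containing $v$ or $\bar v$. A pure path from $l_u$ starts in $\phi_{l_u}$, none of which contain $v$, and may never enter a clause containing $\bar l_u$. Hence $l_u\npusim v$ and $l_u\npusim\bar v$. Both disjuncts of Definition~\ref{def:upure-non-dependency} need a path from $l_u$ to a literal of $v$, so both fail, and the dependency is removable. After this step the prefix agrees with that of \dRes, and the simulation continues.

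\textbf{Expected obstacle.} The subtle point is the interaction with the current formula. The \Dpu rule is evaluated on the whole matrix, so we must make sure no other clause mentions $v$; this holds because $v$ is fresh and the \Dpu step is applied immediately after the three additions. We also must check that the DQRAT$_\exists$ outer-variable condition $\lesssim_\Pi$ behaves as claimed for the $y_i$ when $v$ carries the enlarged dependency set. This is the step to verify most carefully. If the pre-order caused trouble, my fallback is to add the third clause by ATA-style reasoning after a RAT step on a variant, but I expect the argument above to go through. Every step is polynomial, giving the p-simulation.
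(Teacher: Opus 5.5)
Your proposal is correct and follows the paper's proof essentially step for step. The paper also uses BPM to introduce $v$ with dependency set $\D{y_1}\cup\D{y_2}$, so that $y_i\in\Outer{v}$ makes the third extension clause a valid DQRAT$_\exists$ addition; it then applies the \Dpu rule to drop each $u\in\D{\alpha}$, because every clause mentioning $v$ contains $\bar l_u$ and so no $l_u$-pure path can reach it. The one cosmetic difference is in the reduction step: the paper restores $C\vee u$ by ATA \emph{after} applying UR, which is valid since $C$ subsumes it, whereas your pre-added duplicate only survives UR if the matrix is treated as a multiset.
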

\begin{proof}
	
We consider an \dRes refutation $\pi$ of $\Pi \phi$ as a sequence of clauses $C_1 \dots C_n$ with $C_n=\bot$.
We will p-simulate $\pi$ by creating a \dqratpu derivation $L_1 \dots L_m$. Within that sequence there will be a subsequence $(L'_{f(i)})_{i=1}^{n}$ where $L'_{f(i)}= \Omega \psi$ and $\psi$ contains clauses $C_1 \dots C_i$ as well as any clauses from $\phi$ and $\Omega$ quantifies all variables appearing in $\psi$, with the same quantifiers and dependency sets as in the \dRes proof.

For the \textbf{(Ax)} rule we ensure that we keep all clauses from $\phi$ in $\psi$.
For \textbf{adding variables to the prefix} we can use BPM.  
For the \textbf{(Red)} rule if we want to use clause $C \vee u$ to get $C$ we use the UR rule in \dqrat. Technically we need to keep a copy of $C\vee u$ in $\psi$ which can be returned by using ATA.
For the \textbf{(Res)} rule we can use ATA to add any resolvent since if $D_1\vee \bar x$ and $D_2\vee x$ are in $\psi$ then $\psi \wedge \neg D_1 \wedge \neg D_2$ is a propositional contradiction, furthermore we can derive it via reverse unit propagation as the units of $\neg D_1$ simplify $D_1\vee \bar x$ to just $\bar x$ and the units of $\neg D_2$ simplify $D_2\vee x$ to just $x$, we then propagate to the empty clause. 

Suppose we add independent extension clauses $ (\bar \alpha \vee n \vee a), (\bar \alpha \vee n \vee b), (\bar \alpha \vee \bar n \vee \bar a \vee \bar b)$. $\D{n}= \D{a}\cup \D{b} \setminus \D{\alpha}$ using the \textbf{(IndExt)} rule.
We first choose to add the existential variable $n$ with $\D{n}= \D{a}\cup \D{b}$ using BPM. 
Using DQRAT$_\exists$ we can add the first two clauses $(\bar \alpha \vee n \vee a), (\bar \alpha \vee n \vee b)$, provided $n$ is a new variable. In order to add the final clause, we need the outer variables of $(\bar \alpha \vee n \vee a), (\bar \alpha \vee n \vee b)$ to each have some non $ n$ literal to be opposite of a literal in $(\bar \alpha \vee \bar n \vee \bar a \vee \bar b)$, this can only be $a$ and $b$. So in this case because we chose that $\D{n}= \D{a}\cup \D{b}$, this is sufficient to add the final clause  $(\bar \alpha \vee \bar n \vee \bar a \vee \bar b)$.

Finally we need to remove the dependencies of $n$ that are in $\alpha$. 
Suppose we have a universal variable $u$ and a literal $l_u$ is a conjunct in the assignment $\alpha$.
All paths from $l_u$ to $n$ or $\bar n $ pass through $\bar l_u$
as our definition clauses are the only clauses that contain $n$ and $\bar n$ at this point.
Thus $(u,n)\notin \Dpu$. 
Therefore we can drop $\var(u)$ from \D{n}. We can do this for each literal in $\alpha$. At the end $\D{n}= \D{a}\cup \D{b} \setminus \D{\alpha}$.
\end{proof}

The reverse is true, that \dRes p-simulates \dqratpu. For the original rules of \dqrat we already know how to do a p-simulation \cite{ChewPeitl25}. Only the \Dpu rule remains, however this ends up being similar enough to the p-simulation of the \Drrs rule. 

\newcounter{savedthm}
\setcounter{savedthm}{\value{theorem}}
\begin{theorem}\label{thm:revsim}
	\dRes p-simulates \dqratpu.
\end{theorem}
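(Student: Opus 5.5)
We simulate a \dqratpu refutation line by line in \dRes. Every rule other than the \Dpu prefix modification is already p-simulated in \dRes: ATA, Del, UR, DQRAT$_\exists$, DQRAT$_\forall$, BPM and the \Drrs rule are handled in \cite{ChewPeitl25}. Deletion is read as permanently ignoring a clause, as in the proof of Lemma~\ref{lem:dres_psim_lplr}. So it suffices to p-simulate one application of the \Dpu rule $\Pi\phi \Rightarrow \Omega\phi$. Since removing dependencies for different universals does not interfere (as noted in the paper), we treat one universal variable $u$ at a time. We must produce, in \dRes, a renamed copy $\phi'$ of $\phi$ over fresh variables $x'$ with $\D{x'} = \mathrm{D}^\Omega_{x}$; from then on the remaining proof is simulated on the renamed copy.

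\textbf{Step 1.} For each existential $x \in \mathcal{S}_u$ with $(u,x)\notin\Dpu$, apply Definition~\ref{def:upure-non-dependency}: at least one of the following fails. (1) $u\pusim x$ and $\bar u\pusim\bar x$. (2) $\bar u \pusim x$ and $u\pusim\bar x$.

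The idea is to exploit Local Pure Literal Reduction. We partition clauses according to pure-path reachability from $\phi_u$ and from $\phi_{\bar u}$, using $\pathcpu(u,\phi,\phi_u,\mathcal{S}_u)$ and $\pathcpu(\bar u,\phi,\phi_{\bar u},\mathcal{S}_u)$. We then introduce conditional copies via IndExt, namely $\bar u \rightarrow (x^{\bar u}\leftrightarrow x)$ and $u\rightarrow(x^{u}\leftrightarrow x)$. Each copy has $u$ removed from its dependency set: the condition $\alpha = l_u$ lets us drop $u$, exactly as in Example~\ref{ex:dres proof}.

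We then define $x'$ as in the replacement scheme of Lemma~\ref{lem:dres_psim_lplr}. If only $u$-side paths reach literal $x$ (say $\bar u\not\pusim x$), then in the clauses containing $x$ the relevant value is dictated by the $u$ half. So $x'$ is $x^{u}$, or a disjunction/conjunction of $x^u$ and $x^{\bar u}$ on the overlap regions, whichever gives $\D{x'} = \D{x}\setminus\{u\}$ while keeping the other dependencies of $\Omega$.

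\textbf{Step 2.} We derive every clause $D'$ (the substitution instance of $D\in\phi$) from $\phi$ and the definition clauses. This uses resolution with the definitions followed by reduction of the introduced $u$/$\bar u$ literals. The reductions are valid because, after substitution, no literal in the clause depends on $u$.

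Purity is what makes this work. A clause reached by a pure path from $\bar u$ never contains $u$ (only the start clauses do). Hence when we replace $x$ by its $\bar u$-copy, the only $u$-literals present are those introduced by the definitions, and these are reduced. Symmetric reasoning applies on the $u$ side. Where the two spheres intersect, we use the two-copies-plus-distributivity trick, via Frege simulation in \dFregeRed, from Lemma~\ref{lem:dres_psim_lplr}. Each such intersection clause has a unique entry literal that lies in neither sphere's literal set, so the same argument applies.

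\textbf{Main obstacle.} The hardest part is the case analysis on the intersections and the choice of $x'$ per literal polarity. Clauses may contain several literals whose variables fall in different categories, and the resulting substitution must be uniform per variable.

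We would first attempt to reduce the \Dpu step to a polynomial sequence of \lplr-style reductions on an expanded formula. Concretely, we would take the $u$/$\bar u$ expansion (copies $x^u$, $x^{\bar u}$ defined conditionally) and then invoke Lemma~\ref{lem:dres_psim_lplr} repeatedly on clauses of $\phi_u$ and $\phi_{\bar u}$. Where this fails, we would fall back on mimicking the \Drrs simulation from \cite{ChewPeitl25}. In that fallback, the only change is that, by the observation in the soundness proof of Theorem~\ref{thm:dpu-soundness}, paths reentering through a clause containing the opposite $u$-literal are harmless, because that clause is satisfied by the universal literal itself.

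The size is polynomial: there are at most two new variables per existential per universal, plus a constant number of derivation steps per clause occurrence.
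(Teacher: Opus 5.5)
Your proposal gathers the right ingredients: IndExt copies $x^{u},x^{\bar u}$ with $u$ dropped, pure-path spheres, two copies plus distributivity, and Lemma~\ref{lem:dres_psim_lplr}. But it never fixes the substitution $x'$, and never shows that the $u$-literals it introduces can be removed. You defer exactly this as the ``main obstacle'' and offer two alternative plans, yet that case analysis is the whole proof.

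\textbf{The missing case analysis.} Let $L^u=\pathlpu(u,\phi,\phi_u,\mathcal{S}_u)$ and $L^{\bar u}=\pathlpu(\bar u,\phi,\phi_{\bar u},\mathcal{S}_u)$. Up to swapping $u/\bar u$ and $x/\bar x$, the condition $(u,x)\notin\Dpu$ leaves only two situations.
\begin{itemize}
\item \emph{Some literal, say $\bar x$, lies in neither $L^u$ nor $L^{\bar u}$.} Take $x'=x^{\bar u}\vee x^{u}$.
Positive occurrences are rewritten using the resolvent $\bar x\vee x^{\bar u}\vee x^{u}$ of the two definitions over $u$, so no universal literal appears.
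A clause $C\ni\bar x$ contains neither $u$ nor $\bar u$. For every other $l\in C$, the literal $\bar l$ is also unreached from both sides; otherwise $C$ would be entered and $\bar x$ reached. So all $\mathcal{S}_u$-literals of $C$ are replaced by $u$-free formulas.
Hence the $u$ in $\bar x^{\bar u}\vee u$ (resp.\ the $\bar u$ in $\bar x^{u}\vee\bar u$) is removed by ordinary UR, and distributivity then yields $\bar x^{\bar u}\wedge\bar x^{u}=\neg x'$.
\item \emph{$x,\bar x\in L^u$ and $x,\bar x\notin L^{\bar u}$.} Take $x'=x^{\bar u}$. The introduced $u$ is removed by \lplr, simulated via Lemma~\ref{lem:dres_psim_lplr}. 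Its side condition holds because a $u$-avoiding path from the clause to a $\bar u$-clause would, reversed, put $x$ or $\bar x$ into $L^{\bar u}$.
\end{itemize}

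\textbf{Problems with the choices you do commit to.}
\begin{itemize}
\item When only the $u$ side reaches $x$, you set $x'=x^{u}$. With your own definitions this is the wrong copy. For example, if $u\vee x\vee F\in\phi_u$, resolving with the definition clause $\bar u\vee\bar x\vee x^{u}$ only gives a tautology. The usable copy is $x^{\bar u}$, defined under $u=0$, which is the only assignment under which a $\phi_u$-clause constrains $x$.
\item Step~2's justification, ``after substitution no literal in the clause depends on $u$'', is false, because variables with $(u,y)\in\Dpu$ keep their $u$-dependency.
Take $\forall u\exists x(u)\exists y(u)$ with $\phi=(\bar u\vee y)\wedge(u\vee\bar y)\wedge(x\vee y)\wedge(u\vee\bar x)$. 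Here $L^u=\{x,\bar x,y,\bar y\}$ and $L^{\bar u}=\{y\}$, so $(u,y)\in\Dpu$ (via $\bar u\pusim y$, $u\pusim\bar y$) but $(u,x)\notin\Dpu$.
Substituting $x^{\bar u}$ turns $x\vee y$ into $u\vee x^{\bar u}\vee y$, where UR is blocked by $y$. This is exactly why the paper needs \lplr and not UR.
\item Your fallback of invoking Lemma~\ref{lem:dres_psim_lplr} on the clauses of $\phi_u$ and $\phi_{\bar u}$ targets the wrong clauses. Those simply keep their $u$-literal. The \lplr steps are needed on clauses into which $u$ was introduced by resolving with a definition clause.
\end{itemize}
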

\begin{proof}
	We know \dRes p-simulates \dqrat already. What we have to show is that \dRes p-simulates the \Dpu rule. To do this we follow a similar proof to \dRes p-simulating the \Drrs rule, for each spurious dependency $(u,x)$ we replace $x$ with another variable $x'$ where $D_{x'}=D_x \setminus \{u\}$. 
	
	Given a DQBF with propositional CNF matrix $\phi$ and let $u$ be some universal variable in the prefix, and $x$ be an existential variable such that x depends on $u$, but only spuriously: $u\in \D{x}, (u,x)\notin \Dpu$.	We define $\chi_u$ to be the subset of $\phi$ where all clauses contain literal $u$ and $\chi_{\bar u}$ to be the subset of $\phi$ where all clauses contain literal $\bar u$.
	Define $L^u=\pathlpu(u,\psi, \chi_u, \mathcal{S})$, $L^{\bar u}=\pathlpu(\bar u,\psi, \chi_{\bar u}, \mathcal{S}_u)$, where $\mathcal{S}_u$ is the set of existential variables that contain $u$ in its dependency set. 
	
	If $(u,x)\notin \Dpu(\Pi \phi)$ then we have four cases (up to symmetry) of $x$'s membership in $L^u$ and $L^{\bar u}$:
	
	\begin{minipage}{0.95\textwidth}
		\begin{multicols}{2}
			\begin{enumerate}
				\item $x\notin L_{u}$, $\bar x\notin L_{u}$, $x\notin L_{\bar u}$ and $\bar x\notin L_{ \bar u}$
				\item $x\in L_{u}$, $\bar x\notin L_{u}$, $x\notin L_{\bar u}$ and $\bar x\notin L_{ \bar u}$
				\item $x\in L_{u}$, $\bar x\notin L_{u}$, $x\in L_{\bar u}$ and $\bar x\notin L_{ \bar u}$
				\item $x\in L_{u}$, $\bar x\in L_{u}$, $x\notin L_{\bar u}$ and $\bar x\notin L_{ \bar u}$
			\end{enumerate}
		\end{multicols}
	\end{minipage}
	\medskip
	
	For cases 1, 2 and 3, we can take advantage of the fact that $\bar x$ has no pure path to either $u$ or $\bar u$.
	In fact there is no path at all from $\bar x$ to either $u$ or $\bar u$, as a non-pure path could be minimised to a pure path of one of the polarities. 
	We replace $x$ with $x^{\bar u}\vee x^{ u} $. Variables $x^{\bar u}, x^{u}$ are defined by the conditional definitions $\bar u \rightarrow (x^{\bar u} \leftrightarrow x)$, $u \rightarrow (x^{u} \leftrightarrow x)$. 
	We have to replace each $x$ literal with $x^{\bar u}\vee x^{ u} $, this is quite straightforward as we can derive $\bar x \vee x^{\bar u}\vee x^{u} $ by resolving over $u$ in the definition clauses.
	Replacing $\bar x$ is a more difficult matter. For a clause $C$ we can replace $\bar x$ with $\bar x^{\bar u} \vee u$ but we would require a way to remove the $u$ literal.
	We can remove the $u$ once we have replaced all variables in $\mathcal{S}_u$ that have spurious dependencies on $u$. In this case there cannot be any $l\in C$ such that there is a resolution path from $\bar u$ to $\bar l$ , otherwise there would be a path from $\bar u$ to $\bar x$, nor can there be a path from  $\bar l$ to $u$ for the same reason. Hence $l$ is also a case 1, 2 or 3 literal. This means we can reduce the $u$ we introduce to clause $C$, eventually after $l$ is replaced by $l^{\bar u}\vee l^{u}$, for appropriate definitions of $l^{\bar u}$ and . Thus we can replace  $\bar x$ with $\bar x^{\bar u}$ and symmetrically we can do the same to create another copy where we replace $\bar x$ with $\bar x^{ u}$. 
	By using distributivity we get $\bar x$ replaced by $\bar x^{\bar u}\wedge \bar x^{ u} $, which fortunately is the negation of what we replaced $x$ with. Thus we have $x'= x^{\bar u}\vee x^{ u} $
	
	The remaining case 4, requires us to replace $x$ with $x^{\bar u}$. Initially via resolution we can replace $x$ with $x^{\bar u}\vee u $ in some clause $C$. In the case that $u$ already was in $C$ we do not need to remove it. Otherwise we argue there are no other literals $l$ in $C$ such that there is a pure path from $\bar l$ to $\bar u$, because  there would be a pure path from $\bar u$ to $x$. This means we can use local pure literal elimination to remove said $u$, because there is no $u$-free path from $C$ to $\bar u$. Local pure literal elimination is not a rule of \dRes, however we can p-simulate the steps (Lemma~\ref{lem:dres_psim_lplr}), to perform this change. The fact we have no deletion rule in \dRes is not a factor when simulating the global conditions of \lplr, as we still select the subcnf we wish to make a copy of.

\end{proof}
\section{Practical Implications}

We took existing code from the QBF solver \qute~\cite{PeitlSS19} that detected \Drrs and naturally extended it to \Dpu. 
Using this we developed in C++ the prototype DQRAT+\Dpu checker called \dqratchk and we increased the capabilities of \qute
to include \Dpu.%
\footnote{Available at \url{https://github.com/peitl/dqrat-check} and \url{https://github.com/fslivovsky/qute}.}

\subsection{DQRAT+\texorpdfstring{\Dpu}{Dpure}-check} \label{sec:prac_check}


\texttt{dqrat-check} accepts a \texttt{.dqdimacs} file as its first argument containing the DQBF or QBF, the second argument is a proof written in the \texttt{.dqrat} format. 
We designed the \texttt{dqrat} format to follow the \texttt{qrat} format as closely as possible.
The only deviation is in how extension variables are introduced: in \texttt{qrat} they must be existential, and are quantified automatically based on other literals in the clause.
In \texttt{dqrat}, the proof can explicitly declare any existential variable and its dependencies before used in a clause (if undeclared, we adopt the QRAT convention of taking all universal variables as dependencies).

No alphabetical symbols indicates clause addition using ATA or DQRAT$_\exists$.
The \texttt{a} indicates a new $\forall$ variable.
Unlike in \texttt{.dqdimacs}, \texttt{d} indicates clause deletion (for QRAT compatibility), which leaves \texttt{e} for introduction or modification of an existential variable. Each \texttt{e} line takes multiple integers, the first is the variable, and then subsequently are the $\forall$ variables to be added or removed from the dependency set. Negative integers indicate removal.
It is possible to make an $\exists$ variable depend another $\exists$, so it will inherit the entire dependency set.

Like in QRAT, a line starting with \texttt{u} indicates a UR or DQRAT$_\forall$ rule, in each case reducing the first literal in the clause, which must be universal.

\begin{example}\label{ex:dqratcheck}
	We can write the problem from Example~\ref{ex:dres proof} in the
	\textsc{DQDIMACS} format as follows:
	
\begin{lstlisting}
c variable names are
c u 1
c v 2
c x 3
c y 4
p cnf 4 6
a 1 2 0
d 3 1 0
d 4 2 0
1 3 4 0
-1 -2 3 4 0
-1 2 3 -4 0
1 -3 -4 0
1 3 4 0
-1 -2 -3 -4 0
-1 2 -3 4 0
\end{lstlisting}
	
	The $\dRes$ proof can be written as a $\dqratpu$ proof in the following way:
	
\begin{lstlisting}
e 5 1 0
5 -1 3 0
-5 -1 -3 0
e 5 -1 0
-1 -2 5 -4 0
u -1 -2 5 -4 0
-1 2 5 4 0
u -1 2 5 4 0
-1 -2 -5 4 0
u -1 -2 -5 4 0
-1 2 -5 -4 0
u -1 2 -5 -4 0
1 -2 -5 -3 0
u -2 1 -5 -3 0
1 2 -5 3 0
u 2 1 -5 3 0
1 -5 0
u 1 -5 0
1 2 -3 0
u 2 1 -3 0
1 -2 0 
u 1 -2 0
u -2 0
\end{lstlisting}
	
	Most of the lines correspond to clauses derived using clause rules.
	The line
	\texttt{e 5 1 0}
	introduces a new existential variable~$5$ with dependency set $\{1\}$.
	After adding the two definition clauses, the dependency on~$1$ can be
	removed using
	\texttt{e 5 -1 0}.
\end{example}


We evaluated \dqratchk on formulas from this paper, as well as on proofs obtained via \qute on QBFEval 2022 benchmarks.
To obtain DQRAT proofs out of \qute, we extended the \lqrc~checker \qrup%
\footnote{Available at \url{https://github.com/peitl/qrp2rup}.}
~\cite{PeitlSS18} to convert the \lqrc~proof produced by Qute to (D)QRAT.%
\footnote{The proof is de facto QRAT, but in DQRAT format with explicit dependency sets for extension variables.}
Proof checking (\qrup + \dqratchk) took 9\% longer than solving on average (geometric mean over all $43$ false instances that \qute could solve).
Notably, \dqratchk alone accounts for only 3\% of the overall proof checking effort; 97\% is proof extraction from the \qrp trace and conversion into DQRAT, both done by \qrup.

This QBF-based scenario does not cover the full scope of what DQRAT can do; it does not use \Dpu and DQRAT$_\forall$ (not needed for our translation from \lqrc to DQRAT).
We further tested \dqratchk on crafted formulas: the formula from Example~\ref{ex:dres proof}, the bridged ts-LQParity formulas from Section~\ref{sec:separations}, and other crafted formulas from QBF literature (Equality~\cite{BBH19} and QParity~\cite{BCJ15,BCJ19}).
We generated proofs for these formulas manually (without a solver), with the exception of bridged ts-LQParity, which we solved with \qute with \Dpu and modified the prefix in such a way that the extraction through \qrup (which only works for \lqrc, not \lqdpurc) goes through.
This is possible, since the result of applying \Dpu to bridged ts-LQParity is again a QBF, and so the \lqdpurc proof can be interpreted as a sound \lqrc proof of another QBF (this holds for any QBF with a single universal variable).
\dqratchk correctly verified all of these proofs; details and instruction for reproducing are in the supplementary material.

More experiments will be interesting to evaluate the performance of \dqratchk in different scenarios, but for them to be practical, it is first necessary to develop certifying DQBF solvers that output proofs, ideally taking advantage of all DQRAT rules.
\subsection{Qute + \texorpdfstring{\Dpu}{Dpure}} \label{sec:prac_solve}

While we have motivated this work on proof checking, we were interested in the solving capabilities of \Dpu. To this end we integrated \Dpu into the dependency learning solver \qute. As with \Drrs, it is unknown whether \qute can soundly use \Dpu for term learning (to prove truth of formulas), so we only turn on \Dpu for clause learning.

We found \Dpu detects no independence on the newest QBFEval 2022 benchmarks, and \qute performs better without the overhead of \Drrs or \Dpu.
On QBFEval 2020 benchmarks
\Dpu shows promise: compared to \Drrs, $7\times$ less time is spent on computing dependencies, $2\times$ more independent pairs are found, and $3\times$ more $\Dpu$ reductions are executed (geometric means), but this did not yield more solved instances or better PAR2 score.
For ts-LQParity formulas which separate \Drrs from \Dpu \qute+\Dpu performs well as expected.

\section{Conclusion}
We have shown that \dqratpu is p-equivalent to \dRes, and thus a number of useful p-simulations via transitivity. What we cannot show by transitivity is a p-simulation of the newly created \lqdpurc, because long-distance resolution steps do not have a sound meaning in DQBF. Nonetheless we conjecture such a p-simulation will exist and we point to the literature \cite{KHS17,Chew25round,PeitlSS18} that demonstrate that there are a variety of successful of formalisations and p-simulations of long-distance resolution.
We aim to further improve and explore the capabilities of our prototype proof checker \dqratchk.
\bibliography{bib2doi}

\appendix
\section{Appendix}

\subsection{Further Details on Outer Variables in DQBF}\label{apx:outer}

Here we expand on the description of outer variables from Section~\ref{sec:outer}.

\begin{theorem}
	Let $\Pi$ be a DQBF prefix. Let $x$ and $y$ be variables in $\Pi$. If $y\in \Outer{x}$ then $\Outer{y}\subseteq \Outer{x}$.
\end{theorem}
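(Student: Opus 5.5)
The plan is a direct case analysis on whether $x$ and $y$ are existential or universal, unfolding the definitions of $\Outer{\cdot}$ and $\Inner{\cdot}$. Throughout I read $\Inner{u}$ as ranging over the existential variables $z$ with $u\in\D{z}$; this is the reading under which the example in Figure~\ref{fig:outer} comes out as stated. I also use the convention that an empty intersection is $\var(\Pi)$, so if $\Inner{u}=\emptyset$ then $\Outer{u}=\var(\Pi)$ and the claim is trivial for $x=u$. Fix $y\in\Outer{x}$ and $w\in\Outer{y}$; the goal is to show $w\in\Outer{x}$.

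\textbf{Case 1: $x$ and $y$ existential.} Here $\D{w}\subseteq\D{y}\subseteq\D{x}$, so $w\in\Outer{x}$.

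\textbf{Case 2: $x$ existential, $y=v$ universal.} From $\{v\}=\D{v}\subseteq\D{x}$ we get $x\in\Inner{v}$. Either $w=v\in\Outer{x}$, or $\D{w}\subseteq\D{x}\setminus\{v\}\subseteq\D{x}$ by instantiating the intersection at $z=x$.

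\textbf{Case 3: $x=u$ universal, $y\neq u$ existential.} Now $\D{y}\subseteq\D{z}\setminus\{u\}$ for every $z\in\Inner{u}$. Since $\D{w}\subseteq\D{y}$, the same inclusion holds for $w$, so $w\in\Outer{u}$. (If $y=u$ there is nothing to show.)

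\textbf{Case 4: $x=u$ and $y=v$ both universal, $v\neq u$.} I expect this to be the only delicate case. From $v\in\Outer{u}$ we get $v\in\D{z}\setminus\{u\}$ for all $z\in\Inner{u}$, that is, $\Inner{u}\subseteq\Inner{v}$. Now take $w\in\Outer{v}$.
\begin{itemize}
\item If $w=v$, or $w=u$, then $w\in\Outer{u}$ directly.
\item Otherwise $\D{w}\subseteq\D{z}\setminus\{v\}$ for every $z\in\Inner{v}\supseteq\Inner{u}$. This gives $\D{w}\subseteq\D{z}$ for $z\in\Inner{u}$, and it remains to rule out $u\in\D{w}$.
\item If $w$ is universal and $w\neq u$, then $u\notin\D{w}=\{w\}$.
\item If $w$ is existential with $u\in\D{w}$, then $w\in\Inner{u}\subseteq\Inner{v}$, so $v\in\D{w}$. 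Taking $z=w$ gives $\D{w}\subseteq\D{w}\setminus\{v\}$, which is a contradiction.
\end{itemize}
Hence $\D{w}\subseteq\D{z}\setminus\{u\}$ for every $z\in\Inner{u}$, so $w\in\Outer{u}$.

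The main obstacle is exactly this last step: showing that the inclusion $\Inner{u}\subseteq\Inner{v}$ forbids elements of $\Outer{v}$ that depend on $u$. The whole proof hinges on the self-instantiation $z=w$ there.
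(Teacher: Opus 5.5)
Your proof is correct and takes the same route as the paper: a four-way case split on whether $x$ and $y$ are existential or universal, unfolding the definitions of $\Outer{\cdot}$ and $\Inner{\cdot}$. In the universal--universal case you are more explicit than the paper's kernel-based sketch. You rule out $u\in\D{w}$ for $w\in\Outer{v}$ via $\Inner{u}\subseteq\Inner{v}$ and the self-instantiation $z=w$, and your reading of $\Inner{u}$ as existential variables only is the one that matches the paper's running example.
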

\begin{proof}
	We will assume throughout the proof that $y\in\Outer{x}$
	
	First suppose $x$ is existential and $y$ also existential. Then $\D{y} \subseteq \D{x}$. If $z\in \Outer{y}$ then $\D{z}\subset \D{y}$ and thus $\D{z}\subset \D{x}$ meaning $z\in \Outer{x}$.
	
	Now suppose $x$ is existential and $y$ is universal, then $y \in \D{x}$. Now consider, the kernel,  $\Kernel{y}$ the set of universal variables in $\Outer{y}$. Each of these also appears in every dependency set $y$ is in including $\D{x}$, hence they also appear in $\Outer{x}$. Now suppose 	$z\in \Outer{y}$ and $z$ is existential, then $\D{z}$ only contains variables in $\Kernel{y}$ thus $\D{z}\subseteq\D{x}$ and $z\in \Outer{x}$.
	
	Next suppose $x$ is universal and $y$ is universal. We say $y$ is in $\Kernel{x}$, meaning $y$ is in all the dependency sets that $x$ is in. If $z\in \Outer{y}$ then this is defined based on all the dependency sets $y$ is in, which contain all the dependency sets of $x$. So if $z$ is in the kernel of $y$ it is in the kernel of $x$, and if $\D{x}$ contains only variables of $\Kernel{y}$ it contains only variables of $\Kernel{x}$. Thus, whether $z$ is universal or existential it is contained in $\Outer{x}$.
	
	Finally suppose $x$ is universal and $y$ is existential. Then $D_y$ only contains variables from $\Kernel{x}$. Let $z$ be such that $z\in \Outer{y}$. If $z$ is existential then $\D{z}\subseteq \D{y}$ which contains only variables from $\Kernel{x}$. Neither contain $x$ so $z\in \Outer{x}$. If $z$ is universal then $z$ is in the dependency set of $y$, and thus in the kernel of $x$.

\end{proof}

\begin{corollary}\label{cor:pr_ord}
	For a DQBF $\Pi \phi$, $ \lesssim_\Pi$ is a pre-order.
\end{corollary}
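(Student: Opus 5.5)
The plan is to verify the two defining properties of a pre-order, reflexivity and transitivity, directly from the definition of $\lesssim_\Pi$: $x \lesssim_\Pi y$ means $x\in\Outer{y}$. The only nontrivial input is the theorem just proved, which says that $y\in\Outer{x}$ implies $\Outer{y}\subseteq\Outer{x}$.

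First I would settle reflexivity, i.e.\ that $x\in\Outer{x}$ for every variable $x$ of $\Pi$. I would split into two cases by the quantifier of $x$.
\begin{itemize}
\item If $x$ is existential, then $\Outer{x}=\{y \mid \D{y}\subseteq\D{x}\}$. This set contains $x$ because $\D{x}\subseteq\D{x}$.
\item If $x=u$ is universal, then $u\in\Outer{u}$ holds explicitly, since the definition of $\Outer{u}$ includes $\{u\}$ in the union.
\end{itemize}

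Next I would record that the two formulations in the definition agree: $x\in\Outer{y}$ if and only if $\Outer{x}\subseteq\Outer{y}$.
\begin{itemize}
\item The forward direction is exactly the preceding theorem, applied with the roles of its variables renamed.
\item The backward direction follows from reflexivity: $x\in\Outer{x}\subseteq\Outer{y}$.
\end{itemize}

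Transitivity is then immediate. Suppose $x\lesssim_\Pi y$ and $y\lesssim_\Pi z$. Then $\Outer{x}\subseteq\Outer{y}\subseteq\Outer{z}$, so $\Outer{x}\subseteq\Outer{z}$, which is $x\lesssim_\Pi z$. Equivalently, $x\in\Outer{y}\subseteq\Outer{z}$.

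I do not expect any genuine obstacle, because all the case analysis over existential and universal variables was already carried out in the preceding theorem. The one point needing care is the universal case of reflexivity. The intersection term in $\Outer{u}$ excludes variables whose dependency set contains $u$, and this includes $u$ itself, since $\D{u}=\{u\}$. So $u$'s membership in $\Outer{u}$ must come from the explicit $\{u\}\cup$ in the definition and not from the intersection. I would also remark that antisymmetry is not claimed and in general fails, since distinct variables with equal outer sets exist, as in Figure~\ref{fig:outer}. Hence we obtain a pre-order and not a partial order.
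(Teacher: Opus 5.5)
Your argument is correct and is essentially the paper's: the corollary is stated there as an immediate consequence of the preceding theorem, which gives transitivity, while reflexivity comes straight from the definitions of $\Outer{x}$, exactly as you check it. One small correction to your closing aside: all eight outer sets in Figure~\ref{fig:outer} are distinct, so that figure does not witness the failure of antisymmetry; use instead two existentials with equal dependency sets, e.g.\ $n$ and $x$ in Example~\ref{ex:dqrat}, which both have outer set $\{u,x,n\}$.
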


\subsection{Practical Details} \label{apx:prac}

We evaluated \qute with \Dpu on the formulas from Appendix~\ref{sec:separations}.
Even with \Drrs, \qute's running time scales exponentially, while with \Dpu the formulas are solved instantly (Figure~\ref{fig:qute-bridged-ts-lqparity}).
This is in line with proof complexity; we note that such clean mirroring of proof complexity in solver performance is far from given: previous work on QCDCL proof complexity found this business to be tricky with \qute unable to solve theoretically easy formulas quickly~\cite{BohmPB24a,BohmPB24}.
We also evaluated \qute on the PCNF track of QBFEval 2022, but saw no improvement over \Drrs or vanilla \qute.


\subsubsection{Details of the Solver \texorpdfstring{\qute}{Qute }}\label{apx:qute-depscheme-impl}

One may think that \Dpu not only finds more independence than \Drrs, but it also makes calculations faster, the latter because resolution paths that would need to be explored for \Drrs might be aborted early due to universal impurity.
This is not quite as simple.

For both $\mathcal{D} \in \{ \Drrs, \Dpu \}$, one can compute all $x$ with $(u,x) \in \mathcal{D}$ in linear time, and so all dependent pairs in overall quadratic time.
For \Drrs it is additionally possible to compute all dependencies \emph{of} some existential variable $x$ in quasilinear time with a Dijkstra-style algorithm~\cite{PeitlSS19b}.
The latter seems hard for \Dpu as resolution paths from a fixed existential literal to many universal targets may be polluted with different subsets of universal literals, leading to an exponential blow-up.
This affects the implementation of \Drrs and \Dpu in \qute.

\qute uses QCDCL (quantified conflict-driven clause learning) with \emph{dependency learning} to learn dependencies between variables dynamically.
In dependency learning, the solver starts branching and propagating as if there were no dependencies (equivalently, implicitly reducing all universal literals).
Only if this leads to a \emph{dependency conflict}, which is a resolution step in \lqrc that should have been valid but is not because of forbidden $v$ literals from Figure~\ref{fig:ldqdres}, does the solver learn missing dependencies to prevent the conflict from taking place again.
At this point a dependency scheme $\mathcal{D}$ can be inserted: if all blocking literals $v$ are found independent, the resolution step can soundly be carried out in \lqDrc.
In order to avoid a quadratic blow-up from computing upfront and storing all \Drrs dependencies, \qute only computes the required dependencies on demand during dependency conflicts (and then stores them forever).
Because multiple universal variables $v$ may be blocking in a dependency conflict, \qute computes all dependencies of the pivot variable $x$.
As shown in~\cite{PeitlSS19b}, all \Drrs dependencies of an existential variable can be found in quasilinear time with a Dijkstra-style algorithm.
The same, however, does not seem possible for \Dpu: in the search for resolution paths, any currently explored path can be impure for any subset of universal literals, and thus the actual number of paths to explore is in general exponential.
For this reason, our implementation of \Dpu computes all existential variables that depend on a given universal variable $u$, which can be done in linear time with depth or breadth-first search.
The price we pay (compared to \qute's \Drrs implementation) is that in a dependency conflict we need to compute dependencies on every blocker $v$.

\subsubsection{Experimental details}
\label{apx:experiments}

All timing is on a cluster of machines with 2x 24-core AMD EPYC 7402@2.80GHz CPUs and 1TB RAM per node, running Ubuntu 18.04.6 LTS on GNU/Linux 4.15.0-213-generic.

We evaluated \qute with \Dpu on the formulas from Appendix~\ref{sec:separations}.
Even with \Drrs, \qute's running time scales exponentially, while with \Dpu the formulas are solved instantly (Figure~\ref{fig:qute-bridged-ts-lqparity}).
This is in line with proof complexity; we note that such clean mirroring of proof complexity in solver performance is far from given: previous work on QCDCL proof complexity found this business to be tricky with \qute unable to solve theoretically easy formulas quickly~\cite{BohmPB24a,BohmPB24}.


\begin{figure}
\begin{tikzpicture}

\definecolor{darkgray176}{RGB}{176,176,176}
\definecolor{darkorange25512714}{RGB}{255,127,14}
\definecolor{forestgreen4416044}{RGB}{44,160,44}
\definecolor{lightgray204}{RGB}{204,204,204}
\definecolor{steelblue31119180}{RGB}{31,119,180}

\begin{axis}[
height=3cm,
legend cell align={left},
legend columns=3,
legend style={
  fill opacity=0.8,
  draw opacity=1,
  text opacity=1,
  at={(0.03,0.97)},
  anchor=north west,
  draw=lightgray204
},
tick align=outside,
tick pos=left,
unbounded coords=jump,
width=10cm,
x grid style={darkgray176},
xmin=-0.5, xmax=10.5,
xtick style={color=black},
xtick={0,1,2,3,4,5,6,7,8,9,10},
xtick={0,1,2,3,4,5,6,7,8,9,10},
xtick={0,1,2,3,4,5,6,7,8,9,10},
xtick={0,1,2,3,4,5,6,7,8,9,10},
xticklabels={10,11,12,13,14,15,16,17,18,19,20},
xticklabels={10,11,12,13,14,15,16,17,18,19,20},
xticklabels={10,11,12,13,14,15,16,17,18,19,20},
xticklabels={10,11,12,13,14,15,16,17,18,19,20},
y grid style={darkgray176},
ylabel={Time (s)},
ymin=-100, ymax=700,
ytick style={color=black}
]
\addplot [semithick, steelblue31119180, mark=+, mark size=3, mark options={solid}]
table {%
0 0.1063
1 0.7206
2 1.347
3 3.138
4 8.807
5 23.28
6 58.39
7 201.6
8 529.4
9 nan
10 nan
};
\addlegendentry{\Dtrv}
\addplot [semithick, darkorange25512714, mark=x, mark size=3, mark options={solid}]
table {%
0 0.1075
1 0.7484
2 1.402
3 3.189
4 9.913
5 22.58
6 59.41
7 206.1
8 532.1
9 nan
10 nan
};
\addlegendentry{\Drrs}
\addplot [semithick, forestgreen4416044, mark=*, mark size=3, mark options={solid}]
table {%
0 0.00059
1 0.000679
2 0.000672
3 0.000938
4 0.001449
5 0.001032
6 0.001425
7 0.001618
8 0.001751
9 0.001986
10 0.002191
};
\addlegendentry{\Dpu}
\end{axis}

\end{tikzpicture}
	\caption{
		Vanilla \qute (\Dtrv) vs \qute with \Drrs and \Dpu on Bridged ts-\lqparity.
		The $x$-axis gives $n$.
		\Dtrv and \Drrs timed out at 10 minutes for $n \geq 19$.
	}
	\label{fig:qute-bridged-ts-lqparity}
\end{figure}

We also evaluated \qute on the PCNF tracks of QBFEvals 2020 and 2022.
On QBFEVal 2022, there is unfortunately no independence to be detected even with \Dpu, so the experiment makes little sense.
In 15 minutes and out of 434 instances of the PCNF track of QBFEval 2022\footnote{\url{https://www.qbflib.org/qbfeval2022_results.php}~\cite{QBFEval2006,PulinaS19}}, vanilla \qute solved 74, with \Drrs 72, and with \Dpu 70 instances, in line with the expectation that adding dependency schemes makes the solver worse because the extra effort for computing independence is fruitless (the set of instances solved by \Drrs is a susbet of those solved by vanilla, and those solved by \Dpu are a subset of those solved by \Drrs).
On QBFEval 2020 the outcome is as follows.
In 900 seconds (and out of 521 instances in total) vanilla \qute solved 103 formulas (76 UNSAT) with PAR2%
\footnote{Penalized average runtime, i.e. average runtime where unsolved instances count for $900\times 2$ seconds.}
score $1462.89$, \qute+\Drrs solved 92 formulas (65 UNSAT) with PAR2 score $1511.97$, and \qute+\Dpu solved 90 formuals (62 UNSAT) with PAR2 score $1519.97$.
There are two instances solved by \qute+\Dpu and not solved by vanilla \qute:
\begin{verbatim}
	s38584_PR_9_5.qdimacs
	trivial_query71_1344.qdimacs
\end{verbatim}
Five instances solved by \qute+\Dpu and not \qute+\Drrs:
\begin{verbatim}
	biu.mv.xl_ao.bb-b003-p020-IPF03-c03.blif-biu.inv.prop.bb-bmc.conf07.01X-QBF\
	.BB1-Zi.BB2-Zi.BB3-Zi.with-IOC.unfold-008.qdimacs
	biu.mv.xl_ao.bb-b003-p020-MIF02-c01.blif-biu.inv.prop.bb-bmc.conf06.01X-QBF\
	.BB1-01X.BB2-Zi.BB3-Zi.with-IOC.unfold-010.qdimacs
	biu.mv.xl_ao.bb-b003-p020-MIF02-c01.blif-biu.inv.prop.bb-bmc.conf07.01X-QBF\
	.BB1-Zi.BB2-Zi.BB3-Zi.with-IOC.unfold-009.qdimacs
	biu.mv.xl_ao.bb-b003-p020-MIF04-c05.blif-biu.inv.prop.bb-bmc.conf03.01X-QBF\
	.BB1-Zi.BB2-Zi.BB3-01X.with-IOC.unfold-010.qdimacs
	s38584_PR_9_5.qdimacs
\end{verbatim}
Seven instances solved by \qute+\Drrs and not by \qute+\Dpu:
\begin{verbatim}
	fpu-10Xh-correct04-uniform-depth-18.qdimacs
	incrementer-enc07-nonuniform-depth-25.qdimacs
	incrementer-enc07-uniform-depth-25.qdimacs
	szymanski-20-s.qdimacs
	szymanski-24-s.qdimacs
	tlc05-uniform-depth-70.qdimacs
	tlc05-uniform-depth-80.qdimacs
\end{verbatim}
Twelve instances solved by vanilla \qute but not with \Drrs:
\begin{verbatim}
	axquery_query71_1344.qdimacs
	biu.mv.xl_ao.bb-b003-p020-IPF02-c05.blif-biu.inv.prop.bb-bmc.conf06.01X-QBF\
	.BB1-01X.BB2-Zi.BB3-Zi.with-IOC.unfold-009.qdimacs
	biu.mv.xl_ao.bb-b003-p020-IPF02-c05.blif-biu.inv.prop.bb-bmc.with-IOC.unfold-010.qdimacs
	biu.mv.xl_ao.bb-b003-p020-IPF03-c03.blif-biu.inv.prop.bb-bmc.conf07.01X-QBF\
	.BB1-Zi.BB2-Zi.BB3-Zi.with-IOC.unfold-008.qdimacs
	biu.mv.xl_ao.bb-b003-p020-MIF02-c01.blif-biu.inv.prop.bb-bmc.conf06.01X-QBF\
	.BB1-01X.BB2-Zi.BB3-Zi.with-IOC.unfold-010.qdimacs
	biu.mv.xl_ao.bb-b003-p020-MIF02-c01.blif-biu.inv.prop.bb-bmc.conf07.01X-QBF\
	.BB1-Zi.BB2-Zi.BB3-Zi.with-IOC.unfold-009.qdimacs
	biu.mv.xl_ao.bb-b003-p020-MIF04-c05.blif-biu.inv.prop.bb-bmc.conf03.01X-QBF\
	.BB1-Zi.BB2-Zi.BB3-01X.with-IOC.unfold-010.qdimacs
	filesys_smbmrx_cvsndrcv.c.qdimacs
	fpu-01Xh-error02-uniform-depth-24.qdimacs
	fpu-10Xh-error01-uniform-depth-20.qdimacs
	fpu-10Xh-error01-uniform-depth-25.qdimacs
	tlc05-uniform-depth-85.qdimacs
\end{verbatim}
And fifteen instances solved by vanilla \qute but not with \Dpu:
\begin{verbatim}
	axquery_query71_1344.qdimacs
	biu.mv.xl_ao.bb-b003-p020-IPF02-c05.blif-biu.inv.prop.bb-bmc.conf06.01X-QBF\
	.BB1-01X.BB2-Zi.BB3-Zi.with-IOC.unfold-009.qdimacs
	biu.mv.xl_ao.bb-b003-p020-IPF02-c05.blif-biu.inv.prop.bb-bmc.with-IOC.unfold-010.qdimacs
	filesys_smbmrx_cvsndrcv.c.qdimacs
	fpu-01Xh-error02-uniform-depth-24.qdimacs
	fpu-10Xh-correct04-uniform-depth-18.qdimacs
	fpu-10Xh-error01-uniform-depth-20.qdimacs
	fpu-10Xh-error01-uniform-depth-25.qdimacs
	incrementer-enc07-nonuniform-depth-25.qdimacs
	incrementer-enc07-uniform-depth-25.qdimacs
	szymanski-20-s.qdimacs
	szymanski-24-s.qdimacs
	tlc05-uniform-depth-70.qdimacs
	tlc05-uniform-depth-80.qdimacs
	tlc05-uniform-depth-85.qdimacs
\end{verbatim}

Interestingly, on many of the instances solved by vanilla \qute but not by \qute with \Drrs or \Dpu, the computation of the dependency scheme tends to take up a large fraction, sometimes all, of the time.
This is more pronounced for \Drrs than for \Dpu: on instances solved by vanilla but not \Drrs the dependency scheme computation almost always takes up more than 50\% of the time.
For \Dpu there are some instances where it is fast, and some where it takes all the time.
This shows that \qute's strategy of computing only parts of the dependency scheme on demand is correct in principle, and in fact the solver should probably be even more strict and limit dependency computation attempts when they have consumed too large a fraction of time so far.

\Dpu at least shows the promise in that it detects more independence faster compared to \Drrs, with the caveat that the implementation in \qute does not always compute the dependencies of the same set of variables (\qute does not compute the whole dependency scheme upfront, but instead queries only parts of it dynamically to save effort; which parts will be queried depends on the search path of the solver and is highly unpredictable).

More experiments will be necessary to determine whether there are practical instances on which \Dpu (or even \Drrs for that matter) provides a significant boost.

\end{document}